\newtheorem{theorem}{Theorem}[section]
\newtheorem{conjec}{Conjecture}[section]
\newtheorem{lemma}{Lemma}[section]
\begin{document}
%
\title{Hybrid Beamforming with Selection for Multi-user Massive MIMO Systems \thanks{Part of this work was supported by the National Science Foundation.}}

\author{Vishnu~V.~Ratnam,~\IEEEmembership{Student~Member,~IEEE,}
\and Andreas~F.~Molisch,~\IEEEmembership{Fellow,~IEEE,} 
\and Ozgun~Y.~Bursalioglu,~\IEEEmembership{Member,~IEEE,}
and Haralabos~C.~Papadopoulos,~\IEEEmembership{Member,~IEEE} 
\thanks{V. V. Ratnam and A. F. Molisch are with the Ming Hsieh Department of Electrical Engineering, University of Southern California, Los Angeles, CA, 90089 USA (e-mail: \{ratnam, molisch\}@usc.edu)}
\thanks{O. Y. Bursalioglu and H. C. Papadopoulos are with Docomo Innovations, Palo Alto, CA, 94304 USA (e-mail: \{obursalioglu, hpapadopoulos\}@docomoinnovations.com)}
}

\IEEEtitleabstractindextext{%
\begin{abstract}
This work studies a variant of hybrid beamforming, namely, hybrid beamforming with selection (HBwS), as an attractive solution to reduce the hardware cost of multi-user Massive Multiple-Input-Multiple-Output systems, while retaining good performance. Unlike conventional hybrid beamforming, in a transceiver with HBwS, the antenna array is fed by an analog beamforming matrix with $\bar{L}$ input ports, where $\bar{L}$ is larger than the number of up/down-conversion chains $\bar{K}$. A bank of switches connects the instantaneously best $\bar{K}$ out of the $\bar{L}$ input ports to the up/down-conversion chains. The analog beamformer is designed based on average channel statistics and therefore needs to be updated only infrequently, while the switches operate based on instantaneous channel knowledge. HBwS allows use of simpler hardware in the beamformer that only need to adjust to the statistics, while also enabling the effective analog beams to adapt to the instantaneous channel variations via switching. This provides better user separability, beamforming gain, and/or simpler hardware than some conventional hybrid schemes. In this work, a novel design for the analog beamformer is derived and approaches to reduce the hardware and computational cost of a multi-user HBwS system are explored. 
In addition, we study how $\bar{L}$, the switch bank architecture, the number of users and the channel estimation overhead impact system performance.
\end{abstract}

\begin{IEEEkeywords}
Beam selection, Antenna selection, mm-wave, Massive MIMO, Beam-space MIMO, Hybrid precoding with selection, Hybrid precoding, Hybrid preprocessing, Grassmannian manifold. 
\end{IEEEkeywords}}

\maketitle

\IEEEdisplaynontitleabstractindextext

%
\IEEEpeerreviewmaketitle

\section{Introduction} \label{introduction}
Massive Multiple-input-multiple-output (MIMO) systems, enabled by using antenna arrays with many elements at the transmitter and/or receiver, are viewed as a key enabler towards meeting the rising throughput demands in cellular systems \cite{Marzetta, Keytech_5G}. Such massive MIMO systems, while beneficial at micro-wave frequencies, are essential at millimeter (mm) wave frequency bands ($>30$ GHz) to compensate for the channel attenuation. 
It is predicted that future cellular systems will be equipped with antenna arrays having $100-1000$ antenna elements, at least at the base-station (BS) end. Although producing affordable large antenna arrays on a small footprint is already viable, the corresponding up/down-conversion chains, which include Analog-to-Digital Converters/ Digital-to-Analog Converters, filters, and mixers, are expensive and power hungry. 
This has motivated research on hybrid beamforming, which takes advantage of the directional nature of wireless channels \cite{Pedersen2000, molisch2014propagation, Akdeniz, Haneda2016}, to feed a large antenna array to fewer up/down-conversion chains. In this work, we focus primarily on frequency flat fading channels with such hybrid architectures at the BS. 
\subsection{Hybrid Beamforming} \label{subsec_hybrid_preprocessing}
In a BS with hybrid beamforming (also known as hybrid precoding/ preprocessing), an analog RF beamforming matrix, built from analog hardware like phase-shifters, is used to connect $N$ antenna elements to $\bar{K}$ up/down-conversion chains, where $\bar{K} < N$. This beamforming matrix exploits channel state information to form beams into the \emph{dominant} angular directions of each user's channel, thereby, utilizing the transmit power more effectively and providing some multi-user separation with fewer up/down-conversion chains. Since the analog hardware components are relatively cheap and consume less power than the up/down-converters and mixed signal components, this design leads to significant savings as compared to a full complexity transceiver, i.e., with $N$ up/down-conversion chains. 
The idea was first proposed in \cite{Molisch_VarPhaseShift, Sudarshan} and was further investigated for centimeter (cm) waves in \cite{Karamalis2006, venkateswaran2010analog} and for the mm-wave band in \cite{Ayach2012, Alkhateeb2013, Ayach_iCSI, Alkhateeb2014, Adhikary_JSDM, Alkhateeb2015}. Since then, numerous publications have followed suit with different architectures for the analog beamforming matrix under varying system models, and 3GPP is working on including them in the upcoming 5G cellular standard \cite{Docomo_3gpp_meeting}. An overview of the recent results is available in \cite{Alkhateeb2014_mag, Heath2016, Molisch_HP_mag}. 

The hybrid beamforming schemes can be broadly classified into two architectures based on the channel state information utilized to design the analog beamformer \cite{Molisch_HP_mag}. In one architecture called hybrid beamforming based on instantaneous channel state information (HBiCSI) \cite{Molisch_VarPhaseShift, Karamalis2006, Ayach_iCSI, Xu_iCSI, Alkhateeb2015, Yu2016, Sohrabi2016}, the beamforming matrix, and therefore the analog precoding beams, adapt to the instantaneous channel state information (iCSI), as illustrated in Fig.~\ref{Fig_illustrate_HPiCSI} for a single user case. Though this solution promises good performance\footnote{By system performance we refer to the capacity excluding the channel estimation overhead. \label{note1}}, iCSI across all the $N$ transmit antennas may be required leading to a large channel estimation overhead. While several approaches to reduce the estimation overhead have been proposed \cite{Adhikary_JSDM, Alkhateeb2014, Rial2016, Ratnam_ICC2018}, HBiCSI additionally imposes strict performance specification requirements on the analog hardware. This is because their parameters may have to be updated several times within each coherence time interval, which can be very short especially for mm-wave channels. 
In the other architecture called hybrid beamforming based on average channel state information (HBaCSI) \cite{Sudarshan, venkateswaran2010analog, Alkhateeb2013, Adhikary_JSDM, Liu2014, Park2017, Li2017}, the beamforming matrix adapts to the average channel state information (aCSI) i.e., the transmit/receive spatial correlation matrices, as illustrated in Fig.~\ref{Fig_illustrate_HPaCSI}. Since aCSI changes slowly, it can be acquired with a low overhead and the analog hardware parameters need to be updated infrequently. Additionally, iCSI is only needed in the channel sub-space spanned by the analog precoding beams, leading to a significant reduction in the channel estimation overhead. 
Despite these benefits, the $\text{performance}^{1}$ may be worse than HBiCSI since the analog beams can only span a fixed channel subspace of dimension $\bar{K}$, which does not adapt to iCSI \cite{Adhikary_JSDM}. The performance gap may be especially large if the dimension of the dominant channel subspace\footnote{It represents the channel subspace at the transmitter along which a significant portion of the channel power is concentrated. Such a notion is quite common for massive MIMO and is used in several proposed schemes like Joint Space Division Multiplexing \cite{Adhikary_JSDM}. \label{note_dominant}} is much larger than $\bar{K}$, which is possible both at microwave \cite{TR36_873} and mm-wave \cite{Akdeniz} frequencies. 
While there has been a significant amount of work on HBiCSI and HBaCSI under different system models and constraints, there is limited work on bridging the performance gap between the two. 
\begin{figure}[!h]
\centering
\subfloat[HBiCSI]{\label{Fig_illustrate_HPiCSI}\includegraphics[width= 0.28\textwidth]{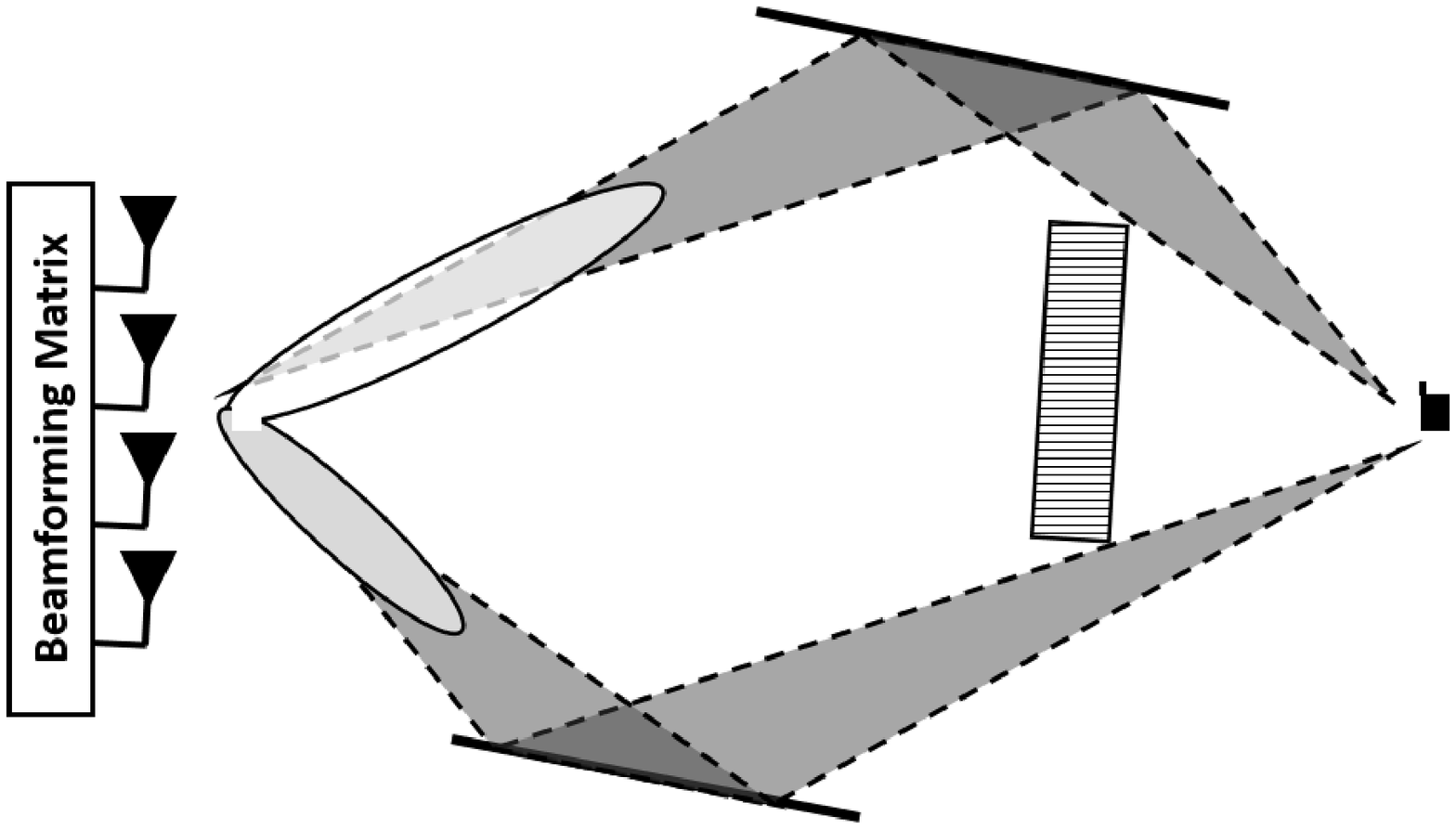}} \hspace{5mm}
\subfloat[HBaCSI]{\label{Fig_illustrate_HPaCSI}\includegraphics[width= 0.28\textwidth]{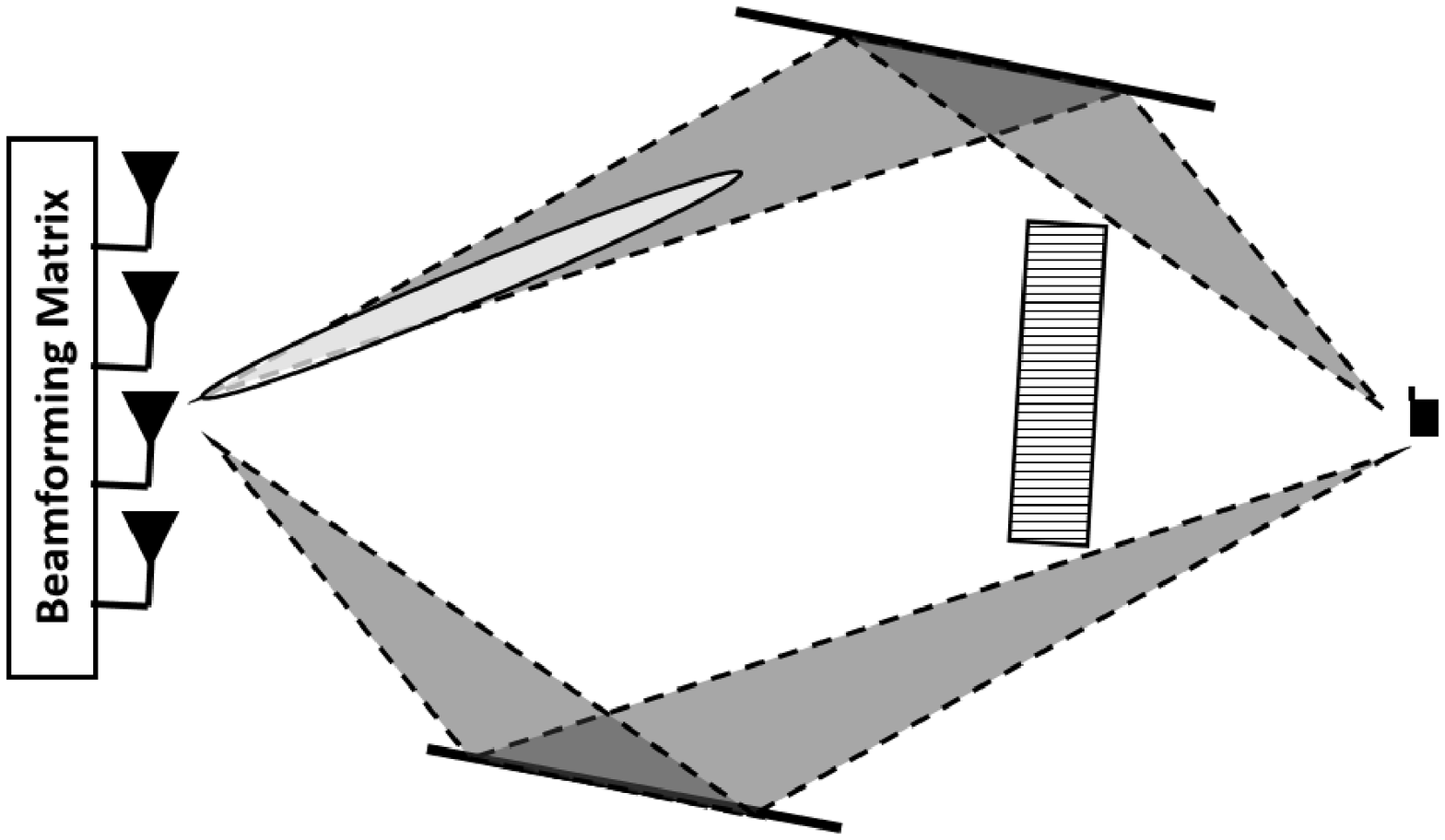}} \hspace{5mm}
\subfloat[HBwS]{\label{Fig_illustrate_HPwS}\includegraphics[width= 0.33\textwidth]{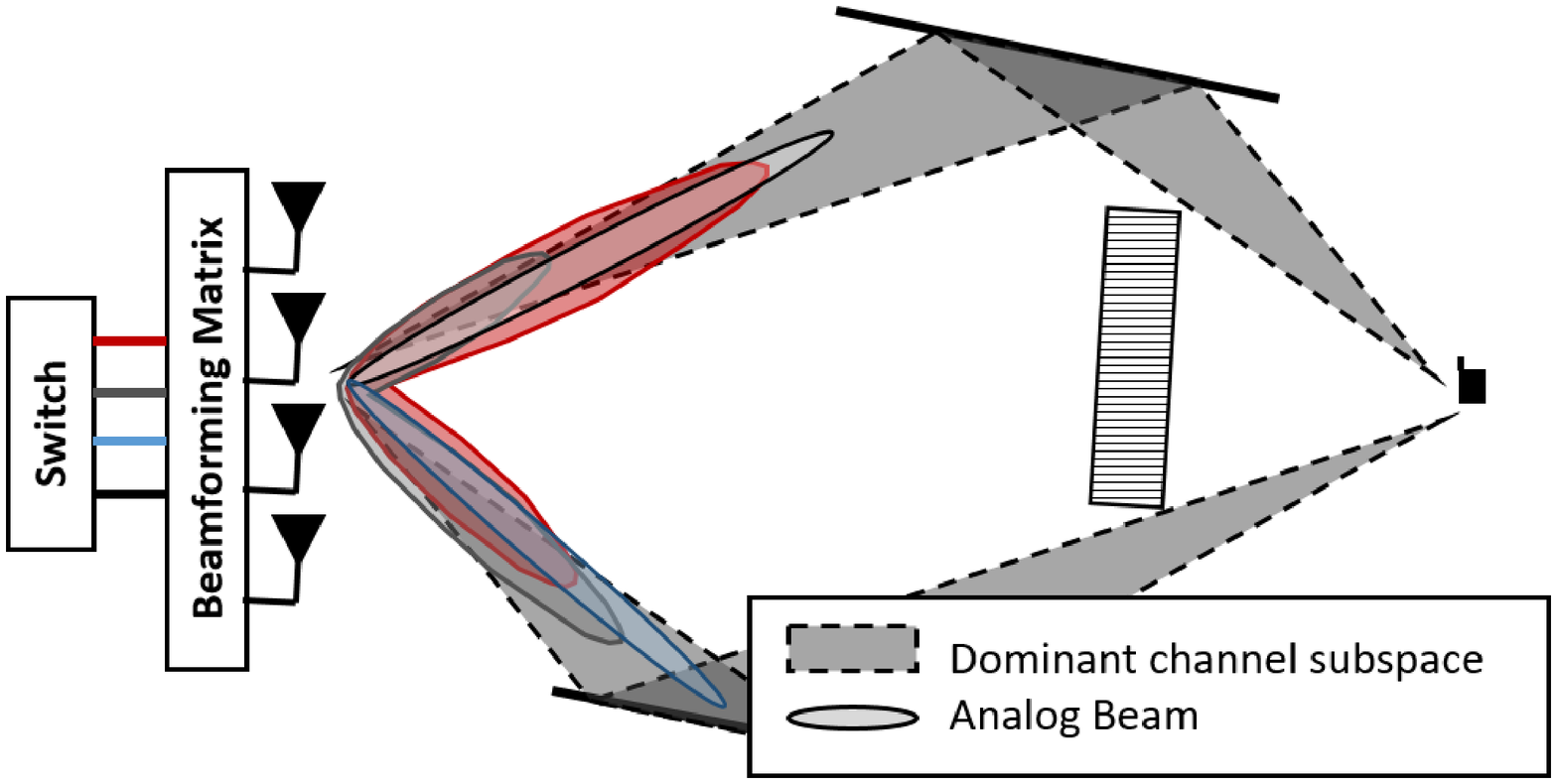}}
\caption{An illustration of the different hybrid beamforming schemes at a transmitter with one up-conversion chain and a single antenna receiver.}
\label{fig_illustrate_HPschemes}
\end{figure}
\subsection{Hybrid Beamforming with Selection} \label{subsec_HBwS_intro}
Even with a small $\bar{K}$ and an aCSI based analog beamformer, it is possible to adapt the transmit analog precoding beams to iCSI via the use of selection techniques \cite{Sudarshan}. 
By using additional analog hardware, several possible options for the analog precoding beams can be provided to span the dominant channel subspace, as illustrated in Fig.~\ref{Fig_illustrate_HPwS}. By dynamically switching to the ``best" beams for each channel realization, we may obtain $\text{performance}^{1}$ comparable to HBiCSI. 

In this paper we study a generalization of this approach, namely, hybrid beamforming with selection (HBwS), as a solution to achieve $\text{performance}^{1}$ comparable to HBiCSI, while still retaining some benefits of HBaCSI i.e., infrequent update of analog hardware parameters and low channel estimation overhead. 
The block diagram of a transmitter (TX) with HBwS is given in Fig.~\ref{Fig_blockdiag_sel}. Here, we again have an analog beamforming matrix that is connected to the antennas. However, unlike conventional hybrid beamforming, the number of input ports for the beamforming matrix ($\bar{L}$) is larger than the number of available up-conversion chains ($\bar{K}$). 
\begin{figure}[!h]
\centering
\subfloat[Block Diagram]{\label{Fig_blockdiag_sel}\includegraphics[width= 0.45\textwidth]{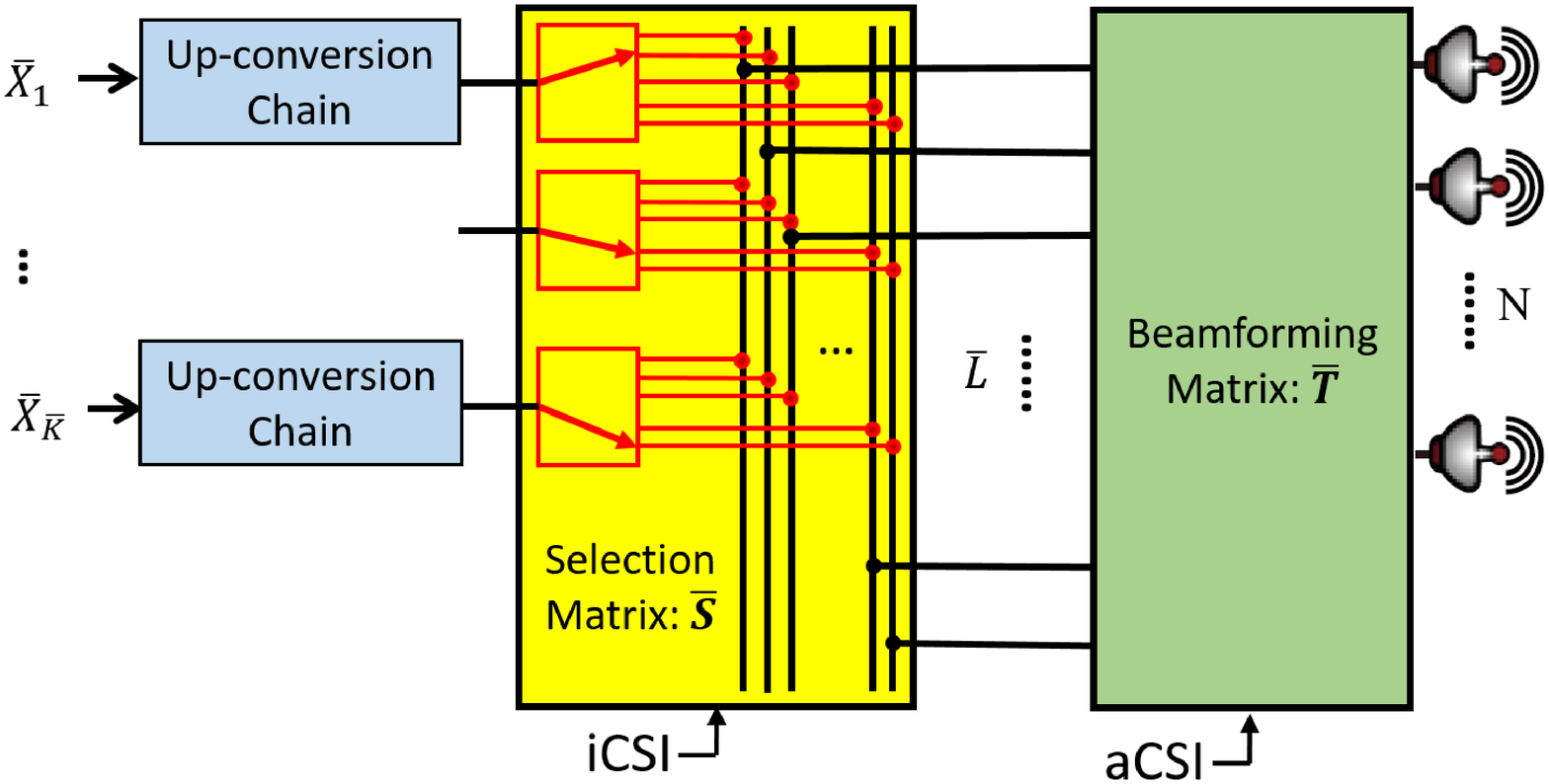}} \hspace{5mm} 
\subfloat[User layout]{\label{Fig_scenario_illustrate}\includegraphics[width= 0.3\textwidth]{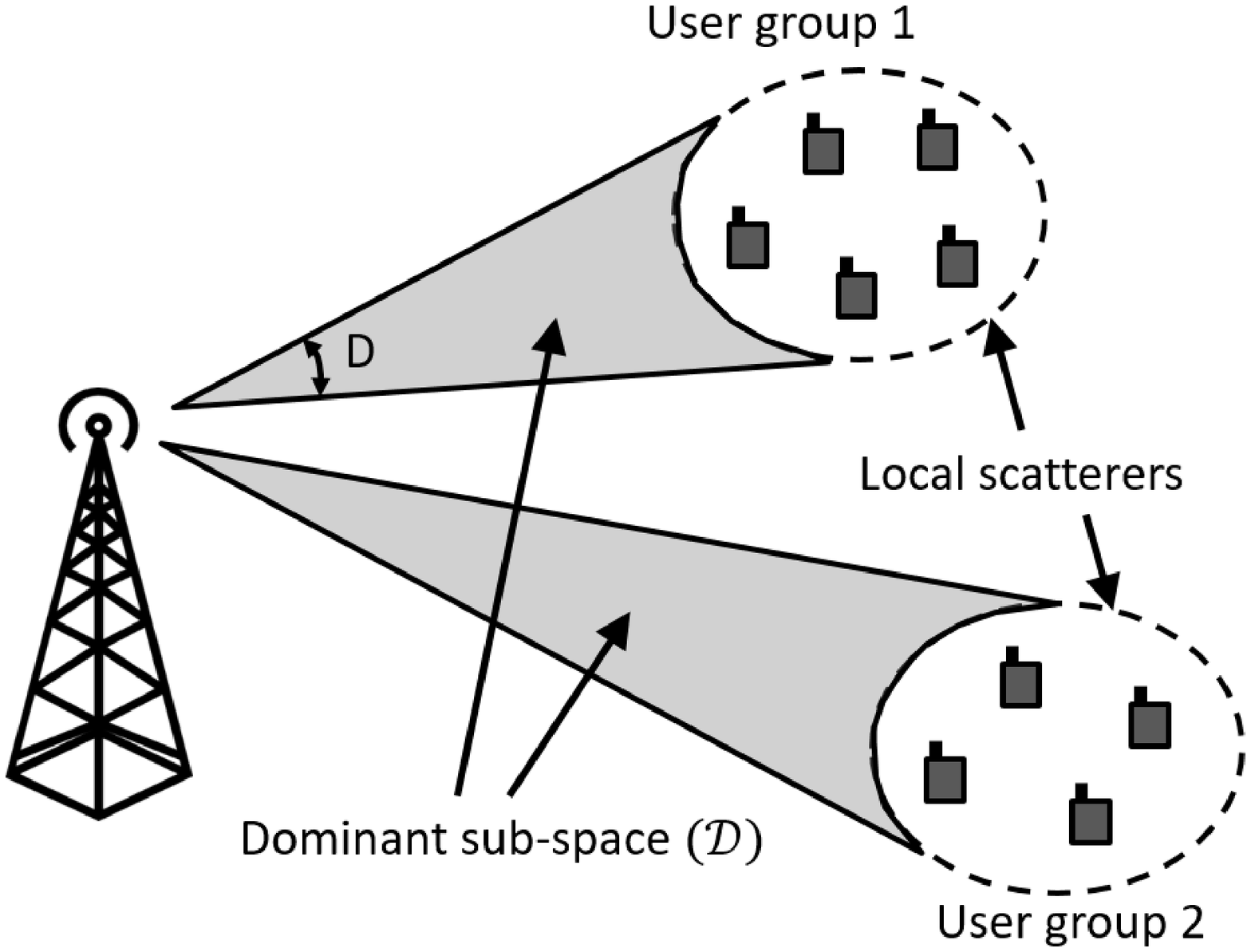}} 
\caption{Illustration of: (a) a Block diagram of Hybrid Beamforming with Selection at the TX (b) a sample user layout}
\label{Fig_Illustrate_HPwS}
\end{figure}
This matrix is preceded by a bank of $\bar{K}$ one-to-many RF switches, each of which connects one up-conversion chain to one of several input ports. Note that each connection of the $\bar{K}$ up-conversion chains to $\bar{K}$ out-of-the $\bar{L}$ input ports corresponds to a distinct analog precoding beam in Fig.~\ref{Fig_illustrate_HPwS}. 
While the beamforming matrix is designed based on aCSI, the switches exploit iCSI to optimize this $\bar{K}$ out-of-the $\bar{L}$ input port selection. 
Since the beamformer uses only aCSI, HBwS is more resilient to the transient response of phase-shifters \cite{Romanofsky2004} than HBiCSI, thereby easing performance specification requirements on them.\footnote{Adapting to iCSI involves changing the precoding beam multiple times within a coherence interval (see Section \ref{sec_chan_est_overhead}).} 
The premise for this design is that unlike phase shifters, RF switches are cheap, have low insertion loss and can be easily designed to switch quickly based on iCSI \cite{Rial2016, Sudarshan, Romanofsky2004, Yeh2017, Firouzjael2010, Schmid2014}.
Since $\bar{L} > \bar{K}$ and switches adapt the \emph{effective} precoding beams to iCSI, the analog precoding can span a larger channel subspace and a larger beamforming gain can be achieved in comparison to HBaCSI. 
Furthermore, due to its superior beam-shaping capabilities, HBwS provides better user separation than HBaCSI in a multi-user system. 
On the downside, since the analog beams span a larger channel subspace, the channel estimation overhead for HBwS may be larger than for HBaCSI. 
But by designing the beamformer carefully, the overhead can be made significantly lower than for full channel estimation. Additionally, since the beamformer has a size of $N \times \bar{L}$, as opposed to $N \times \bar{K}$ for HBaCSI, a larger number of analog components are required for HBwS. The estimation overhead and techniques to reduce the number of analog components are discussed in more detail in Sections \ref{sec_chan_est_overhead} and \ref{sec_hardware_cost}. 

The simplest type of HBwS is antenna selection \cite{Paulraj, Molisch_mag, Magazine_nosratinia}, where the analog beamforming matrix is omitted. Introducing a beamforming stage provides additional beamforming gains in correlated channels, and therefore, several designs for the beamforming matrix using phase-shifters \cite{Molisch_FFTshift, Sudarshan} or lens antennas \cite{Zeng2014, Gao2017} have been proposed. More recently, antenna selection has also been explored with regard to cost, power consumption and channel estimation overhead \cite{Rial2016}. However in most of the prior works, the beamforming matrix offers orthogonal beam choices and the number of input ports ($\bar{L}$) equals the number of transmit antennas ($N$), i.e., the beams span the whole channel dimension. Though some of these designs \cite{Molisch_VarPhaseShift, Sudarshan}, can be extended to the case of $\bar{L} < N$, these designs are inferior, especially in spatially sparse channels \cite{Vishnu_ICC2017}. 
A more generic design for the beamforming matrix, with $\bar{L} \neq N$ and possibly non-orthogonal columns (i.e., offering non-orthogonal beam choices), was proposed by us in \cite{Vishnu_ICC2017} for a single user multiple-input-single-output scenario and shown to provide improved performance. This work extends this design to a multi-user MIMO scenario, while also accounting for the impact of the switch bank architecture. Furthermore, we investigate the hardware implementation cost of HBwS. 
The contributions of this paper are as follows:
\begin{enumerate}
\item We propose a generic architecture of HBwS for low complexity multi-user MIMO transceivers, wherein the beamforming matrix may be a rectangular matrix i.e., $\bar{L} \neq N$, with non-orthogonal columns. 
\item For a channel with isotropic scattering within the subspace spanned by the beamformer,
we show that a beamformer that maximizes a lower bound to the system sum capacity with Dirty-paper coding can be obtained as a solution to a coupled Grassmannian subspace packing problem. 
\item We find a good sub-optimal solution to this packing problem and propose algorithms to improve it. 
\item We propose a two-stage architecture for the beamformer and find a family of ``good" switch positions, which help reduce the computational and hardware cost of HBwS while retaining good performance. 
\item An extension of the beamformer design to channels with anisotropic scattering is also explored.
\end{enumerate}
Some other works have explored a different hybrid architecture, involving switches between the antennas and the analog beamforming matrix \cite{Alkhateeb2016}. Unlike here, the purpose there is to reduce the number of analog components in the beamforming matrix. 

The organization of this paper is as follows: the general assumptions and the channel model are discussed in Section \ref{sec_chan_model}; the  sum capacity maximizing beamforming matrix design problem is formulated in Section \ref{sec_prbm_formulate}; the search-space for the optimal beamformer is characterized in Section \ref{sec_restrict_search_space} and a closed-form lower bound to the sum capacity is explored in Section \ref{sec_approx_obj_fn}; a good beamformer design and algorithms for further improving upon it are proposed in Section \ref{sec_pp_design}; strategies to reduce the hardware implementation cost for HBwS are discussed in Section \ref{sec_hardware_cost}; the channel estimation overhead for HBwS is discussed in Section \ref{sec_chan_est_overhead}; the simulation results are presented in Section \ref{sec_sim_results}; the extension to anisotropic channels is considered in Section \ref{sec_anisotropy}; and finally, the conclusions are summarized in Section \ref{sec_conclusions}.

Notation used in this work is as follows: scalars are represented by light-case letters; vectors by bold-case letters; matrices are represented by capitalized bold-case letters and sets and subspaces are represented by calligraphic letters. Additionally, $\mathbf{a}_i$ represents the $i$-th element of a vector $\mathbf{a}$, $|\mathbf{a}|$ represents the $L_2$ norm of a vector $\mathbf{a}$, $\mathbf{A}_{i,j}$ represents the $(i,j)$-th element of a matrix $\mathbf{A}$, ${[\mathbf{A}]}_{{\rm c}\{i\}}$ and ${[\mathbf{A}]}_{{\rm r}\{i\}}$ represent the i-th column and row vectors of matrix $\mathbf{A}$ respectively, ${\|\mathbf{A}\|}_F$ represents the Frobenius norm of a matrix $\mathbf{A}$, $\mathbf{A}^{\dag}$ is the conjugate transpose of a matrix $\mathbf{A}$, $\mathcal{P}\{\mathbf{A}\}$ represents the subspace spanned by the columns of a matrix $\mathbf{A}$, $|\mathbf{A}|$ represents the determinant of a matrix $\mathbf{A}$ and $|\mathcal{A}|$ represents the cardinality of a set $\mathcal{A}$ or dimension of a space $\mathcal{A}$. Also, ${}^N {\rm C}_k = \frac{N!}{(N-k)! k!}$ where $N!$ is the factorial of $N$, $\,{\buildrel d \over =}\,$ is equivalence in distribution, $\succeq 0$ implies a positive semi-definite constraint, $\mathbb{E}\{\}$ represents the expectation operator, $\mathbb{P}$ is the probability operator, $\mathbb{I}_i$ and $\mathbb{O}_{i,j}$ are the $i \times i$ and $i \times j$ identity and zero matrices respectively, and $\mathbb{R}$ and $\mathbb{C}$ represent the field of real and complex numbers.

\section{General Assumptions and Channel model} \label{sec_chan_model}
We consider the downlink of a single cell system, having one BS with $N$ antennas ($N \gg 1$) and implementing HBwS, and multiple users, modeled as a multi-user massive MIMO broadcast channel. The presented results can also be extended to the uplink multiple-access channel (MAC) with HBwS at the BS. Similar to the system model in \cite{Adhikary_JSDM}, we assume that the users can be spatially divided into user groups, with common intra-group spatial channel statistics and orthogonal channels across the groups, as illustrated in Fig.~\ref{Fig_scenario_illustrate}. If such a grouping for all users is not possible, a suitable user selection algorithm can be used \cite{Adhikary2014}. We further assume that the BS transmit resources, such as the RF chains, TX power, switches and analog hardware are split among the different user groups based on the average channel statistics, via an aCSI based resource sharing algorithm.\footnote{While iCSI based inter-group resource sharing may potentially improve performance, it may pose stringent requirements on the system hardware and increase system complexity.} While such algorithms already exist for HBaCSI \cite{Liu2014, Wang2015, Li2017}, extending them for HBwS is beyond the scope of this paper. Since the different user groups have orthogonal channels and the resources are split among the groups based on aCSI, the transmission to each user group can be treated independently. Therefore, without loss of generality, we restrict the analysis to a single representative user group with $M_1$ users. The BS allocates $K$ up-conversion chains to this user group ($K \leq N$). The portion of the TX RF analog beamforming matrix allocated to the user group $\mathbf{T}$, has a dimension of $N \times L$ and a corresponding sub-set of switches, denoted by a selection matrix $\mathbf{S}$, are used to connect $K$ out-of-the $L$ input ports of the beamforming matrix to the $K$ up-conversion chains. Note that $K \leq \bar{K}$, $L \leq \bar{L}$, $\mathbf{T}$ is a sub-matrix of $\bar{\mathbf{T}}$ and $\mathbf{S}$ is a sub-matrix of $\bar{\mathbf{S}}$, where the right hand sides refer to the total resources at the BS (see Section \ref{subsec_HBwS_intro} and Fig.~\ref{Fig_blockdiag_sel}). The $M_1$ receivers in the group have $M_2$ antennas and $M_2$ down-conversion chains each. We further define $M \triangleq M_1 M_2$ and assume that $K \geq M$. 
We consider a narrow-band system with a frequency flat and temporally block fading channel. Under these assumptions, the downlink baseband  received signal at user $m$, for a given selection matrix $\mathbf{S}$, can be expressed as: 
\begin{subequations}
\begin{eqnarray}
\mathbf{y}_m(\mathbf{S}) &=& \sqrt{\rho} \widetilde{\mathbf{H}}_m \mathbf{T} \mathbf{S} \mathbf{x} + \mathbf{n}_m \label{eqn_downlink_y1} \\
&=& \sqrt{\rho} \widetilde{\mathbf{H}}_m \mathbf{T} \mathbf{S} \mathbf{G u} + \mathbf{n}_m \label{eqn_downlink_y2}
\end{eqnarray}
\end{subequations}
where $\mathbf{y}_m(\mathbf{S})$ is the $M_2 \times 1$ received signal vector at user $m$, $\rho$ is the mean receive signal-to-noise ratio (SNR), $\widetilde{\mathbf{H}}_m$ is the $M_2 \times N$ downlink channel matrix for user $m$, $\mathbf{S}$ is a $L \times K$ sub-matrix of the identity matrix $\mathbb{I}_{L}$ - formed by picking $K$ out-of-the $L$ columns, $\mathbf{x}$ is the $K \times 1$ sub-matrix of the $\bar{K} \times 1$ digitally precoded transmit vector $\bar{\mathbf{x}}$ corresponding to the $K$ allocated up-conversion chains and $\mathbf{n}_m \sim \mathcal{CN}(\mathbb{O}_{M_2\times M_2},\mathbb{I}_{M_2})$ is the normalized additive white Gaussian noise observed at user $m$. Without loss of generality we define $\mathbf{u} \triangleq \mathbf{G}^{-1} \mathbf{x}$, where $\mathbf{G}$ is a $K\times K$ a full-rank matrix that ortho-normalizes the columns of $\mathbf{T}\mathbf{S}$ i.e., $\mathbf{G}^{\dag} \mathbf{S}^{\dag} \mathbf{T}^{\dag} \mathbf{T}\mathbf{S} \mathbf{G} = \mathbb{I}_K$. Here we implicitly assume that $\mathbf{T} \mathbf{S}$ has linearly independent columns for each $\mathbf{S}$. 
The transmit power constraint can then be expressed as:
\begin{subequations}
\begin{eqnarray} 
\text{tr}\{\mathbf{T S} \mathbb{E}_{\mathbf{x}}\{\mathbf{x} \mathbf{x}^{\dag}\} \mathbf{S}^{\dag} \mathbf{T}^{\dag} \} \leq 1 \label{eqn_TX_pow_constraint1} \\
\Rightarrow \mathbb{E}_{\mathbf{u}}\{\mathbf{u}^{\dag} \mathbf{u} \} \leq 1. \label{eqn_TX_pow_constraint}
\end{eqnarray}
\end{subequations}
Note that the ortho-normalization matrix $\mathbf{G}$ is defined only to simplify \eqref{eqn_TX_pow_constraint1} and does not constitute the entire digital base-band precoding, a part of which may still exist in $\mathbf{u}$. 

The channel is assumed to contain both a large scale fading as well as a small scale fading component. The small scale fading statistics are assumed to be Rayleigh in amplitude and doubly spatially correlated (both at transmitter and receiver end). As illustrated in Fig.~\ref{Fig_scenario_illustrate}, the users in a group are close enough to share the same set of local scatterers, but are sufficient wavelengths apart to undergo independent and identically distributed (i.i.d.) small scale fading. Therefore, we assume that the channels to the different users are independently distributed and follow the widely used Kronecker correlation model \cite{Kronecker_validity} with a common transmit spatial correlation matrix $\mathbf{R}_{{\rm tx}}$ but individual receive correlation matrices $\mathbf{R}_{{\rm rx},m}$, respectively. Let $\mathbf{R}_{\rm tx} = \mathbf{E}_{\rm tx} \mathbf{\Lambda}_{\rm tx} \mathbf{E}_{\rm tx}^{\dag}$ be the eigen-decomposition of the transmit correlation matrix such that the diagonal elements of $\mathbf{\Lambda}_{\rm tx}$ are arranged in descending order of magnitude. 
Under these conditions, the channel matrices can be expressed as:
\begin{eqnarray}
\widetilde{\mathbf{H}}_m = \mathbf{R}_{{\rm rx},m}^{1/2} \mathbf{H}_m {[\boldsymbol{\Lambda}_{\rm tx}]}^{1/2} {[\mathbf{E}_{\rm tx}]}^{\dag}, \label{eqn_channel_KL_exapnsion}
\end{eqnarray}
where $\mathbf{H}_m$ is an $M_2 \times N$ matrix with i.i.d. $\mathcal{CN}(0, 1)$ entries. Without loss of generality, the mean pathloss for the user group is included into $\rho$ and any user specific large scale fading components are included in $\mathbf{R}_{{\rm rx}, m}$. 

The BS is assumed to have perfect knowledge of the aCSI metrics $\{\mathbf{R}_{\rm tx}, \mathbf{R}_{\rm rx,1}, ..., \mathbf{R}_{\rm rx,M_1}\}$, which is used to update the analog beamforming matrix. The BS is also assumed to have perfect knowledge of the effective iCSI channels for the user group $\{\widetilde{\mathbf{H}}_{1} \mathbf{T}, ..., \widetilde{\mathbf{H}}_{M_1} \mathbf{T}\}$, which is used to update the selection matrix $\mathbf{S}$. Finally, each user $m$ is assumed to know its effective channel after picking $\mathbf{T}$ and $\mathbf{S}$, i.e., $\widetilde{\mathbf{H}}_{m} \mathbf{T S G}$. The corresponding channel estimation overhead is discussed later in Section \ref{sec_chan_est_overhead}. 
\section{Problem Formulation} \label{sec_prbm_formulate}
We consider a generic switching architecture, where $\mathcal{S} \triangleq \{\mathbf{S}_1,.., \mathbf{S}_{|\mathcal{S}|}\}$ denotes the set of all feasible selection matrices. Note that depending on the switch bank architecture, this set, referred to as the switch position set, may not involve all the ${L \choose K}$ choices. For each $\mathbf{S}_i \in \mathcal{S}$, let $\mathbf{G}_i$ be the corresponding orthogonalization matrix in \eqref{eqn_downlink_y2} i.e., $\mathbf{G}_i^{\dag} \mathbf{S}_i^{\dag} \mathbf{T}^{\dag} \mathbf{T}\mathbf{S}_i \mathbf{G}_i = \mathbb{I}_K$. Although $\mathbf{G}_i$ is also a function of $\mathbf{T}$, this dependence is not explicitly shown for ease of representation. For a given selection matrix $\mathbf{S}_i$, note that the downlink channel described in Section \ref{sec_chan_model} is a broadcast channel with effective channel matrices $\widetilde{\mathbf{H}}_m \mathbf{T} \mathbf{S}_i \mathbf{G}_i$ for each user $m$. 
Therefore using uplink-downlink duality \cite{Vishwanath_duality}, the ergodic sum capacity achievable using Dirty-Paper coding \cite{Costa_DPC, DPC_2006} can be expressed as:
\begin{flalign}
& C(\mathbf{T}) = \mathbb{E}_{\widetilde{\mathbf{H}}} \Bigg\{ \max_{1 \leq i \leq |\mathcal{S}|, \{\mathbf{P}_1,..,\mathbf{P}_{M_1}\}} \Bigg( & \nonumber \\
& \qquad \quad \ \ \log \bigg| \mathbb{I}_{K} \!+\!\! \sum_{m=1}^{M_1}\! \rho \mathbf{G}^{\dag}_i \mathbf{S}_i^{\dag} \mathbf{T}^{\dag} \widetilde{\mathbf{H}}^{\dag}_m \mathbf{P}_m  \widetilde{\mathbf{H}}_m \mathbf{T} \mathbf{S}_i \mathbf{G}_i \bigg| \Bigg) \Bigg\} \!\!\!\!\!\!\! & \label{eqn_cap_MU} \\
& \quad \qquad \text{subject to:} \ \mathbf{P}_m \succeq 0, \sum_{m=1}^{M_1} \text{Tr}\{\mathbf{P}_m\} \leq 1, & \nonumber
\end{flalign}
where $\mathbf{P}_m$ represents the $M_2 \times M_2$ dual-uplink MAC transmit covariance matrix at user $m$ and we define $\widetilde{\mathbf{H}} = {\big[ \begin{array}{cccc} \widetilde{\mathbf{H}}^{\dag}_1 & \widetilde{\mathbf{H}}^{\dag}_2 & \hdots & \widetilde{\mathbf{H}}^{\dag}_{M_1} \end{array} \big]}^{\dag}$. 
Although \eqref{eqn_cap_MU} is convex in $\{\mathbf{P}_m\}$ for each $i$, the optimal covariance matrices $\{\mathbf{P}_m\}$ are not known in closed form. Therefore, we rely on a sub-optimal solution: $\mathbf{P}_m = \frac{1}{M}\mathbb{I}_{M_2}$, which is optimal under a large SNR if $\widetilde{\mathbf{H}}$ has a full row-rank and $K \geq M$ \cite{Caire_DPC, Lee2007}.\footnote{The SNR here is including the analog beamforming gain, and additionally the users in a group have a comparable signal strength. Therefore these conditions may be usually satisfied if $\mathbf{R}_{{\rm rx}, m}$ has a full rank $\forall m$.} 
Using this solution in \eqref{eqn_cap_MU} and applying the Sylvester's determinant identity \cite{Sylvesters_identity}, we obtain a tractable sum capacity lower bound:
\begin{flalign} \label{eqn_cap_MU_nearopt}
&\underbar{C}(\mathbf{T}) \triangleq \mathbb{E}_{\widetilde{\mathbf{H}}} \left\{ \max_{1 \leq i \leq |\mathcal{S}|} \log \left| \mathbb{I}_{M} + \frac{\rho}{M} \widetilde{\mathbf{H}} \mathbf{T} \mathbf{S}_i \mathbf{G}_i \mathbf{G}^{\dag}_i \mathbf{S}_i^{\dag} \mathbf{T}^{\dag} \widetilde{\mathbf{H}}^{\dag} \right| \right\}. &
\end{flalign}
Since we assume the use of the capacity optimal dirty-paper coding instead of linear pre-coding, a base-band digital beamforming matrix does not show up in \eqref{eqn_cap_MU_nearopt}. We shall henceforth refer to $\underbar{C}(\mathbf{T})$ as the hSNR sum capacity, recalling that $\underbar{C}(\mathbf{T}) \leq C(\mathbf{T})$ in general, with equality in the high SNR regime. 
From \eqref{eqn_channel_KL_exapnsion} and the fact that $\widetilde{\mathbf{H}}_m$ are independent for all $m$, we can express:
\begin{eqnarray} \label{eqn_channel_KL_expansion_full}
\widetilde{\mathbf{H}} = \mathbf{R}_{{\rm rx}}^{1/2} \mathbf{H} {[\boldsymbol{\Lambda}_{\rm tx}]}^{1/2} {[\mathbf{E}_{\rm tx}]}^{\dag}, 
\end{eqnarray}
where $\mathbf{R}_{{\rm rx}}$ is a block-diagonal matrix with the $m$-th diagonal block being $\mathbf{R}_{{\rm rx},m}$ and $\mathbf{H} = {\big[ \begin{array}{cccc} \mathbf{H}^{\dag}_1 & \mathbf{H}^{\dag}_2 & \hdots & \mathbf{H}^{\dag}_{M_1} \end{array} \big]}^{\dag}$ is a $M \times N$ matrix with i.i.d. $\mathcal{CN}(0, 1)$ entries. The primary goal of this work is to find the analog beamformer $\mathbf{T}$ that maximizes the lower bound in \eqref{eqn_cap_MU_nearopt}, i.e.,:
\begin{eqnarray}
\mathbf{T}_{\rm opt} = {\rm argmax}_{\mathbf{T} \in \mathbb{C}^{N\times L}} \{ \underbar{C}(\mathbf{T}) \} \label{eqn_prob_formulate} \\
\text{subject to: } |\mathcal{P}\{\mathbf{T}\}| \leq D, \qquad \nonumber
\end{eqnarray}
where $\mathcal{P}\{\mathbf{T}\}$ represents the sub-space spanned by the columns of $\mathbf{T}$, $D$ is a bound on the dimension of this subspace, $\mathbf{T}_{\rm opt}$ is designed based on the knowledge of the aCSI statistics: $\mathbf{R}_{\rm tx}$ and $\mathbf{R}_{\rm rx}$ and, with slight abuse of notation, ${\rm argmax}\{\}$ refers to any one of the (possibly many) maximizing arguments. Such a bound on $\mathcal{P}\{\mathbf{T}\}$ is required to limit the channel estimation overhead, as shall be shown in Section \ref{sec_chan_est_overhead}. The optimal beamformer that maximizes an objective of the form $f(D)\underbar{C}(\mathbf{T})$, where $f(\cdot)$ is any non-increasing function, can then be found by simply performing a line search over $D \in \{1,...,N\}$. 
Note that \eqref{eqn_prob_formulate} does not involve any magnitude or phase restrictions on the elements of the analog beamforming matrix $\mathbf{T}$. Such an unrestricted beamformer, while serving as a good reference for comparison, may also provide intuition for designing beamformers with unit magnitude and discrete phase constraints. 
As shall be seen later, an exact solution to \eqref{eqn_prob_formulate} is intractable and we shall therefore restrict ourselves to a good sub-optimal solution, that only requires knowledge of $\mathbf{R}_{\rm tx}$.

\subsection{Connections to limited-feedback precoding} \label{sec_connections_LFP}
Note that HBwS is an example of a restricted precoded system \cite{Vishnu_jrnl1}. In fact, by considering the precoding matrices for the different switch positions $\{\mathbf{T} \mathbf{S}_i \mathbf{G}_i| \mathbf{S}_i \in \mathcal{S}\}$ as entries of a codebook, the single user 
case ($M_1=1$) can be interpreted as a type of limited-feedback unitary precoding \cite{Love_SMux, Review_love}. 
However, in contrast to conventional limited-feedback precoding, the HBwS codebook entries $\{\mathbf{T} \mathbf{S}_i \mathbf{G}_i| \mathbf{S}_i \in \mathcal{S}\}$ are coupled, as they are generated from the columns of the same beamforming matrix $\mathbf{T}$. As a result, good codebook designs for limited-feedback unitary precoding \cite{Love_SMux, Love_beamforming_correlated, Raghavan_rot_scale, Raghavan_RVQ} cannot be directly extended to find good designs for $\mathbf{T}$.

\section{Transforming the search space} \label{sec_restrict_search_space}
Notice that in \eqref{eqn_prob_formulate}, search for $\mathbf{T}_{\rm opt}$ is over all possible $N \times L$ complex matrices with $|\mathcal{P}\{\mathbf{T}\}| \leq D$. Without loss of generality, such a beamformer can be expressed as $\mathbf{T} = \mathbf{E} \hat{\mathbf{T}}$ where $\mathbf{E} \in \mathbb{C}^{N\times D}$ and $\hat{\mathbf{T}} \in \mathbb{C}^{D \times L}$. In this section, we reduce this search space by getting rid of some sub-optimal and redundant solutions. We first state the following theorem:
\begin{theorem} \label{Th_restrict_dom_eigspace}
If ${\rm rank}\{\mathbf{R}_{\rm tx}\} \leq D$, there exists an optimal solution $\mathbf{T}_{\rm opt}$ to \eqref{eqn_prob_formulate} such that, $\mathbf{T}_{\rm opt} = \mathbf{E}^D_{\rm tx} \hat{\mathbf{T}}$, where $\hat{\mathbf{T}} \in \mathbb{C}^{D \times L}$ and $\mathbf{E}^D_{\rm tx} = {[\mathbf{E}_{\rm tx}]}_{{\rm c}\{1:D\}}$ is the $N \times D$ principal sub-matrix of $\mathbf{E}_{\rm tx}$ corresponding to the $D$ largest eigenvalues.
\end{theorem}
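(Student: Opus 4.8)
The plan is to show that, under the rank hypothesis, the component of any feasible beamformer lying outside the leading eigenspace $\mathcal{P}\{\mathbf{E}^D_{\rm tx}\}$ is invisible to the effective channel yet only inflates the power normalization, and can therefore be deleted without decreasing $\underbar{C}(\mathbf{T})$.

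First I would isolate the dependence of the objective on $\mathbf{T}$. Since ${\rm rank}\{\mathbf{R}_{\rm tx}\}\leq D$, the diagonal of $\boldsymbol{\Lambda}_{\rm tx}$ vanishes beyond its first $D$ entries, so every row of $[\boldsymbol{\Lambda}_{\rm tx}]^{1/2}[\mathbf{E}_{\rm tx}]^{\dag}$ past the $D$-th is zero. Hence in \eqref{eqn_channel_KL_expansion_full} the product $\widetilde{\mathbf{H}}\mathbf{T}$ involves $\mathbf{T}$ only through $\mathbf{A}\triangleq[\mathbf{E}^D_{\rm tx}]^{\dag}\mathbf{T}\in\mathbb{C}^{D\times L}$. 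Next I would make the power normalization explicit: from $\mathbf{G}_i^{\dag}\mathbf{S}_i^{\dag}\mathbf{T}^{\dag}\mathbf{T}\mathbf{S}_i\mathbf{G}_i=\mathbb{I}_K$ one gets $\mathbf{G}_i\mathbf{G}_i^{\dag}=(\mathbf{S}_i^{\dag}\mathbf{T}^{\dag}\mathbf{T}\mathbf{S}_i)^{-1}$, so the $i$-th integrand of \eqref{eqn_cap_MU_nearopt} equals $\log\big|\mathbb{I}_M+\frac{\rho}{M}(\widetilde{\mathbf{H}}\mathbf{T})\mathbf{S}_i(\mathbf{S}_i^{\dag}\mathbf{T}^{\dag}\mathbf{T}\mathbf{S}_i)^{-1}\mathbf{S}_i^{\dag}(\widetilde{\mathbf{H}}\mathbf{T})^{\dag}\big|$, which depends on $\mathbf{T}$ only through $\widetilde{\mathbf{H}}\mathbf{T}$ (hence $\mathbf{A}$) and the Gram matrix $\mathbf{T}^{\dag}\mathbf{T}$.

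The core step is a monotonicity comparison. Decomposing $\mathbf{T}=\mathbf{E}^D_{\rm tx}\mathbf{A}+\mathbf{E}^{\perp}_{\rm tx}\mathbf{B}$, where $\mathbf{E}^{\perp}_{\rm tx}$ collects the remaining eigenvectors, gives $\mathbf{T}^{\dag}\mathbf{T}=\mathbf{A}^{\dag}\mathbf{A}+\mathbf{B}^{\dag}\mathbf{B}$. I would set $\mathbf{T}'\triangleq\mathbf{E}^D_{\rm tx}\mathbf{A}$, which is of the claimed form (take $\hat{\mathbf{T}}=\mathbf{A}$), satisfies $|\mathcal{P}\{\mathbf{T}'\}|\leq D$, and obeys $\widetilde{\mathbf{H}}\mathbf{T}'=\widetilde{\mathbf{H}}\mathbf{T}$. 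Since $\mathbf{B}^{\dag}\mathbf{B}\succeq0$, for each $\mathbf{S}_i$ we have $\mathbf{S}_i^{\dag}\mathbf{T}^{\dag}\mathbf{T}\mathbf{S}_i\succeq\mathbf{S}_i^{\dag}\mathbf{T}'^{\dag}\mathbf{T}'\mathbf{S}_i\succ0$, and because matrix inversion is operator-antitone, conjugation by $\mathbf{S}_i$ yields $\mathbf{S}_i(\mathbf{S}_i^{\dag}\mathbf{T}^{\dag}\mathbf{T}\mathbf{S}_i)^{-1}\mathbf{S}_i^{\dag}\preceq\mathbf{S}_i(\mathbf{S}_i^{\dag}\mathbf{T}'^{\dag}\mathbf{T}'\mathbf{S}_i)^{-1}\mathbf{S}_i^{\dag}$. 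Sandwiching by the common factor $\widetilde{\mathbf{H}}\mathbf{T}=\widetilde{\mathbf{H}}\mathbf{T}'$ and using that $\log|\mathbb{I}_M+\cdot|$ is nondecreasing on the positive semidefinite cone shows the $i$-th integrand for $\mathbf{T}$ is dominated by that for $\mathbf{T}'$; taking $\max_i$ and then $\mathbb{E}_{\widetilde{\mathbf{H}}}\{\cdot\}$ gives $\underbar{C}(\mathbf{T})\leq\underbar{C}(\mathbf{T}')$. Applying this to an optimal $\mathbf{T}$ establishes the theorem.

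I expect the main obstacle to be the implicit feasibility requirement that $\mathbf{T}'\mathbf{S}_i$ keep linearly independent columns, i.e.\ that $\mathbf{S}_i^{\dag}\mathbf{A}^{\dag}\mathbf{A}\mathbf{S}_i$ stay invertible, which is exactly what the inverses above require; this can fail only in the non-generic case where projecting $\mathbf{T}\mathbf{S}_i$ onto the dominant eigenspace lowers its rank. I would dispatch this by continuity: perturb $\mathbf{A}$ infinitesimally to restore full column rank of each $\mathbf{A}\mathbf{S}_i$, run the inequality, and let the perturbation vanish, invoking continuity of $\underbar{C}(\cdot)$.
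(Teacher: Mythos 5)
Your proof is correct, and it reaches the conclusion by a route that differs from the paper's in a meaningful way. Both arguments start from the same decomposition $\mathbf{T}=\mathbf{E}^D_{\rm tx}\mathbf{A}+\mathbf{E}^{N-D}_{\rm tx}\mathbf{B}$ and the observation that ${\rm rank}\{\mathbf{R}_{\rm tx}\}\leq D$ makes $\widetilde{\mathbf{H}}\mathbf{T}$ depend only on $\mathbf{A}$. The paper then stays entirely at the level of the signal model: it shows that $\mathbf{T}$ and $\mathbf{E}^D_{\rm tx}\mathbf{A}$ produce the \emph{identical} received signal $\mathbf{y}_m(\mathbf{S}_i)$ for every input $\mathbf{x}$, while the transmit-power constraint \eqref{eqn_TX_pow_constraint1} can only be relaxed since the discarded term contributes a nonnegative trace; the conclusion is operational (any achievable rate is preserved) and never touches $\mathbf{G}_i$. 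You instead work directly on the closed-form objective \eqref{eqn_cap_MU_nearopt}: you identify $\mathbf{G}_i\mathbf{G}_i^{\dag}=(\mathbf{S}_i^{\dag}\mathbf{T}^{\dag}\mathbf{T}\mathbf{S}_i)^{-1}$, observe $\mathbf{T}^{\dag}\mathbf{T}=\mathbf{A}^{\dag}\mathbf{A}+\mathbf{B}^{\dag}\mathbf{B}\succeq\mathbf{A}^{\dag}\mathbf{A}$, and invoke antitonicity of matrix inversion plus monotonicity of $\log|\mathbb{I}_M+\cdot|$ on the positive semidefinite cone. Each approach buys something: the paper's argument is shorter and automatically applies to any capacity-like functional of the induced channel, whereas yours is a direct verification for the specific lower bound $\underbar{C}(\mathbf{T})$ as defined by its formula (with the fixed dual-MAC covariance), which is arguably the more airtight reading of what \eqref{eqn_prob_formulate} actually optimizes. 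The price of your route is that you must confront the feasibility of the new orthonormalization matrix, i.e.\ that $\mathbf{A}\mathbf{S}_i$ retains full column rank; the paper sidesteps this because $\mathbf{G}$ is only a reparametrization of the input there (and Section \ref{sec_chan_model} assumes full column rank of $\mathbf{T}\mathbf{S}$ throughout). Your perturbation-and-continuity fix is acceptable in spirit, though as stated the perturbed $\mathbf{A}_\epsilon$ must still satisfy $\mathbf{A}_\epsilon^{\dag}\mathbf{A}_\epsilon\preceq\mathbf{T}^{\dag}\mathbf{T}$ for the chain of inequalities to go through before the limit is taken, so that step would need one more line of care; this is a technicality, not a gap in the main argument.
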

\begin{proof}
See Appendix \ref{appdix1}.
\end{proof}
Note that such a low rank $\mathbf{R}_{\rm tx}$ may be often experienced in massive MIMO systems both at cm and mm wave frequencies. We also conjecture that:
\begin{conjec} \label{conj_dom_eigspace}
Theorem \ref{Th_restrict_dom_eigspace} holds for any ${\rm rank}\{\mathbf{R}_{\rm tx}\}$.
\end{conjec}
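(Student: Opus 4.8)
To prove the conjecture, i.e.\ to extend Theorem~\ref{Th_restrict_dom_eigspace} to arbitrary ${\rm rank}\{\mathbf{R}_{\rm tx}\}$, the plan is to first strip \eqref{eqn_cap_MU_nearopt} down to a statement about $\mathcal{P}\{\mathbf{T}\}$ and then show this space is best aligned with the dominant eigenvectors of $\mathbf{R}_{\rm tx}$. Since $\mathbf{B}_i\triangleq\mathbf{T}\mathbf{S}_i\mathbf{G}_i$ has orthonormal columns by the definition of $\mathbf{G}_i$, the matrix $\mathbf{T}\mathbf{S}_i\mathbf{G}_i\mathbf{G}_i^{\dag}\mathbf{S}_i^{\dag}\mathbf{T}^{\dag}=\mathbf{B}_i\mathbf{B}_i^{\dag}\triangleq\mathbf{P}_i$ is exactly the orthogonal projector onto $\mathcal{P}\{\mathbf{T}\mathbf{S}_i\}$, so \eqref{eqn_cap_MU_nearopt} becomes $\underbar{C}(\mathbf{T})=\mathbb{E}_{\mathbf{H}}\{\max_i\log|\mathbb{I}_M+\tfrac{\rho}{M}\mathbf{R}_{\rm rx}^{1/2}\mathbf{H}\mathbf{Q}_i\mathbf{H}^{\dag}\mathbf{R}_{\rm rx}^{1/2}|\}$ with $\mathbf{Q}_i\triangleq\boldsymbol{\Lambda}_{\rm tx}^{1/2}\mathbf{E}_{\rm tx}^{\dag}\mathbf{P}_i\mathbf{E}_{\rm tx}\boldsymbol{\Lambda}_{\rm tx}^{1/2}$. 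Working in the eigenbasis of $\mathbf{R}_{\rm tx}$ (so $\boldsymbol{\Lambda}_{\rm tx}=\mathrm{diag}(\lambda_1\ge\cdots\ge\lambda_N)$ and $\mathbf{E}^D_{\rm tx}=[\mathbb{I}_D\ \mathbb{O}_{D,N-D}]^{\dag}$), the conjecture reduces to showing that some optimal $\mathbf{T}$ has $\mathcal{P}\{\mathbf{T}\}=\mathrm{span}\{\mathbf{e}_1,\dots,\mathbf{e}_D\}$.

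Next I would parametrize $\mathbf{T}=\mathbf{E}\widehat{\mathbf{T}}$ with $\mathbf{E}\in\mathbb{C}^{N\times D}$ on the Stiefel manifold and $\widehat{\mathbf{T}}\in\mathbb{C}^{D\times L}$; then $\mathbf{P}_i=\mathbf{E}\widehat{\mathbf{P}}_i\mathbf{E}^{\dag}$, where $\widehat{\mathbf{P}}_i$ projects onto $\mathcal{P}\{\widehat{\mathbf{T}}\mathbf{S}_i\}$ in $\mathbb{C}^{D}$, and the nonzero eigenvalues of $\mathbf{Q}_i$ coincide with those of the Hermitian PSD matrix $\widehat{\mathbf{P}}_i\mathbf{M}\widehat{\mathbf{P}}_i$, where $\mathbf{M}\triangleq\mathbf{E}^{\dag}\boldsymbol{\Lambda}_{\rm tx}\mathbf{E}$. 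Given any such $\mathbf{E}$, I diagonalize $\mathbf{M}=\mathbf{W}\,\mathrm{diag}(\mu_1,\dots,\mu_D)\,\mathbf{W}^{\dag}$ and define the competitor $\mathbf{T}'=\mathbf{E}^D_{\rm tx}\mathbf{W}^{\dag}\widehat{\mathbf{T}}$, whose column space lies in $\mathrm{span}\{\mathbf{e}_1,\dots,\mathbf{e}_D\}$ and which therefore already has the form claimed in the theorem. For $\mathbf{T}'$ the internal projectors rotate to $\widehat{\mathbf{P}}_i'=\mathbf{W}^{\dag}\widehat{\mathbf{P}}_i\mathbf{W}$ while the weighting becomes $(\mathbf{E}^D_{\rm tx})^{\dag}\boldsymbol{\Lambda}_{\rm tx}\mathbf{E}^D_{\rm tx}=\mathrm{diag}(\lambda_1,\dots,\lambda_D)$, so the nonzero eigenvalues of $\mathbf{Q}_i'$ equal those of $\widehat{\mathbf{P}}_i\,\mathbf{W}\,\mathrm{diag}(\lambda_1,\dots,\lambda_D)\,\mathbf{W}^{\dag}\,\widehat{\mathbf{P}}_i$. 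By the Poincar\'e separation (Cauchy interlacing) theorem $\mu_k\le\lambda_k$ for all $k$, hence $\mathbf{M}\preceq\mathbf{W}\,\mathrm{diag}(\lambda_1,\dots,\lambda_D)\,\mathbf{W}^{\dag}$; conjugating by $\widehat{\mathbf{P}}_i$ and using that the Loewner order implies termwise eigenvalue domination, the eigenvalues of $\mathbf{Q}_i$ are dominated by those of $\mathbf{Q}_i'$, and this holds for every $i$ \emph{simultaneously}.

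The payoff is a clean per-switch comparison. For fixed $i$, writing $\mathbf{Q}_i=\mathbf{V}_i\mathbf{D}_i\mathbf{V}_i^{\dag}$ for its eigendecomposition, the right-unitary invariance $\mathbf{H}\mathbf{V}_i\stackrel{d}{=}\mathbf{H}$ gives $\mathbf{H}\mathbf{Q}_i\mathbf{H}^{\dag}\stackrel{d}{=}\mathbf{H}\mathbf{D}_i\mathbf{H}^{\dag}$, so $\mathbb{E}_{\mathbf{H}}\log|\mathbb{I}_M+\tfrac{\rho}{M}\mathbf{R}_{\rm rx}^{1/2}\mathbf{H}\mathbf{Q}_i\mathbf{H}^{\dag}\mathbf{R}_{\rm rx}^{1/2}|$ depends on $\mathbf{Q}_i$ only through its eigenvalues and is nondecreasing in them (couple a single $\mathbf{H}$ and use monotonicity of $\log\det$ on the Loewner cone). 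Combined with the termwise domination above, this shows that $\mathbf{T}'$ weakly improves the per-switch ergodic capacity for every $i$. In particular the conjecture is already established whenever $|\mathcal{S}|=1$, or more generally whenever the selector is forbidden from exploiting iCSI, so that the maximization over $i$ and the expectation over $\mathbf{H}$ commute.

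The remaining gap---and, I suspect, the reason the statement is left as a conjecture---is exactly the interchange of the instantaneous maximum and the expectation. Because the switches act after observing the channel, the objective is $\mathbb{E}_{\mathbf{H}}\max_i(\cdot)\ge\max_i\mathbb{E}_{\mathbf{H}}(\cdot)$, and the per-term argument controls only the right-hand side. A realization-by-realization domination would need a single coupling of $\mathbf{H}$ that simultaneously diagonalizes all of $\{\mathbf{Q}_i\}$, which is impossible unless these matrices commute: their eigenspace orientations are genuinely coupled through the common $\mathbf{H}$, and a column space tilted slightly into a sub-dominant eigendirection (carrying energy $\lambda_{D+1},\dots$, nonzero precisely when ${\rm rank}\{\mathbf{R}_{\rm tx}\}>D$) can win the instantaneous $\max_i$ for particular realizations even though $\mathbf{T}'$ dominates in expectation. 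When ${\rm rank}\{\mathbf{R}_{\rm tx}\}\le D$ this difficulty evaporates, since $\widetilde{\mathbf{H}}$ is supported on $\mathcal{P}\{\mathbf{E}^D_{\rm tx}\}$ and the rotation can be taken to leave $\widetilde{\mathbf{H}}\mathbf{T}$ itself unchanged while only improving the orthonormalization---this is the mechanism underlying Theorem~\ref{Th_restrict_dom_eigspace}. Closing the general case therefore seems to demand a genuinely joint argument, e.g.\ a majorization principle stating that $\mathbb{E}_{\mathbf{H}}\max_i\log\det(\cdot)$ is monotone under simultaneous eigenvalue domination of the tuple $(\mathbf{Q}_1,\dots,\mathbf{Q}_{|\mathcal{S}|})$; establishing such a principle in the presence of the data-dependent maximum is the step I expect to be the main obstacle.
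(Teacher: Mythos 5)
First, a point of order: the paper does not prove this statement. It is explicitly left open --- ``a general proof of this conjecture has eluded us'' --- with only a special case ($M=1$ under additional conditions) resolved in the cited reference \cite{Vishnu_Globecom}. So there is no proof in the paper to compare yours against, and your proposal, as you yourself acknowledge, does not close the conjecture either. What you have is a correct and worthwhile partial result. The chain you build is sound: writing the effective Gram matrix of each switch position in terms of the projector $\widehat{\mathbf{P}}_i$ and the compressed correlation $\mathbf{M}=\mathbf{E}^{\dag}\boldsymbol{\Lambda}_{\rm tx}\mathbf{E}$, invoking Poincar\'{e} separation to get $\mu_k\leq\lambda_k$, pushing this through congruence by $\widehat{\mathbf{P}}_i$ and Weyl monotonicity to get simultaneous (same $\mathbf{W}$ for all $i$) eigenvalue domination of $\mathbf{Q}_i$ by $\mathbf{Q}_i'$, and then using right-unitary invariance of $\mathbf{H}$ plus monotonicity of $\log\det$ on the positive semi-definite cone --- each step checks out. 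This genuinely proves the claim whenever the maximization over $i$ and the expectation over $\mathbf{H}$ commute, e.g.\ $|\mathcal{S}|=1$, which recovers the known optimality of the dominant eigenspace for HBaCSI-type (selection-free) beamforming.

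The gap you identify is the real one, and your diagnosis of it is accurate: because the switch acts after observing $\mathbf{H}$, the objective \eqref{eqn_cap_MU_nearopt} is $\mathbb{E}_{\mathbf{H}}\max_i(\cdot)$, whose value depends on the \emph{joint} law of the $|\mathcal{S}|$ log-determinants, not on their marginals; the unitary $\mathbf{V}_i$ that diagonalizes $\mathbf{Q}_i$ differs across $i$, so no single re-coupling of $\mathbf{H}$ reduces all terms to their eigenvalue-only forms simultaneously. It is worth noting why the paper's Appendix~\ref{appdix1} proof of the ${\rm rank}\{\mathbf{R}_{\rm tx}\}\leq D$ case of Theorem~\ref{Th_restrict_dom_eigspace} survives the max while your argument does not: there, the truncation $\mathbf{T}\mapsto\mathbf{E}_{\rm tx}^D{[\mathbf{E}_{\rm tx}^D]}^{\dag}\mathbf{T}$ leaves every instantaneous received signal $\mathbf{y}_m(\mathbf{S}_i)$ \emph{identical} for every realization and every $i$ (the discarded component of $\mathbf{T}$ sees zero channel energy) while only relaxing the power constraint, so the comparison is realization-by-realization and commutes trivially with $\max_i$. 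Your route controls only expectations of the individual terms, which is strictly weaker. So the proposal is a legitimate partial advance consistent with the paper's own status for this statement, but the conjecture remains open after it; the missing ingredient is precisely the joint majorization-type principle for $\mathbb{E}\max_i\log\det$ under simultaneous eigenvalue domination of the coupled tuple $(\mathbf{Q}_1,\ldots,\mathbf{Q}_{|\mathcal{S}|})$ that you name at the end.
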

Intuitively, this conjecture states that if $\mathcal{P}\{\mathbf{T}\}$, and therefore the analog precoding beams, should lie in a channel subspace of dimension $D$, then it should be the dominant $D$ dimensional channel sub-space $\mathcal{P}\{\mathbf{E}^D_{\rm tx}\}$. 
Unfortunately a general proof of this conjecture has eluded us. A proof under some additional conditions and $M=1$ was derived in \cite{Vishnu_Globecom}. While the rest of the results in the paper are exact for ${\rm rank}\{\mathbf{R}_{\rm tx}\} \leq D$, we shall rely on this intuitive, albeit difficult to prove, conjecture when ${\rm rank}\{\mathbf{R}_{\rm tx}\} > D$. 
From Theorem \ref{Th_restrict_dom_eigspace}, Conjecture \ref{conj_dom_eigspace} and \eqref{eqn_cap_MU_nearopt}-\eqref{eqn_channel_KL_expansion_full}, problem \eqref{eqn_prob_formulate} reduces to $\mathbf{T}_{\rm opt} = \mathbf{E}_{\rm tx}^D \hat{\mathbf{T}}_{\rm opt}$, where:
\begin{flalign}
& \hat{\mathbf{T}}_{\rm opt} = {\rm argmax}_{\hat{\mathbf{T}} \in \mathbb{C}^{D\times L}} \left \{ \underbar{C}^{D}(\hat{\mathbf{T}}) \right \}, \label{eqn_opt_prbm_D} & \\
& \underbar{C}^{D}(\hat{\mathbf{T}}) \triangleq \underbar{C}(\mathbf{E}_{\rm tx}^D  \hat{\mathbf{T}}) = \mathbb{E}_{\mathbf{H}^{D}} \bigg\{ \max_{1 \leq i \leq |\mathcal{S}|} \log \Big| \mathbb{I}_M + & \nonumber \\
&\quad \frac{\rho}{M} \mathbf{R}_{\rm rx}^{1/2} \mathbf{H}^{D} {[\boldsymbol{\Lambda}^D_{\rm tx}]}^{1/2} \hat{\mathbf{T}} \mathbf{S}_i \hat{\mathbf{G}}_i  \hat{\mathbf{G}}^{\dag}_i \mathbf{S}^{\dag}_i \hat{\mathbf{T}}^{\dag} {[\boldsymbol{\Lambda}^D_{\rm tx}]}^{1/2} {[\mathbf{H}^{D}]}^{\dag} \mathbf{R}_{\rm rx}^{1/2} \Big| \bigg\}, \!\!\!\!\!\!\! \label{eqn_def_fobjD} &
\end{flalign}
where $\boldsymbol{\Lambda}^D_{\rm tx}$ is the $D \times D$ principal submatrix of $\boldsymbol{\Lambda}_{\rm tx}$, $\mathbf{H}^D$ is a $M \times D$ matrix with i.i.d. $\mathcal{CN}(0, 1)$ entries and $\hat{\mathbf{G}}_i$ ortho-normalizes columns of $\hat{\mathbf{T}}\mathbf{S}_i$. 
Henceforth we shall restrict to finding the optimal solution $\hat{\mathbf{T}}_{\rm opt}$ to \eqref{eqn_opt_prbm_D}, since $\mathbf{T}_{\rm opt}$ can be found in a straightforward way from it. In fact expressing $\mathbf{T}_{\rm opt}$ as $\mathbf{E}_{\rm tx}^D \hat{\mathbf{T}}_{\rm opt}$ may also help reduce the hardware cost for implementing the analog beamforming matrix, as shall be shown later in Section \ref{sec_hardware_pp}. To prevent any confusion, we shall refer to $\hat{\mathbf{T}}_{\rm opt}$ as the reduced dimensional (RD) beamformer. Though \eqref{eqn_opt_prbm_D} reduces the search space from $\mathbb{C}^{N \times L}$ to $\mathbb{C}^{D \times L}$, it is still unbounded. This problem is remedied by the following theorem. 

\begin{theorem}[Bounding the search space] \label{Th_bound_eigspace}
For any $\hat{\mathbf{T}} \in \mathbb{C}^{D\times L}$, both $\hat{\mathbf{T}}$ and $\hat{\mathbf{T}} \boldsymbol{\Lambda}_{\theta}$ attain the same hSNR sum capacity \eqref{eqn_def_fobjD}, where $\boldsymbol{\Lambda}_{\theta}$ is any arbitrary $L \times L$ complex diagonal matrix. 
\end{theorem}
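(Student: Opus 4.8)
The plan is to show that the objective $\underbar{C}^{D}(\hat{\mathbf{T}})$ in \eqref{eqn_def_fobjD} depends on the beamformer only through the $D\times D$ matrices $\hat{\mathbf{T}}\mathbf{S}_i\hat{\mathbf{G}}_i\hat{\mathbf{G}}^{\dag}_i\mathbf{S}^{\dag}_i\hat{\mathbf{T}}^{\dag}$, one per switch position, and that each of these is invariant under the column rescaling $\hat{\mathbf{T}}\mapsto\hat{\mathbf{T}}\boldsymbol{\Lambda}_{\theta}$. First I would note that, since $\hat{\mathbf{G}}_i$ orthonormalizes the columns of $\hat{\mathbf{T}}\mathbf{S}_i$, the matrix $\mathbf{Q}_i\triangleq\hat{\mathbf{T}}\mathbf{S}_i\hat{\mathbf{G}}_i$ satisfies $\mathbf{Q}_i^{\dag}\mathbf{Q}_i=\mathbb{I}_K$; as $\hat{\mathbf{G}}_i$ is an invertible $K\times K$ matrix, the columns of $\mathbf{Q}_i$ form an orthonormal basis of $\mathcal{P}\{\hat{\mathbf{T}}\mathbf{S}_i\}$, so $\hat{\mathbf{T}}\mathbf{S}_i\hat{\mathbf{G}}_i\hat{\mathbf{G}}^{\dag}_i\mathbf{S}^{\dag}_i\hat{\mathbf{T}}^{\dag}=\mathbf{Q}_i\mathbf{Q}_i^{\dag}$ is the orthogonal projector onto that subspace. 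This identifies the relevant quantity as something determined solely by the column space $\mathcal{P}\{\hat{\mathbf{T}}\mathbf{S}_i\}$.

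Next I would translate the scaling into the selected submatrices. Because each $\mathbf{S}_i$ is a submatrix of $\mathbb{I}_L$ that merely extracts $K$ of the $L$ columns, a diagonal matrix passes through it as $\boldsymbol{\Lambda}_{\theta}\mathbf{S}_i=\mathbf{S}_i\mathbf{D}_i$, where $\mathbf{D}_i$ is the $K\times K$ diagonal matrix collecting the diagonal entries of $\boldsymbol{\Lambda}_{\theta}$ at the indices selected by $\mathbf{S}_i$. Hence $(\hat{\mathbf{T}}\boldsymbol{\Lambda}_{\theta})\mathbf{S}_i=(\hat{\mathbf{T}}\mathbf{S}_i)\mathbf{D}_i$, i.e. the same $K$ columns scaled by $\mathbf{D}_i$. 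Choosing $\hat{\mathbf{G}}'_i\triangleq\mathbf{D}_i^{-1}\hat{\mathbf{G}}_i$ as the orthonormalizer of $(\hat{\mathbf{T}}\boldsymbol{\Lambda}_{\theta})\mathbf{S}_i$, the cancellation is immediate:
\[ (\hat{\mathbf{T}}\boldsymbol{\Lambda}_{\theta})\mathbf{S}_i\hat{\mathbf{G}}'_i=(\hat{\mathbf{T}}\mathbf{S}_i)\mathbf{D}_i\mathbf{D}_i^{-1}\hat{\mathbf{G}}_i=\hat{\mathbf{T}}\mathbf{S}_i\hat{\mathbf{G}}_i=\mathbf{Q}_i, \]
whose columns are orthonormal, confirming that $\hat{\mathbf{G}}'_i$ is a legitimate orthonormalizer and that the projector $\mathbf{Q}_i\mathbf{Q}_i^{\dag}$ is literally unchanged. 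Substituting back into \eqref{eqn_def_fobjD}, the $\hat{\mathbf{T}}$-dependent factor inside each log-determinant agrees for $\hat{\mathbf{T}}$ and $\hat{\mathbf{T}}\boldsymbol{\Lambda}_{\theta}$ simultaneously for every index $i$ and every realization of $\mathbf{H}^D$; consequently the inner $\max_{i}$ and the outer expectation coincide, yielding $\underbar{C}^{D}(\hat{\mathbf{T}}\boldsymbol{\Lambda}_{\theta})=\underbar{C}^{D}(\hat{\mathbf{T}})$.

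The only point needing care—and the main obstacle to a completely unqualified statement—is the degenerate case where $\boldsymbol{\Lambda}_{\theta}$ has a zero diagonal entry: then some $\mathbf{D}_i$ is singular, a selected column of $\hat{\mathbf{T}}\boldsymbol{\Lambda}_{\theta}$ vanishes, and the standing assumption that $\hat{\mathbf{T}}\mathbf{S}_i$ has linearly independent columns fails, so $\hat{\mathbf{G}}'_i=\mathbf{D}_i^{-1}\hat{\mathbf{G}}_i$ is undefined. I would therefore carry out the argument for nonsingular $\boldsymbol{\Lambda}_{\theta}$—precisely the regime in which $(\hat{\mathbf{T}}\boldsymbol{\Lambda}_{\theta})\mathbf{S}_i$ retains full column rank and the model of Section~\ref{sec_chan_model} applies—after which the invariance holds verbatim. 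Everything else reduces to the elementary commutation $\boldsymbol{\Lambda}_{\theta}\mathbf{S}_i=\mathbf{S}_i\mathbf{D}_i$ and the cancellation above, with no further subtlety.
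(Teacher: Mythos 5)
Your proposal is correct and follows essentially the same route as the paper's proof in Appendix~\ref{appdix2}: your orthonormalizer $\hat{\mathbf{G}}'_i=\mathbf{D}_i^{-1}\hat{\mathbf{G}}_i$ is exactly the paper's $\hat{\mathbf{G}}_{i\theta}=\mathbf{S}_i^{\dag}\boldsymbol{\Lambda}_{\theta}^{-1}\mathbf{S}_i\hat{\mathbf{G}}_i$, and your commutation identity $\boldsymbol{\Lambda}_{\theta}\mathbf{S}_i=\mathbf{S}_i\mathbf{D}_i$ plays the role of the paper's observations that $\mathbf{S}_i\mathbf{S}_i^{\dag}$ commutes with $\boldsymbol{\Lambda}_{\theta}$ and $\mathbf{S}_i^{\dag}\mathbf{S}_i=\mathbb{I}_K$. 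Your caveat about singular $\boldsymbol{\Lambda}_{\theta}$ is well taken --- the paper's own proof likewise quietly restricts to non-singular diagonal matrices despite the theorem's wording --- and is in any case the only regime ever used, since the $\boldsymbol{\Lambda}_{\theta}$ constructed after the theorem has nonzero diagonal entries.
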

\begin{proof}
See Appendix \ref{appdix2}.
\end{proof}
From Theorem \ref{Th_bound_eigspace}, by replacing $\hat{\mathbf{T}}$ by $\hat{\mathbf{T}}_{\theta} = \hat{\mathbf{T}} \boldsymbol{\Lambda}_{\theta}$ in \eqref{eqn_opt_prbm_D}, where: 
\begin{eqnarray}
{[\Lambda_\theta]}_{\ell,\ell} = \frac{{[\hat{\mathbf{T}}_{1,\ell}]}^{\dag}}{ {\big|{[\hat{\mathbf{T}}]}_{{\rm c}\{\ell\}} \big|} |\hat{\mathbf{T}}_{1,\ell}|} \ \ \forall 1 \leq \ell \leq L, \nonumber
\end{eqnarray}
the optimal RD-beamformer design problem can be reduced to:
\begin{flalign}
& \hat{\mathbf{T}}_{\rm opt} = {\rm argmax}_{\hat{\mathbf{T}} \in \mathcal{T}_\mathcal{G}} \left \{ \underbar{C}^{D}(\hat{\mathbf{T}}) \right \} \ \ \text{where,} \label{eqn_equiv_prob_formulate2} & \\
& \ \mathcal{T}_{\mathcal{G}} = \left\{ \hat{\mathbf{T}} \in \mathbb{C}^{D\times L} \Big| \big|{[\hat{\mathbf{T}}]}_{{\rm c}\{\ell\}}\big| = 1, {\rm Im}\{\hat{\mathbf{T}}_{\ell,1}\} = 0 \ \forall \ell = 1,...,L \right\}, \nonumber &
\end{flalign}
where, ${\rm Im}\{\}$ represents the imaginary component. The constraints in \eqref{eqn_equiv_prob_formulate2} aid in resolving the ambiguities of $\hat{\mathbf{T}}$, similar to the approach used in \cite{Bjornson2014}. Note that since the hSNR sum capacity $\underbar{C}^{D}(\hat{\mathbf{T}})$ is invariant to complex scaling of the columns of $\hat{\mathbf{T}}$, each column ${[\hat{\mathbf{T}}]}_{{\rm c}\{\ell\}}$ for $1 \leq \ell \leq L$ is representative of $\mathcal{P}\{{[\hat{\mathbf{T}}]}_{{\rm c}\{\ell\}}\}$, i.e., it represents a point on the complex Grassmannian manifold $\mathcal{G}(D,1)$. For $a \geq b > 0$, the complex Grassmannian manifold $\mathcal{G}(a,b)$ is the set of all linear sub-spaces of dimension $b$ in $\mathbb{C}^{a \times 1}$. Therefore, \eqref{eqn_equiv_prob_formulate2} is actually an optimization problem over the complex Grassmannian manifold $\mathcal{G}(D,1)$.

\section{Lower bound on the objective function} \label{sec_approx_obj_fn}
Though transformations to the search space were introduced in the previous section to reduce the search complexity, the hSNR sum capacity $\underbar{C}^{D}(\hat{\mathbf{T}})$ is not in closed form. A closed-form lower bound to $\underbar{C}^{D}(\hat{\mathbf{T}})$ for the case of $M=1$ was considered in \cite{Vishnu_ICC2017} which was shown to be maximized by Grassmannian line packing the columns of $\hat{\mathbf{T}}$. However, this bound is independent of the switch position set $\mathcal{S}$ and cannot be generalized to $M > 1$. Similarly, another approximation to $\underbar{C}^{D}(\hat{\mathbf{T}})$ can be obtained via the work on restricted precoding \cite{Vishnu_jrnl1}. Though this approximation eliminates the need for taking an expectation as in \eqref{eqn_def_fobjD}, it has to be computed recursively and is accurate only when $D,K \gg M$. In contrast, in this section we find a closed-form lower bound to the sum capacity that depends on $\mathcal{S}$. Henceforth, for ease of analysis, we assume $\boldsymbol{\Lambda}_{\rm tx}^{D} = \mathbb{I}_D$.\footnote{Any constant scaling factor in $\boldsymbol{\Lambda}_{\rm tx}^{D}$ is included into $\rho$, without loss of generality.} Extension to more generic channels is considered later in Section \ref{sec_anisotropy}. 

For any $a \geq b > 0$, we define the complex Stiefel manifold $\mathcal{U}(a,b)$ as the set of all $a \times b$ matrices with ortho-normal columns. We shall refer to such matrices as semi-unitary matrices. For $\mathbf{A}, \mathbf{B} \in \mathcal{U}(a,b)$, we further define the `Fubini-Study distance' function as: 
\begin{eqnarray}
d_{\rm FS} \big(\mathbf{A}, \mathbf{B} \big) = \arccos \sqrt{|\mathbf{A}^{\dag} \mathbf{B} \mathbf{B}^{\dag} \mathbf{A}|}. \label{eqn_define_FS}
\end{eqnarray} 
Here $d_{\rm FS} (\mathbf{A}, \mathbf{B})$ is not a distance measure between $\mathbf{A}$ and $\mathbf{B}$, but rather a distance measure between $\mathcal{P}\{\mathbf{A}\}$ and $\mathcal{P}\{\mathbf{B}\}$ on $\mathcal{G}(a,b)$. 
For ease of notation, we further define $\mathbf{Q}_i \triangleq \hat{\mathbf{T}}\mathbf{S}_i \hat{\mathbf{G}}_i$ for each selection matrix $\mathbf{S}_i \in \mathcal{S}$. Note that $\mathbf{Q}_i \in\mathcal{U}(D,K)$ and $\mathcal{P}\{\mathbf{Q}_i\} \in \mathcal{G}(D,K)$ for all $i=1,..,|\mathcal{S}|$. 
We then have the following lemma:
\begin{lemma}[Higher dimension lower bound] \label{Th_subopt_selection}
If $\boldsymbol{\Lambda}^D_{\rm tx} = \mathbb{I}_D$, we have $\underbar{C}^{D}(\hat{\mathbf{T}}) \geq C^{D}_{\rm LB1}(\hat{\mathbf{T}})$ where: 
\begin{eqnarray}
C^{D}_{\rm LB1}(\hat{\mathbf{T}}) \triangleq \mathbb{E}_{\mathbf{H}^D} \mathbb{E}_{\mathbf{V}} \left\{ \max_{1 \leq i \leq |\mathcal{S}|} \log \left[ 1 + \alpha\left| \mathbf{V}^{\dag} \mathbf{Q}_i \mathbf{Q}^{\dag}_i \mathbf{V} \right| \right] \right\}, \label{eqn_th3_LB2}
\end{eqnarray}
$\alpha = {\left(\frac{\rho}{M} \right)}^M \big|\mathbf{R}_{\rm rx} \big| \big| \mathbf{H}^D {[\mathbf{H}^D]}^{\dag} \big|$, $\mathbf{H}^D$ is as defined in \eqref{eqn_def_fobjD} and $\mathbf{V}$ is a random matrix uniformly distributed over $\mathcal{U}(D,K)$, independent of $\mathbf{H}^D$.
\end{lemma}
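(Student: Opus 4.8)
The plan is to derive \eqref{eqn_th3_LB2} from the $M\times M$ log-determinant in \eqref{eqn_def_fobjD} through a chain of three relaxations, each of which only decreases the objective, so that the chain terminates at $C^{D}_{\rm LB1}$. Throughout I would set $\boldsymbol{\Lambda}^D_{\rm tx}=\mathbb{I}_D$ as assumed, abbreviate $\mathbf{P}_i\triangleq\mathbf{Q}_i\mathbf{Q}^{\dag}_i$ for the rank-$K$ orthogonal projector onto $\mathcal{P}\{\mathbf{Q}_i\}$, and freeze the outer $\mathbb{E}_{\mathbf{H}^D}$ and $\max_i$, bounding the integrand pointwise in $\mathbf{H}^D$ and $i$. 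Since $\max_i$ preserves pointwise inequalities and $\mathbb{E}$ is monotone, it suffices to bound each summand and reassemble at the end.

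First I would relax the determinant. For any Hermitian positive semi-definite $M\times M$ matrix $\mathbf{X}$ with eigenvalues $\lambda_1,\dots,\lambda_M\ge 0$, the expansion $\prod_j(1+\lambda_j)=\sum_{\mathcal{A}\subseteq\{1,\dots,M\}}\prod_{j\in\mathcal{A}}\lambda_j$ contains the empty-set term $1$ and the full-set term $\prod_j\lambda_j=|\mathbf{X}|$, together with further non-negative terms, so $|\mathbb{I}_M+\mathbf{X}|\ge 1+|\mathbf{X}|$. Applying this with $\mathbf{X}=\frac{\rho}{M}\mathbf{R}_{\rm rx}^{1/2}\mathbf{H}^D\mathbf{P}_i[\mathbf{H}^D]^{\dag}\mathbf{R}_{\rm rx}^{1/2}$ and using multiplicativity of the determinant yields $\log|\mathbb{I}_M+\mathbf{X}|\ge\log\big(1+(\rho/M)^M|\mathbf{R}_{\rm rx}|\,|\mathbf{H}^D\mathbf{P}_i[\mathbf{H}^D]^{\dag}|\big)$.

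Next I would isolate the isotropic direction of $\mathbf{H}^D$. Since $M\le D$, a Gaussian $\mathbf{H}^D$ factors as $\mathbf{H}^D=\mathbf{B}\mathbf{W}^{\dag}$, where $\mathbf{B}$ is an invertible $M\times M$ ``size'' factor (the lower-triangular $\mathbf{L}$ from an LQ factorization, or $\mathbf{U}\boldsymbol{\Sigma}$ from the SVD) and $\mathbf{W}\in\mathcal{U}(D,M)$ is an orientation factor that is Haar-distributed on $\mathcal{U}(D,M)$ and independent of $\mathbf{B}$. Then $\mathbf{H}^D[\mathbf{H}^D]^{\dag}=\mathbf{B}\mathbf{B}^{\dag}$ and $\mathbf{H}^D\mathbf{P}_i[\mathbf{H}^D]^{\dag}=\mathbf{B}\mathbf{W}^{\dag}\mathbf{P}_i\mathbf{W}\mathbf{B}^{\dag}$, so $|\mathbf{H}^D\mathbf{P}_i[\mathbf{H}^D]^{\dag}|=|\mathbf{H}^D[\mathbf{H}^D]^{\dag}|\,|\mathbf{W}^{\dag}\mathbf{P}_i\mathbf{W}|$ and the bound becomes $\log(1+\alpha|\mathbf{W}^{\dag}\mathbf{Q}_i\mathbf{Q}^{\dag}_i\mathbf{W}|)$ with exactly the $\alpha$ in the statement. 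Independence of $\mathbf{W}$ from $\mathbf{B}$ (hence from $\alpha$) lets the outer expectation split as $\mathbb{E}_{\alpha}\mathbb{E}_{\mathbf{W}}$ with $\mathbf{W}$ uniform on $\mathcal{U}(D,M)$.

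The hard part will be the last step: lifting the subspace dimension from the $M$-frame $\mathbf{W}\in\mathcal{U}(D,M)$ to the $K$-frame $\mathbf{V}\in\mathcal{U}(D,K)$ of \eqref{eqn_th3_LB2}, with the comparison pointing the right way. I would couple the two by drawing $\mathbf{V}$ Haar on $\mathcal{U}(D,K)$ and taking $\mathbf{W}$ to be its first $M$ columns, which is marginally Haar on $\mathcal{U}(D,M)$ and thus matches the previous step. Writing $\mathbf{V}=[\,\mathbf{W}\ \mathbf{W}'\,]$, the Hermitian PSD matrix $\mathbf{V}^{\dag}\mathbf{P}_i\mathbf{V}$ has diagonal blocks $\mathbf{W}^{\dag}\mathbf{P}_i\mathbf{W}$ and $(\mathbf{W}')^{\dag}\mathbf{P}_i\mathbf{W}'$, so Fischer's inequality gives $|\mathbf{V}^{\dag}\mathbf{P}_i\mathbf{V}|\le|\mathbf{W}^{\dag}\mathbf{P}_i\mathbf{W}|\cdot|(\mathbf{W}')^{\dag}\mathbf{P}_i\mathbf{W}'|$; and since $\mathbf{P}_i\preceq\mathbb{I}_D$ is a projector, $(\mathbf{W}')^{\dag}\mathbf{P}_i\mathbf{W}'\preceq(\mathbf{W}')^{\dag}\mathbf{W}'=\mathbb{I}_{K-M}$, whence its determinant is $\le 1$ and $|\mathbf{V}^{\dag}\mathbf{Q}_i\mathbf{Q}^{\dag}_i\mathbf{V}|\le|\mathbf{W}^{\dag}\mathbf{Q}_i\mathbf{Q}^{\dag}_i\mathbf{W}|$ pointwise in $i$. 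Because $x\mapsto\log(1+\alpha x)$ is non-decreasing and $\max_i$ is monotone, this inequality survives the $\max_i$; taking $\mathbb{E}_{\mathbf{V}}$ at fixed $\alpha$ and then $\mathbb{E}_{\alpha}=\mathbb{E}_{\mathbf{H}^D}$ (valid since $\mathbf{V}$ is independent of $\mathbf{H}^D$) assembles exactly $C^{D}_{\rm LB1}(\hat{\mathbf{T}})$ and closes the chain $\underbar{C}^{D}(\hat{\mathbf{T}})\ge C^{D}_{\rm LB1}(\hat{\mathbf{T}})$. I anticipate that the determinant and factorization steps are routine; the only other point needing a line of care is the Gaussian size/orientation independence, namely the standard fact that the right-singular (equivalently QR orientation) frame of an i.i.d.\ Gaussian matrix is Haar-distributed and independent of its singular values.
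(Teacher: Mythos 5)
Your proof is correct and follows essentially the same route as the paper's: the $|\mathbb{I}_M+\mathbf{X}|\ge 1+|\mathbf{X}|$ relaxation, the size/orientation factorization of the Gaussian $\mathbf{H}^D$ with a Haar-distributed orientation frame, and the coupling of that $M$-frame to the first $M$ columns of a Haar $K$-frame $\mathbf{V}$. The only cosmetic difference is the final determinant comparison, where you invoke Fischer's inequality together with $\mathbf{Q}_i\mathbf{Q}_i^{\dag}\preceq\mathbb{I}_D$, while the paper uses Cauchy interlacing plus the fact that all eigenvalues of $\mathbf{V}^{\dag}\mathbf{Q}_i\mathbf{Q}_i^{\dag}\mathbf{V}$ lie in $[0,1]$; both yield the same pointwise bound.
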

\begin{proof}
See Appendix \ref{appdix3}
\end{proof}
Note that in \eqref{eqn_def_fobjD}, each $\mathbf{Q}_i$ is associated with a corresponding selection region: $\big\{ \mathbf{H}^D \in \mathbb{C}^{M\times D} \big| i  = {\rm argmax}_{1\leq j \leq |\mathcal{S}|} | \mathbb{I}_M + \frac{\rho}{M} \mathbf{R}_{\rm rx}^{1/2} \mathbf{H}^D \mathbf{Q}_j \mathbf{Q}^{\dag}_j {[\mathbf{H}^D]}^{\dag} \mathbf{R}_{\rm rx}^{1/2} | \big\}$. Essentially, Lemma \ref{Th_subopt_selection} finds a lower bound where these selection regions are changed to $\big\{ \mathbf{V} \in \mathcal{U}(D,K) \big| i  = {\rm argmax}_{1\leq j \leq |\mathcal{S}|} | \mathbf{V}^{\dag} \mathbf{Q}_j \mathbf{Q}^{\dag}_j \mathbf{V}| \big\}$. These regions are easier to bound than those in \eqref{eqn_def_fobjD}, as exploited by the following theorem. 
%
\begin{theorem}[Fubini-Study lower bound] \label{Th_grassmann_tighter}
If $\boldsymbol{\Lambda}^D_{\rm tx} = \mathbb{I}_D$ and $D \gg 1$, we have $\underbar{C}^{D}(\hat{\mathbf{T}}) \geq C^{D}_{\rm LB}(\hat{\mathbf{T}})$, where:
\begin{subequations}
\begin{flalign}
& C^{D}_{\rm LB}(\hat{\mathbf{T}}) \triangleq |\mathcal{S}| {\left( \frac{1-\cos^{2/K}(\delta/2)}{K} \right)}^{D K + \epsilon} \!\!\!\!\!\!\!\!\!\![\beta + \log \cos^2(\delta/2)], \label{eqn_grassmann_tighter1} & \\
& \qquad \ \delta = \min_{i \neq j} d_{\rm FS}(\mathbf{Q}_i,\mathbf{Q}_j) \triangleq f_{\rm FS}(\hat{\mathbf{T}}), & \\
& \qquad \ \beta = M \log \left(\frac{\rho}{M} \right) \!+\! \log\left| \mathbf{R}_{\rm rx} \right| + \!\! \sum_{m=1}^M \psi\left(D-m+1\right), \!\!\!\!\label{eqn_def_beta} &
\end{flalign}
\end{subequations}
$\psi()$ being the digamma function and $\epsilon = o(D)$ i.e., $\lim_{D \rightarrow \infty} \epsilon/D = 0$. Furthermore, if $\beta \geq 2$:
\begin{eqnarray}
{\rm argmax}_{\hat{\mathbf{T}} \in \mathcal{T}_{\mathcal{G}}} \left \{ C^{D}_{\rm LB}(\hat{\mathbf{T}}) \right \} \equiv {\rm argmax}_{\hat{\mathbf{T}} \in \mathcal{T}_{\mathcal{G}}} \left \{ f_{\rm FS}(\hat{\mathbf{T}}) \right \}. \label{eqn_grassmann_tighter2}
\end{eqnarray}
\end{theorem}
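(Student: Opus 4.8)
The plan is to start from the intermediate bound of Lemma~\ref{Th_subopt_selection} and turn the random-quantization integral in \eqref{eqn_th3_LB2} into a Grassmannian packing quantity. Since both $\mathbf{V}$ and $\mathbf{Q}_i$ lie in $\mathcal{U}(D,K)$, the definition \eqref{eqn_define_FS} gives the exact identity $|\mathbf{V}^{\dag}\mathbf{Q}_i\mathbf{Q}_i^{\dag}\mathbf{V}| = \cos^2 d_{\rm FS}(\mathbf{V},\mathbf{Q}_i)$, so that $C^{D}_{\rm LB1}(\hat{\mathbf{T}}) = \mathbb{E}_{\mathbf{H}^D}\mathbb{E}_{\mathbf{V}}\{\max_i \log[1+\alpha\cos^2 d_{\rm FS}(\mathbf{V},\mathbf{Q}_i)]\}$, i.e.\ a nearest-neighbour quantization of a uniform point $\mathbf{V}$ by the codebook $\{\mathcal{P}\{\mathbf{Q}_i\}\}$ on $\mathcal{G}(D,K)$ under the Fubini--Study metric. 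Fixing $\mathbf{H}^D$ (hence $\alpha\ge0$), I would lower-bound the inner integral over $\mathbf{V}$ by the standard metric-ball device: the balls $\mathcal{B}_i=\{\mathbf{V}: d_{\rm FS}(\mathbf{V},\mathbf{Q}_i)\le\delta/2\}$ with $\delta=\min_{i\ne j}d_{\rm FS}(\mathbf{Q}_i,\mathbf{Q}_j)=f_{\rm FS}(\hat{\mathbf{T}})$ are pairwise disjoint (triangle inequality for $d_{\rm FS}$), the integrand is nonnegative, and on $\mathcal{B}_i$ one has $\max_j\cos^2 d_{\rm FS}(\mathbf{V},\mathbf{Q}_j)\ge\cos^2(\delta/2)$. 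Restricting the integral to $\bigcup_i\mathcal{B}_i$ then yields $\mathbb{E}_{\mathbf{V}}\{\cdots\}\ge |\mathcal{S}|\,\mathrm{Vol}(\delta/2)\,\log[1+\alpha\cos^2(\delta/2)]$, where $\mathrm{Vol}(\delta/2)$ is the (measure-invariant, hence $i$-independent) normalized volume of a radius-$\delta/2$ ball on $\mathcal{G}(D,K)$.

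The key technical ingredient---and the step I expect to be the main obstacle---is an explicit expression for this ball volume, $\mathrm{Vol}(r)=c_{D,K}\big(\tfrac{1-\cos^{2/K}(r)}{K}\big)^{K(D-K)}$, with $c_{D,K}>0$ the $r$-independent normalization. A consistency check is $K=1$, where the cap volume on $\mathbb{CP}^{D-1}$ is exactly $\sin^{2(D-1)}r=(1-\cos^2 r)^{D-1}$, matching the formula with $c_{D,1}=1$. Since $K(D-K)=DK-K^2$ with $K^2$ constant in $D$ and $c_{D,K}$ of order one, folding these into a correction $\epsilon=o(D)$ puts the bound in the stated form $|\mathcal{S}|\big(\tfrac{1-\cos^{2/K}(\delta/2)}{K}\big)^{DK+\epsilon}$; this is where the hypothesis $D\gg1$ is used. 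Pinning down the leading exponent for the \emph{Fubini--Study} ball (rather than the more commonly tabulated chordal ball) is the delicate part.

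It remains to take $\mathbb{E}_{\mathbf{H}^D}$. Because $\alpha\ge0$ makes $\log[1+\alpha\cos^2(\delta/2)]\ge0$, I can pull the nonnegative volume factor out and use $\log(1+x)\ge\log x$ to get $\mathbb{E}_{\mathbf{H}^D}\{\log[1+\alpha\cos^2(\delta/2)]\}\ge\mathbb{E}_{\mathbf{H}^D}\{\log\alpha\}+\log\cos^2(\delta/2)$. Expanding $\log\alpha$ from the definition in Lemma~\ref{Th_subopt_selection} and using the complex-Wishart identity $\mathbb{E}\{\log|\mathbf{H}^D(\mathbf{H}^D)^{\dag}|\}=\sum_{m=1}^M\psi(D-m+1)$ gives $\mathbb{E}_{\mathbf{H}^D}\{\log\alpha\}=M\log(\rho/M)+\log|\mathbf{R}_{\rm rx}|+\sum_{m=1}^M\psi(D-m+1)=\beta$, exactly as in \eqref{eqn_def_beta}. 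Assembling the three factors reproduces \eqref{eqn_grassmann_tighter1}, and chaining with Lemma~\ref{Th_subopt_selection} proves $\underbar{C}^{D}(\hat{\mathbf{T}})\ge C^{D}_{\rm LB}(\hat{\mathbf{T}})$.

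For the equivalence \eqref{eqn_grassmann_tighter2}, observe that $C^{D}_{\rm LB}$ depends on $\hat{\mathbf{T}}$ only through $\delta=f_{\rm FS}(\hat{\mathbf{T}})$ and that any $\hat{\mathbf{T}}$-independent positive constant is immaterial to the argmax; it therefore suffices to show $C^{D}_{\rm LB}$ is strictly increasing in $\delta$ on $(0,\pi/2]$. Substituting $c=\cos^2(\delta/2)\in[1/2,1)$ reduces this to showing $g(c)=\big(\tfrac{1-c^{1/K}}{K}\big)^{DK+\epsilon}(\beta+\log c)$ is strictly decreasing in $c$; differentiating, the condition $g'(c)<0$ simplifies to $\tfrac{K(1-c^{1/K})}{(DK+\epsilon)c^{1/K}}<\beta+\log c$. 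The left-hand side is $O(1/D)$ and vanishes as $D\to\infty$, while $\beta+\log c\ge\beta-\log2\ge2-\log2>0$ whenever $\beta\ge2$; hence for $D\gg1$ the inequality holds throughout the range, $C^{D}_{\rm LB}$ is monotone in $\delta$, and the two argmax sets in \eqref{eqn_grassmann_tighter2} coincide.
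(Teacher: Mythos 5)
Your proposal is correct and follows essentially the same route as the paper: lower-bounding $C^{D}_{\rm LB1}$ by restricting to disjoint Fubini--Study balls of radius $\delta/2$ around the $\mathbf{Q}_i$'s, invoking the Grassmannian ball-volume asymptotic (which the paper imports from the packing-density results of Love et al.\ and Barg--Nogin rather than deriving), applying $\log(1+x)\ge\log x$ with the complex-Wishart log-determinant identity to obtain $\beta$, and establishing \eqref{eqn_grassmann_tighter2} via the same derivative condition with $\cos^2(\delta/2)\ge 1/2$ and $\beta\ge 2$. The only cosmetic difference is that you make the disjointness of the balls and the volume formula explicit where the paper cites them.
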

\begin{proof}
Let us define $\delta \triangleq \min_{i \neq j} d_{\rm FS}(\mathbf{Q}_i, \mathbf{Q}_j)$, and $\mathcal{Q}_i \subseteq \mathcal{G}(D,K)$ for $1 \leq i \leq |\mathcal{S}|$ as:
\begin{eqnarray}
\mathcal{Q}_i & \triangleq & \left \{ \mathcal{P}\{\mathbf{W}\} \Big| \mathbf{W} \in \mathcal{U}(D,K), d_{\rm FS}( \mathbf{W}, \mathbf{Q}_i) < \delta/2 \right \} \nonumber \\
&=& \left \{ \mathcal{P}\{\mathbf{W}\} \Big| \mathbf{W} \in \mathcal{U}(D,K), |\mathbf{W}^{\dag} \mathbf{Q}_i \mathbf{Q}_i^{\dag} \mathbf{W}| > \cos^2(\delta/2) \right \}. \nonumber
\end{eqnarray}
Now consider $\mathbf{V}$ uniformly distributed over $\mathcal{U}(D,K)$ as in Lemma \ref{Th_subopt_selection}. Since both $\mathbf{V}$ and $\mathbf{Q}_i$ are semi-unitary, we have $0 \leq |\mathbf{V}^{\dag} \mathbf{Q}_i \mathbf{Q}_i^{\dag} \mathbf{V}| \leq 1$ \cite{Roger1990}. By pessimistically assuming that $|\mathbf{V}^{\dag} \mathbf{Q}_i \mathbf{Q}_i^{\dag} \mathbf{V}| = 0$ when $\mathcal{P}\{\mathbf{V}\} \notin \bigcup_{i} \mathcal{Q}_i$ and $|\mathbf{V}^{\dag} \mathbf{Q}_i \mathbf{Q}_i^{\dag} \mathbf{V}| = \cos(\delta/2)$ when $\mathcal{P}\{\mathbf{V}\} \in \bigcup_i \mathcal{Q}_i$, we can lower bound $C^{D}_{\rm LB1}(\hat{\mathbf{T}})$ in \eqref{eqn_th3_LB2} as:
\begin{eqnarray}
C^{D}_{\rm LB1}(\hat{\mathbf{T}}) & \geq & \mathbb{P} \Big( \mathcal{P}\{\mathbf{V}\} \in \bigcup_i \mathcal{Q}_i \Big) \mathbb{E}_{\mathbf{H}^D} \big\{ \log [\alpha\cos^2(\delta/2)] \big\} \nonumber \\
&=& \mathbb{P} \Big( \mathcal{P}\{\mathbf{V}\} \in \bigcup_i \mathcal{Q}_i \Big) \left[ \beta + \log \cos^2(\delta/2) \right],  \label{eqn_FSnorm_2}
\end{eqnarray}
where $\beta$ is given by \eqref{eqn_def_beta} and follows from the results on log-determinant of a Wishart matrix \cite{Goodman1963}.\footnote{Note that $\big|\mathbf{H}^D {[\mathbf{H}^D]}^{\dag} \big|$ is the determinant of a $M \times M$ complex Wishart matrix with $D$ degrees of freedom.} 
Since $\mathbf{V}$ is uniformly distributed over $\mathcal{U}(D,K)$, based on results in \cite{Love_SMux, Barg_grassmann}, we have:
\begin{eqnarray}
\mathbb{P} \Big( \mathcal{P}\{\mathbf{V}\} \in \bigcup_i \mathcal{Q}_i \Big) \geq |\mathcal{S}| {\left( \frac{1-\cos^{2/K}(\delta/2)}{K} \right)}^{D K + \epsilon}, \label{eqn_packing_density}
\end{eqnarray}
where $\epsilon = o(D)$. Using \eqref{eqn_FSnorm_2}--\eqref{eqn_packing_density} and Lemma \ref{Th_subopt_selection}, we arrive at \eqref{eqn_grassmann_tighter1}. 

Note that $\hat{\mathbf{T}}$ affects $C^{D}_{\rm LB}(\hat{\mathbf{T}})$ only via the term $\delta$ (for a fixed $L$). Therefore, if the partial derivative of $C^{D}_{\rm LB}(\hat{\mathbf{T}})$ with respect to $\delta$ is non-negative, then maximizing $\delta$ is equivalent to maximizing $C^{D}_{\rm LB}(\hat{\mathbf{T}})$. The required condition can be found as $\frac{\partial C^{D}_{\rm LB}(\hat{\mathbf{T}})}{ \partial \cos^2(\delta/2) } \leq 0$ i.e.,
\begin{eqnarray}
(DK \!\!+\! \epsilon) \frac{\cos^{2/K}(\delta/2)}{K} [\beta \!+\! \log \cos^{2}(\delta/2)] \geq 1 \!-\! \cos^{2/K}(\delta/2). \label{eqn_123}
\end{eqnarray}
Since $ \cos(\delta) = \min_{i\neq j} \sqrt{|\mathbf{Q}_i^{\dag} \mathbf{Q}_j \mathbf{Q}_j^{\dag} \mathbf{Q}_i|} \geq 0$, we have $\cos^{2}(\delta/2) = \frac{\cos(\delta)+1}{2} \geq \frac{1}{2}$. 
Therefore a sufficient condition for \eqref{eqn_123} can be obtained as:
\begin{eqnarray}
(D + \frac{\epsilon}{K}) [\beta - \log 2] \geq 2^{1/K} -1. \nonumber 
\end{eqnarray}
Letting $|\epsilon| \leq K D/2$ for $D \gg 1$ \big(since $\epsilon = o(D)$\big), it can be verified that the above holds for $\beta \geq 2$. Thus, \eqref{eqn_grassmann_tighter2} follows. 
\end{proof}

Since the objective in \eqref{eqn_equiv_prob_formulate2} is not in closed form, for $\boldsymbol{\Lambda}_{\rm tx}^{D} = \mathbb{I}_D$, we consider the sub-optimal RD-beamformer design problem that maximizes  $f_{\rm FS}(\hat{\mathbf{T}})$ in \eqref{eqn_grassmann_tighter2}, i.e., we focus on finding: 
\begin{eqnarray}
\hat{\mathbf{T}}_{\rm FS} = {\rm argmax}_{\hat{\mathbf{T}} \in \mathcal{T}_{\mathcal{G}}} \left \{ \min_{i \neq j} d_{\rm FS} \big( \mathbf{Q}_i, \mathbf{Q}_j \big) \right \}. 
\label{eqn_opt_prbm_Grassmann}
\end{eqnarray}
While it only maximizes a lower bound $C^{D}_{\rm LB}(\hat{\mathbf{T}})$ to the sum capacity, the metric $f_{\rm FS}(\hat{\mathbf{T}})$ can be readily computed for each candidate $\hat{\mathbf{T}}$ unlike $\underbar{C}^{D}(\hat{\mathbf{T}})$ in \eqref{eqn_def_fobjD}. 

\subsection{Interpreting of the Fubini-Study distance metric - $f_{\rm FS}(\hat{\mathbf{T}})$}
Note that for $\boldsymbol{\Lambda}^D_{\rm tx} = \mathbb{I}_D$, the hSNR sum capacity of the RD-beamformer in \eqref{eqn_def_fobjD} can be alternately expressed as: 
\begin{eqnarray}
\underbar{C}^{D}(\hat{\mathbf{T}}) = \mathbb{E}_{\mathbf{H}^{D}} \max_{1\leq i \leq |\mathcal{S}|} \left\{ \underbar{C}^{D}_{i}(\hat{\mathbf{T}}, \mathbf{H}^{D}) \right\},
\end{eqnarray}
where, $\underbar{C}^{D}_{i}(\hat{\mathbf{T}}, \mathbf{H}^D) = \log \left| \mathbb{I}_M + \frac{\rho}{M} \mathbf{R}_{\rm rx}^{1/2} \mathbf{H}^{D} \mathbf{Q}_i  \mathbf{Q}_i^{\dag} {[\mathbf{H}^D]}^{\dag} \mathbf{R}_{\rm rx}^{1/2} \right|$. Based on the results on restricted precoding \cite{Vishnu_jrnl1}, these individual hSNR capacities $\underbar{C}^{D}_{i}(\hat{\mathbf{T}}, \mathbf{H}^{D})$ are approximately jointly Gaussian distributed with second order statistics given by:
\begingroup\makeatletter\def\f@size{9.7}\check@mathfonts
\begin{subequations}
\begin{flalign}
&\mathbb{E}\{ \underbar{C}^{D}_{i}(\hat{\mathbf{T}}, \mathbf{H}^D) \} \approx M\log \left( \frac{\rho K^{3/2}}{M \sqrt{K+1}}\right) + \log|\mathbf{R}_{\rm rx}|, \label{eqn_cap_mean} & \\
& \text{Crosscov}\!\! \left\{ \underbar{C}^{D}_{i}\!(\hat{\mathbf{T}}, \mathbf{H}^D), \underbar{C}^{D}_{j}\!(\hat{\mathbf{T}}, \mathbf{H}^D) \right\} \approx M \log \!\bigg[ \! 1 \!+\! \frac{{\|\mathbf{Q}_j^{\dag} \mathbf{Q}_i\|}^2_F}{K^2} \!\bigg] \!\!\!\!\!\!\! \label{eqn_cap_crosscov} & \\
& \qquad \qquad \qquad \quad \geq M \log \bigg[1 + \frac{{\cos \big(d_{\rm FS}(\mathbf{Q}_i, \mathbf{Q}_j)\big)}^{2/K}}{K}\bigg], \nonumber &
\end{flalign}
\end{subequations}
\endgroup
where the last step follows by applying the AM-GM inequality on eigenvalues of $\mathbf{Q}_j^{\dag} \mathbf{Q}_i \mathbf{Q}_i^{\dag} \mathbf{Q}_j$ and using \eqref{eqn_define_FS}. Therefore, by maximizing $f_{\rm FS}(\hat{\mathbf{T}})$ in \eqref{eqn_opt_prbm_Grassmann}, we minimize a lower bound to the largest cross-covariance term among the individual hSNR capacities $\{\underbar{C}^{D}_{i}(\hat{\mathbf{T}}, \mathbf{H}^D)\}$.\footnote{
If we replace the Fubini-Study distance $d_{\rm FS}(\cdot)$ in \eqref{eqn_opt_prbm_Grassmann} by chordal distance \cite{Love_SMux}, the corresponding solution $\hat{\mathbf{T}}_{\rm chord}$ \emph{exactly} minimizes the largest cross-covariance term in \eqref{eqn_cap_crosscov}.
However simulations show no improvement in sum capacity with this replacement. Hence we stick to the Fubini-Study distance.} 
This is an intuitively pleasing result, since reducing cross-covariance typically shifts the probability distribution of the maximum of a set of Gaussian random variables to the right \cite{Vitale}.

\section{Design of the RD-beamformer} \label{sec_pp_design}
Since \eqref{eqn_opt_prbm_Grassmann} tries to maximize the minimum Fubini Study distance between the subspaces $\{\mathcal{P}\{\mathbf{Q}_i\} | 1 \leq i \leq |\mathcal{S}|\}$, it may seem identical to the well studied problem of Grassmannian subspace packing, for which several efficient algorithms are available in literature (see \cite{Medra2014} and references therein). 
However there is a subtle difference, which stems from the fact that $\mathbf{Q}_i = \hat{\mathbf{T}} \mathbf{S}_i \hat{\mathbf{G}}_i $ for $i = \{1,..., |\mathcal{S}|\}$ are generated from the same RD-beamformer $\hat{\mathbf{T}}$. They are therefore coupled, making \eqref{eqn_opt_prbm_Grassmann} a \emph{coupled} Grassmannian sub-space packing problem. This is illustrated via a toy example in Fig.~\ref{Fig_illustrate_coupled_subspace}, where $\mathcal{P}\{\mathbf{Q}_i\}$'s are represented as planes passing through the origin. 
Here rotating $\mathcal{P}\{\mathbf{Q}_1\}$ about $\hat{\mathbf{T}}_{{\rm c}\{2\}}$ would require moving $\hat{\mathbf{T}}_{{\rm c}\{1\}}$, which may further change other sub-spaces that contain $\hat{\mathbf{T}}_{{\rm c}\{1\}}$, such as, $\mathcal{P}\{\mathbf{Q}_2\}, \mathcal{P}\{\mathbf{Q}_4\}$. Thus the $\mathcal{P}\{\mathbf{Q}_i\}$'s are coupled in general, and \eqref{eqn_opt_prbm_Grassmann} should rather be interpreted as trying to pack the columns ${[\hat{\mathbf{T}}]}_{{\rm c}\{\ell\}}$ such that the planes ($\mathcal{P}\{\mathbf{Q}_i\}$'s) are well separated.
\begin{figure}[!h]
\centering
\includegraphics[width=0.3\textwidth]{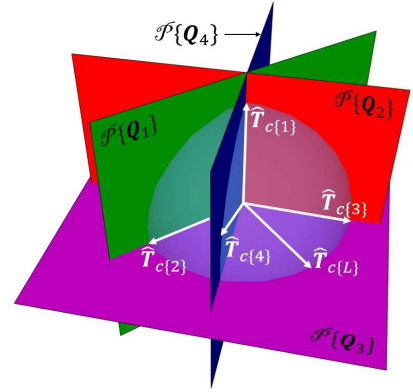}
\caption{An illustration of the coupled subspace packing problem \eqref{eqn_opt_prbm_Grassmann} for $D=3,K=2$ and a real field. Here $\hat{\mathbf{T}}\mathbf{S}_1 = \hat{\mathbf{T}}_{{\rm c}\{1,2\}}$, $\hat{\mathbf{T}} \mathbf{S}_2 = \hat{\mathbf{T}}_{{\rm c}\{1,3\}}$, $\hat{\mathbf{T}}\mathbf{S}_3 = \hat{\mathbf{T}}_{{\rm c}\{3,L\}}$ and $\hat{\mathbf{T}}\mathbf{S}_4 = \hat{\mathbf{T}}_{{\rm c}\{1,4\}}$.}
\label{Fig_illustrate_coupled_subspace}
\end{figure}
To the best of our knowledge, such a problem has not been addressed in literature before. 
\begin{lemma}[] \label{Th_opt_subspace_pack}
If $L \leq D$, any $\hat{\mathbf{T}} \in \mathcal{U}(D,L)$ is optimal for \eqref{eqn_opt_prbm_Grassmann}.
\end{lemma}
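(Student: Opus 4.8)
The plan is to show that the objective in \eqref{eqn_opt_prbm_Grassmann} possesses a universal ceiling that every semi-unitary $\hat{\mathbf{T}}$ already attains, so that no genuine optimization remains when $L \leq D$. First I would record the ceiling: from \eqref{eqn_define_FS} the argument of $\arccos$ lies in $[0,1]$, hence $d_{\rm FS}(\mathbf{Q}_i,\mathbf{Q}_j) \leq \pi/2$ for every pair, and therefore $f_{\rm FS}(\hat{\mathbf{T}}) \leq \pi/2$ for all feasible $\hat{\mathbf{T}} \in \mathcal{T}_{\mathcal{G}}$. It then suffices to exhibit that any $\hat{\mathbf{T}} \in \mathcal{U}(D,L)$ meets this bound with equality.

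The key simplification comes from orthonormality of the columns. When $\hat{\mathbf{T}} \in \mathcal{U}(D,L)$, the submatrix $\hat{\mathbf{T}}\mathbf{S}_i$ is a selection of $K$ out of $L$ orthonormal columns and is thus itself semi-unitary; consequently the orthonormalizer may be taken as $\hat{\mathbf{G}}_i = \mathbb{I}_K$. Any other admissible $\hat{\mathbf{G}}_i$ differs only by a right-unitary factor, which leaves $\mathcal{P}\{\mathbf{Q}_i\}$, and hence every $d_{\rm FS}$, unchanged. With this choice $\mathbf{Q}_i = \hat{\mathbf{T}}\mathbf{S}_i$ and $\mathbf{Q}_i^{\dag}\mathbf{Q}_j = \mathbf{S}_i^{\dag}\hat{\mathbf{T}}^{\dag}\hat{\mathbf{T}}\mathbf{S}_j = \mathbf{S}_i^{\dag}\mathbf{S}_j$.

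Next I would settle the rank of this cross term. Writing $\mathbf{S}_i,\mathbf{S}_j$ as selecting index sets $\mathcal{A}_i,\mathcal{A}_j \subseteq \{1,\dots,L\}$ of cardinality $K$, the product $\mathbf{S}_i^{\dag}\mathbf{S}_j$ is a partial permutation matrix whose rank equals $|\mathcal{A}_i \cap \mathcal{A}_j|$. Distinct selections $i \neq j$ correspond to distinct sets $\mathcal{A}_i \neq \mathcal{A}_j$, and since both have size $K$ this forces $|\mathcal{A}_i \cap \mathcal{A}_j| < K$. Hence $\mathbf{S}_i^{\dag}\mathbf{S}_j$, and therefore the $K \times K$ matrix $\mathbf{Q}_i^{\dag}\mathbf{Q}_j\mathbf{Q}_j^{\dag}\mathbf{Q}_i$, is singular, so $|\mathbf{Q}_i^{\dag}\mathbf{Q}_j\mathbf{Q}_j^{\dag}\mathbf{Q}_i| = 0$ and $d_{\rm FS}(\mathbf{Q}_i,\mathbf{Q}_j) = \pi/2$ for every pair. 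Thus $f_{\rm FS}(\hat{\mathbf{T}}) = \pi/2$, the ceiling, and $\hat{\mathbf{T}}$ is optimal. To close, I would note that $\mathcal{U}(D,L)$ is nonempty precisely because $L \leq D$, and that any semi-unitary matrix can be phase-normalized column-by-column so as to lie in $\mathcal{T}_{\mathcal{G}}$ without altering any $d_{\rm FS}$; this is exactly the column-phase invariance established in Theorem \ref{Th_bound_eigspace}.

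I do not anticipate a genuine obstacle here: the regime $L \leq D$ is precisely the case in which the ambient dimension is large enough to render every $K$-column selection pairwise orthogonal in the Fubini--Study sense, which decouples the problem entirely. The only points demanding care are the reduction $\hat{\mathbf{G}}_i = \mathbb{I}_K$ and the rank identity ${\rm rank}(\mathbf{S}_i^{\dag}\mathbf{S}_j) = |\mathcal{A}_i \cap \mathcal{A}_j|$, both routine. The truly \emph{coupled} difficulty of \eqref{eqn_opt_prbm_Grassmann} only surfaces when $L > D$, which this lemma deliberately sidesteps.
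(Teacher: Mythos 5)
Your proposal is correct and follows essentially the same route as the paper: both establish the universal ceiling $f_{\rm FS}(\hat{\mathbf{T}}) \leq \pi/2$ from $0 \leq |\mathbf{Q}_i^{\dag}\mathbf{Q}_j\mathbf{Q}_j^{\dag}\mathbf{Q}_i| \leq 1$ and then show a semi-unitary $\hat{\mathbf{T}}$ attains it because every cross-Gram matrix $\mathbf{Q}_i^{\dag}\mathbf{Q}_j\mathbf{Q}_j^{\dag}\mathbf{Q}_i$ is singular for $i \neq j$. The only cosmetic difference is that you argue singularity via the rank identity $\mathbf{Q}_i^{\dag}\mathbf{Q}_j = \mathbf{S}_i^{\dag}\mathbf{S}_j$ with ${\rm rank} = |\mathcal{A}_i\cap\mathcal{A}_j| < K$, whereas the paper exhibits an explicit null direction corresponding to a column selected by $\mathbf{S}_i$ but not $\mathbf{S}_j$; these are the same argument in different clothing.
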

\begin{proof}
From \eqref{eqn_define_FS} it can be readily verified that $0 \leq d_{\rm FS}(\cdot) \leq \pi/2$. Therefore from \eqref{eqn_opt_prbm_Grassmann}, $0 \leq f_{\rm FS}(\hat{\mathbf{T}}) \leq \pi/2$. 

Now, for $L \leq D$, consider a $D\times L$ RD-beamformer $\hat{\mathbf{T}}$ such that $\hat{\mathbf{T}}^{\dag} \hat{\mathbf{T}} = \mathbb{I}_L$. For any $i,j \in \{1,..., |\mathcal{S}|\}$ and $i \neq j$, there exists $\ell \in \{1,.., L\}$ such that $\mathbf{S}_i$ picks $\hat{\mathbf{T}}_{{\rm c}\{\ell\}}$ but $\mathbf{S}_j$ does not. Then we have $\mathbf{Q}_j^{\dag} \hat{\mathbf{T}}_{{\rm c}\{\ell\}} = \mathbb{O}_{K \times 1}$, which follows from the fact that $\hat{\mathbf{T}}$ has orthonormal columns and hence $\mathcal{P}\{\hat{\mathbf{T}}\mathbf{S}_j \} \perp \hat{\mathbf{T}}_{{\rm c}\{\ell\}}$. Furthermore,
$\exists \mathbf{a} \in \mathbb{C}^{L \times 1}$ such that $\mathbf{Q}_i \mathbf{a} = \hat{\mathbf{T}}_{{\rm c}\{\ell\}}$. Then:
\begin{eqnarray}
\mathbf{a} \mathbf{Q}_i^{\dag} \mathbf{Q}_j \mathbf{Q}_j^{\dag} \mathbf{Q}_i \mathbf{a} &=& {[\hat{\mathbf{T}}_{{\rm c}\{\ell\}}]}^{\dag} \mathbf{Q}_j \mathbf{Q}_j^{\dag} \hat{\mathbf{T}}_{{\rm c}\{\ell\}} = 0 \nonumber \\
\Rightarrow d_{\rm FS} \big( \mathbf{Q}_i, \mathbf{Q}_j \big) &=& \pi/2.
\end{eqnarray}
Since $\hat{\mathbf{T}}$ satisfies the upper bound on $f_{\rm FS}(\cdot)$, the lemma follows. 
\end{proof}
Unfortunately, solutions to \eqref{eqn_opt_prbm_Grassmann} are not known for the more interesting case of $L > D$. However, a related problem is the problem of Grassmannian line packing:
\begin{eqnarray}
\hat{\mathbf{T}}_{\rm LP} = {\rm argmax}_{\hat{\mathbf{T}} \in \mathcal{T}_{\mathcal{G}}} \min_{i \neq j}\left \{ d_{\rm FS}({[\hat{\mathbf{T}}]}_{{\rm c}\{i\}}, {[\hat{\mathbf{T}}]}_{{\rm c}\{j\}}) \right \}, \label{eqn_opt_prbm_linepacking}
\end{eqnarray}
which tries to maximize the minimum Fubini Study distance between the columns of $\hat{\mathbf{T}}$, and for which several near-optimal solutions are available in literature \cite{Medra2014, Dhillon2008}. Both problems have identical solutions $\hat{\mathbf{T}}_{\rm FS} = \hat{\mathbf{T}}_{\rm LP}$ for $L \leq D$. 
While this is not true for $L > D$, we hypothesize that $\hat{\mathbf{T}}_{\rm LP}$ might still serve as a good, analytically tractable, sub-optimal solution to \eqref{eqn_opt_prbm_Grassmann}. One important difference however is that unlike $\hat{\mathbf{T}}_{\rm FS}$, $\hat{\mathbf{T}}_{\rm LP}$ is independent of the switch position set $\mathcal{S}$, and therefore may have poor performance for certain $\mathcal{S}$ if $L > D$. Therefore, we explore some numerical optimization algorithms to adapt $\hat{\mathbf{T}}_{\rm LP}$ to $f_{\rm FS}(\cdot)$ in Appendix \ref{appdix5}. These algorithms are used later in Section \ref{sec_sim_results} to evaluate the quality of the line packed solution $\hat{\mathbf{T}}_{\rm LP}$, via simulations. 

\section{Reducing the hardware and computational complexity} \label{sec_hardware_cost}
For each user group, we assumed in Section \ref{sec_chan_model} that $K,L$ are chosen by an inter-group resource sharing algorithm. Full flexibility in picking $K,L$ imposes a significant hardware cost for large values of $N$. Additionally, a large $|\mathcal{S}|$ may also increase the computational complexity of picking the best selection matrix for each channel realization. In this section we shall discuss methods to reduce these hardware and computational costs when $K,L$ are pre-fixed values i.e., inter-group resource sharing only involves power allocation. In this case, we can restrict discussion to the beamformer and selection bank for a single user group. 

\subsection{Reducing hardware cost of the beamforming matrix} \label{sec_hardware_pp}
In general, we need a variable gain phase-shifter for each element of the analog beamforming matrix $\mathbf{T}$, thereby, needing $NL$ components. This leads to a large implementation cost, especially when $L > D$. However if $\mathbf{T}$ is designed apriori for a fixed value of $D$, the hardware cost can be reduced significantly, as illustrated next. 
Note that the proposed beamforming matrix can be expressed as: $\mathbf{T} = \mathbf{T}_{\rm var} \hat{\mathbf{T}}_{\rm fix}$, where $\mathbf{T}_{\rm var} = \mathbf{E}_{\rm tx}^{D} {[\hat{\mathbf{T}}]}_{{\rm c}\{1:D\}}$, $\hat{\mathbf{T}}_{\rm fix} = \left[ \begin{array}{cc} \mathbb{I}_{D}  & {[\hat{\mathbf{T}}]}_{{\rm c}\{1:D\}}^{-1} {[\hat{\mathbf{T}}]}_{{\rm c}\{(D+1):L\}} \end{array} \right]$ and $\hat{\mathbf{T}}$ is a $D \times L$ RD-beamformer designed for either \eqref{eqn_opt_prbm_Grassmann} or \eqref{eqn_opt_prbm_linepacking}. Firstly, by implementing both components $\mathbf{T}_{\rm var}$ and $\hat{\mathbf{T}}_{\rm fix}$ separately as shown in Fig.~\ref{Fig_blockdiag_restrict_sel}, the number of required variable gain phase-shifters reduce to $D(N + L - D)$, which can be a significant reduction when $N \gg D$ and $L > D$. The number of analog power dividers may however increase by a factor of $D/K$.
Secondly, since design of $\hat{\mathbf{T}}$ is independent of aCSI given $D$ and $\boldsymbol{\Lambda}_{\rm tx}^{D} = \mathbb{I}_D$ (see \eqref{eqn_opt_prbm_Grassmann}), the $D(L-D)$ components of $\hat{\mathbf{T}}_{\rm fix}$ can be implemented using a fixed phase-shifter array. Later in Section \ref{sec_anisotropy} it shall be shown that this fixed structure is also applicable when $\boldsymbol{\Lambda}_{\rm tx}^{D} \neq \mathbb{I}_D$.
\begin{figure}[!h]
\centering
\includegraphics[width=0.48\textwidth]{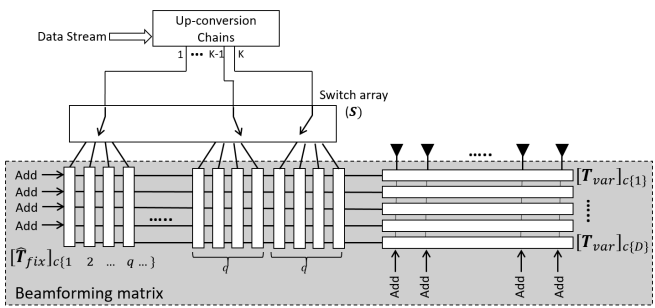}
\caption{Block diagram of a reduced complexity analog front-end design for HBwS, corresponding to one user group.}
\label{Fig_blockdiag_restrict_sel}
\end{figure}
Further reduction in the hardware complexity is possible using unit gain, discrete phase-shifter components for the beamformer \cite{Molisch_VarPhaseShift, Sudarshan, Alkhateeb2013}. The use and impact of such components, however, is beyond the scope of this paper. 
\subsection{Restricting the size of the switch position set} \label{sec_restrict_sel}
In this subsection, we restrict the size of the switch position set $\mathcal{S}$ for each user group. The size restriction not only reduces the hardware cost of implementing the switch bank, but also reduces the computational effort of picking the best selection matrix for a channel realization. In fact, since the $\mathbf{Q}_i$'s are coupled, some selection matrices may contribute little to the overall system performance.

Let us define for each selection matrix $\mathbf{S}_i$ a corresponding set $\mathcal{B}_i \subset \{1,..,L\}$ such that $\ell \in \mathcal{B}_i$ iff ${[\mathbf{S}_i]}_{{\rm c}\{k\}} = {[\mathbb{I}_L]}_{{\rm c}\{\ell\}}$ for some $1 \leq k \leq K$ i.e., $\mathbf{S}_i$ connects input port $\ell$ of $\mathbf{T}$ to some up-conversion chain. 
It can then be shown that $\mathbf{Q}^{\dag}_i \mathbf{Q}_j \mathbf{Q}^{\dag}_j \mathbf{Q}_i$ has $|\mathcal{B}_i \cap \mathcal{B}_j|$ unity eigenvalues (see Appendix \ref{appdix4}). 
Therefore, from the definition of $d_{\rm FS}(\cdot)$ in \eqref{eqn_define_FS}, we have for $i \neq j$:
\begin{subequations}\label{eqn_FS_UB}
\begin{flalign}
& {\cos \! \big(d_{\rm FS} (\mathbf{Q}_i, \mathbf{Q}_j) \big)}^2 \geq {\left[ \lambda^{\downarrow}_{K}\{ \mathbf{Q}^{\dag}_i \mathbf{Q}_j \mathbf{Q}^{\dag}_j \mathbf{Q}_i \}\right]}^{K-|\mathcal{B}_i \cap \mathcal{B}_j|} \!\!\!\!\!\!\!\!\!\!\!\!, & \\
& {\cos \! \big(d_{\rm FS} (\mathbf{Q}_i, \mathbf{Q}_j) \big)}^2 \leq {\left[ \lambda^{\downarrow}_{|\mathcal{B}_i \cap \mathcal{B}_j|+1}\{ \mathbf{Q}^{\dag}_i \mathbf{Q}_j \mathbf{Q}^{\dag}_j \mathbf{Q}_i \}\right]}^{K-|\mathcal{B}_i \cap \mathcal{B}_j|} \!\!\!\!\!\!\!\!\!\!\!\!, &
\end{flalign}
\end{subequations}
where $\lambda^{\downarrow}_k(\mathbf{A})$ represents the $k$-th largest eigenvalue of a matrix $\mathbf{A}$. Bounds in \eqref{eqn_FS_UB} suggest that reducing $|\mathcal{B}_i \cap \mathcal{B}_j|$ might help increase $\cos \! \big(d_{\rm FS} (\mathbf{Q}_i, \mathbf{Q}_j) \big)$. Therefore a good way of increasing $C^{D}_{\rm LB}(\hat{\mathbf{T}})$ in Theorem \ref{Th_grassmann_tighter} is to reduce $|\mathcal{B}_i \cap \mathcal{B}_j|$ for $i \neq j$. However, $|\mathcal{S}|$ should also be kept as large as possible to minimize the performance loss. 
In other words, we wish to find the largest family of subsets $\mathcal{\tilde{B}}$ such that:\footnote{\label{note3} $\mathcal{\tilde{B}} = \{ \mathcal{B}_1,..,\mathcal{B}_{|\tilde{\mathcal{B}}|} \}$ is a set of subsets of column indices of $\mathbf{T}$, each element of which corresponds to a selection matrix $\mathbf{S} \in \mathcal{S}$.} 
\begin{eqnarray}
\mathcal{\tilde{B}} &=& \big\{\mathcal{B}_1,..,\mathcal{B}_{|\mathcal{\tilde{B}}|} \big| \mathcal{B}_i \subseteq \{1,..,L\}, |\mathcal{B}_i| = K \nonumber \\
&& \qquad \qquad \text{ and } |\mathcal{B}_i \cap \mathcal{B}_j| \leq \kappa \ \forall i \neq j \big\}. \label{define_set_B_tilde}
\end{eqnarray}
Finding the largest such family is an open, but well studied, problem in the field of extremal combinatorics. Based on some of these results, we have the following theorem:
\begin{theorem}[$K$-uniform, $\{0:\kappa\}$-intersecting subsets] \label{Th_Kuniform_Mintersecting_sets}
Let $\tilde{\mathcal{B}}$ be the largest subset of the power set of $\{1,...,L\}$ such that \eqref{define_set_B_tilde} is satisfied. Then the cardinality of $\tilde{\mathcal{B}}$ satisfies: 
\begin{eqnarray} \label{eqn_overlap1_th}
{\left[\frac{L}{2 K}\right]}^{\kappa+1} \leq q^{\kappa+1} \leq |\mathcal{\tilde{B}}| \leq \frac{{}^{L} {\rm C}_{\kappa+1}}{{}^{K} {\rm C}_{\kappa+1}} & \text{if $L \geq 2 K^2$},
\end{eqnarray}
where $q$ is the largest prime number such that $q \leq L/K$.
\end{theorem}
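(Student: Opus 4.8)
The plan is to prove the three inequalities separately: the rightmost (upper) bound by a double-counting \emph{packing} argument, the middle (lower) bound by an explicit Reed--Solomon-type construction over a finite field, and the leftmost inequality by invoking Bertrand's postulate to locate the prime $q$.

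For the upper bound I would exploit the defining property $|\mathcal{B}_i \cap \mathcal{B}_j| \leq \kappa$ for $i \neq j$. The key observation is that no $(\kappa+1)$-element subset of $\{1,\dots,L\}$ can be contained in two distinct members of $\tilde{\mathcal{B}}$: if a set $T$ with $|T| = \kappa+1$ satisfied $T \subseteq \mathcal{B}_i$ and $T \subseteq \mathcal{B}_j$ with $i \neq j$, then $|\mathcal{B}_i \cap \mathcal{B}_j| \geq \kappa+1$, a contradiction. Hence, as $i$ ranges over the family, the collections of $(\kappa+1)$-subsets contained in each $\mathcal{B}_i$ are pairwise disjoint. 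Each $\mathcal{B}_i$, having $K$ elements, contains exactly $\binom{K}{\kappa+1}$ such subsets, while the ground set contains $\binom{L}{\kappa+1}$ of them in total. Summing over the family gives $|\tilde{\mathcal{B}}|\binom{K}{\kappa+1} \leq \binom{L}{\kappa+1}$, which is precisely the claimed upper bound $\binom{L}{\kappa+1}/\binom{K}{\kappa+1}$.

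For the lower bound I would construct a family of size $q^{\kappa+1}$ meeting the constraints, so that maximality of $\tilde{\mathcal{B}}$ forces $|\tilde{\mathcal{B}}| \geq q^{\kappa+1}$. Since $q \leq L/K$, i.e.\ $Kq \leq L$, I can partition a subset of $\{1,\dots,L\}$ into $K$ disjoint blocks $B_1,\dots,B_K$, each of size $q$, and identify every block with the field $\mathbb{F}_q$. Because $L \geq 2K^2$ gives $L/K \geq 2K$, Bertrand's postulate yields a prime in $(K,2K]$ not exceeding $L/K$, so $q \geq K$ and I may fix $K$ distinct evaluation points $x_1,\dots,x_K \in \mathbb{F}_q$. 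To each polynomial $f \in \mathbb{F}_q[x]$ of degree at most $\kappa$ I associate the $K$-set $\mathcal{B}_f = \{(x_t, f(x_t)) : t = 1,\dots,K\}$, reading $(x_t, f(x_t))$ as the element indexed by $f(x_t)$ in block $B_t$. There are $q^{\kappa+1}$ such polynomials; since $K \geq \kappa+1$ each is uniquely determined by its values on the $x_t$, so the sets are distinct, and for $f \neq g$ the two sets share the element in block $B_t$ only when $f(x_t) = g(x_t)$, i.e.\ at one of the at most $\kappa$ roots of the nonzero polynomial $f-g$ of degree $\leq \kappa$. Thus $|\mathcal{B}_f \cap \mathcal{B}_g| \leq \kappa$ and the family is admissible.

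Finally, for the leftmost inequality I would apply Bertrand's postulate at $x = L/(2K)$, which satisfies $x \geq K \geq 1$ by hypothesis: there is a prime in $(L/(2K), L/K]$, so the largest prime $q \leq L/K$ obeys $q > L/(2K) \geq \lfloor L/(2K)\rfloor$, whence $q^{\kappa+1} \geq \lfloor L/(2K)\rfloor^{\kappa+1}$. I expect the main obstacle to be the bookkeeping around the finite-field construction rather than either bounding step itself: one must verify that the single hypothesis $L \geq 2K^2$ simultaneously guarantees enough room for $K$ disjoint blocks of size $q$ (the condition $Kq \leq L$) and enough evaluation points (the condition $q \geq K$), both of which rest on pinning down $q$ correctly via Bertrand's postulate. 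Once $q$ is properly situated in $(L/(2K), L/K]$, the counting arguments are short.
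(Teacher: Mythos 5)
Your proposal is correct, and for two of the three inequalities it is essentially the paper's argument: your Reed--Solomon construction (degree-$\leq\kappa$ polynomials over $\mathbb{F}_q$ evaluated at $K$ distinct points, with the $t$-th value placed in the $t$-th of $K$ disjoint blocks of size $q$) is exactly the Frankl--Babai construction the paper reproduces as Algorithm~1, and your use of Bertrand's postulate to place a prime in $(L/(2K), L/K]$ and to verify $q \geq K$ under $L \geq 2K^2$ matches the paper's closing step. The one genuine difference is the upper bound: the paper simply cites Deza (1980), whereas you supply the standard self-contained packing argument --- no $(\kappa+1)$-subset can lie in two members of the family, so $|\tilde{\mathcal{B}}|\,{}^{K}{\rm C}_{\kappa+1} \leq {}^{L}{\rm C}_{\kappa+1}$. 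That buys a fully elementary, citation-free proof at the cost of a few extra lines; it implicitly uses $\kappa+1 \leq K$ so that ${}^{K}{\rm C}_{\kappa+1} > 0$, an assumption the theorem statement also makes tacitly (the constraint is vacuous otherwise), so it is worth a parenthetical remark but is not a gap.
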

\begin{proof}
The upper bound is derived in \cite{Deza1980} and an algorithm that achieves the lower bound was proposed in \cite[Theorem 4.11]{Babai1992}, which is reproduced below for convenience. 
Let $q$ be the largest prime number such that $q \leq L/K$. If $q \geq K$, a construction of a family of $q^{\kappa+1}$ subsets with the required, bounded overlap is given by Algorithm \ref{algo_frankl}. 
\begin{algorithm}
\caption{Frankl-Babai Construction \cite{Babai1992}}\label{algo_frankl}
\begin{algorithmic}[1]
\FOR{$i = 1$ to $q^{\kappa+1}$}
\STATE $\mathcal{B}_i = \phi$
\FOR{$j = 0$ to $\kappa$}
\STATE $a_j = {\rm mod}(i \ | \ q^{j}, q)$ \ \ 
\COMMENT {Here $|$ implies integer division}
\ENDFOR
\STATE $f(x) \triangleq \sum_{j=0}^{\kappa} a_j x^j$
\FOR{$k = 0$ to $K-1$}
\STATE $\mathcal{B}_i = \mathcal{B}_i \bigcup \big\{ k q + {\rm mod}(f(k),q)+1\big\}$
\ENDFOR
\ENDFOR
\end{algorithmic}
\end{algorithm}
Now from Bertrand's postulate \cite{Ramanujan1919, BertrandWolfram}, there always exists a prime number $q$ between $L/(2K)$ and $L/K$ i.e., $q \geq L/(2K)$. Therefore a sufficient condition for Algorithm \ref{algo_frankl} is: $L/(2K) \geq K$. This concludes the theorem.
\end{proof}
For $\mathcal{\tilde{B}}$ designed by Algorithm \ref{algo_frankl}, each subset $\mathcal{B}_i$ picks exactly one element in the interval $\big[ k q,(k+1)q \big)$ for $k=0,..,K-1$. Therefore the corresponding switch position set $\mathcal{S}_{\rm Alg1(\kappa)} = \{ \mathbf{S}_1,..,\mathbf{S}_{|\mathcal{S}_{\rm Alg1(\kappa)}|} \}$ can be implemented by equipping each up-conversion chain with a $1$-to-$q$ switch as depicted in Fig.~\ref{Fig_blockdiag_restrict_sel}, i.e., each up-conversion chain has an exclusive set of input ports to connect to. This leads to a significant saving in hardware cost as opposed a system with all possible selections. Note that this reduced complexity structure is analogous to the designs in \cite{Xu_iCSI, Rial2016, Gao2018} for conventional hybrid beamforming and antenna selection. 
\section{Channel Estimation Overhead} \label{sec_chan_est_overhead}
For performing HBwS to a user group, we require the knowledge of $\{\mathbf{R}_{\rm tx}, \widetilde{\mathbf{H}}_{1} \mathbf{T}, ..., \widetilde{\mathbf{H}}_{M_1} \mathbf{T}\}$ at the BS and $\{\widetilde{\mathbf{H}}_{m} \mathbf{T} \mathbf{S} \mathbf{G}\}$ at each user $m$. In this section we quantify the corresponding channel estimation overhead for a narrow-band orthogonal frequency division multiplexing (OFDM) system. While we had also assumed knowledge of $\{\mathbf{R}_{\rm rx,1}, ..., \mathbf{R}_{\rm rx,M_1}\}$ at BS in Section \ref{sec_chan_model}, this knowledge is not utilized in the proposed beamformer designs (see Sections \ref{sec_pp_design} and \ref{sec_anisotropy}). Note that since we assume different user groups have orthogonal channels, pilots can be reused across the user groups. Hence, without loss of generality, the channel estimation overhead is quantified by considering a single representative group. A study of the impact of channel quantization or estimation errors on system performance is beyond the scope of this paper. 

Several algorithms have been proposed to acquire aCSI statistics, such as $\mathbf{R}_{\rm tx}$, with minimal training \cite{Caire2017, Park2016}. Additionally, since $\mathbf{R}_{\rm tx}$ remains constant for a long time duration and over a large bandwidth \cite{ispas2015analysis, wang2017stationarity}, it can be acquired at the BS with low overhead via uplink channel training, in both time division duplexing (TDD) and frequency division duplexing (FDD) systems.
Similarly, the acquisition of $\widetilde{\mathbf{H}}_{m} \mathbf{T} \mathbf{S} \mathbf{G}$ at each user $m$ imposes a small overhead, since each element of $\mathbf{u}$ in \eqref{eqn_downlink_y2} can use a different pilot sub-carrier and all the users can be trained in parallel via downlink training, once $\mathbf{S} \in \mathcal{S}$ is picked. 
The main bottleneck is the estimation of $\{\widetilde{\mathbf{H}}_{1} \mathbf{T}, ..., \widetilde{\mathbf{H}}_{M_1} \mathbf{T}\}$ at the BS. Note that it is sufficient to estimate $\{\widetilde{\mathbf{H}}_{1} \ddot{\mathbf{T}}, ..., \widetilde{\mathbf{H}}_{M_1} \ddot{\mathbf{T}}\}$, where $\ddot{\mathbf{T}}$ is a sub-matrix of $\mathbf{T}$ whose columns form a basis for $\mathcal{P}\{\mathbf{T}\}$. Since $|\mathcal{P}\{\mathbf{T}\}| \leq \min\{D,L\}$ from \eqref{eqn_prob_formulate}, this involves estimation of $\min\{D,L\} M$ channel coefficients. These coefficients can be obtained either via uplink pilot training in TDD, or via downlink training and feedback of $\widetilde{\mathbf{H}}_{m} \ddot{\mathbf{T}}$ from each user $m$ in FDD. 
As an illustration, we consider a TDD based system where the $M_1$ users transmit $\lceil \min\{D,L\}/K \rceil$ uplink pilot symbols in each coherence time. All the $M = M_1 M_2$ user antennas use orthogonal pilot sub-carriers for parallel training. By using a sequence of $\mathbf{S} \in \mathcal{S}$ for the $\lceil \min\{D,L\}/K \rceil$ pilots, such that each column of $\ddot{\mathbf{T}}$ is picked at least once, $\{\widetilde{\mathbf{H}}_{1} \ddot{\mathbf{T}}, ..., \widetilde{\mathbf{H}}_{M_1} \ddot{\mathbf{T}}\}$ can be estimated at the BS. The corresponding system sum-throughput (including channel estimation overhead) can be expressed as $O_{\rm HBwS}\underbar{C}(\mathbf{T})$, where: 
\begin{eqnarray} \label{eqn_chan_est_overhead}
O_{\rm HBwS} = 1 - \lceil \min\{D,L\}/K \rceil \zeta,
\end{eqnarray}
and $\zeta=(\text{symbol duration})\big/ (\text{coherence time})$. 
As is evident, there is a trade-off between the hSNR sum-capacity $\underbar{C}(\mathbf{T})$, which is an non-decreasing function of $D$ (see \eqref{eqn_prob_formulate}), and estimation overhead $O_{\rm HBwS}$, which is a non-increasing function of $D$. As mentioned in Section \ref{sec_prbm_formulate}, a good beamformer that maximizes the system throughput $O_{\rm HBwS}\underbar{C}(\mathbf{T})$ can therefore be obtained by performing a line search over $D \in \{1,...,N\}$. However proposing a computationally efficient algorithm to find this $D^{*}$ is beyond the scope of this paper (see \cite{Vishnu_Globecom, Gao2018} for some investigations). 
\section{Simulation Results} \label{sec_sim_results}
For simulations we consider a TDD based narrow-band OFDM system, with one BS ($N=100$) implementing HBwS and one representative user group. We assume the shared spatial TX correlation matrix has isotropic scattering within the dominant $D$-dimensional subspace, i.e., $\boldsymbol{\Lambda}_{\rm tx}^{D} = \mathbb{I}_D$ and $\mathbf{E}_{\rm tx}$ may be arbitrary. Extensions to the anisotropic case are considered in the next section. 
The switch bank provides each up-conversion chain with an exclusive set of $\lfloor L/K \rfloor$ input ports for connection \cite{Gao2017}. Unless otherwise stated, we assume that all the switch positions possible with this architecture are allowed i.e.,: 
\begin{eqnarray} 
\mathcal{S}_{\rm all} = \left\{ {[\mathbb{I}_L]}_{{\rm c}\{\ell_1,..,\ell_K\}} \Big| (k\!-\! 1) \left\lfloor \frac{L}{K} \right\rfloor \! < \! \ell_k \! \leq k \! \left\lfloor \frac{L}{K} \right\rfloor, k \in \{1,..,K\} \right\}. \nonumber 
\end{eqnarray}
For the results we use the system sum throughput $O_{\rm HBwS}\underbar{C}^{D}(\hat{\mathbf{T}})$ as the metric, where $O_{\rm HBwS}$ is from \eqref{eqn_chan_est_overhead}. Since $\underbar{C}^{D}(\hat{\mathbf{T}})$ in \eqref{eqn_def_fobjD} is not known in closed form, throughout this section we use Monte-Carlo simulations to obtain its sample-mean estimate. For each channel realization, a brute-force search is performed to pick the best $\mathbf{S} \in \mathcal{S}$. The design of low-complexity algorithms for selecting $\mathbf{S}$ is beyond the scope of this paper (see \cite{Molisch_mag,Magazine_nosratinia} and references therein). 
Note that the system model and capacity bound in Sections \ref{sec_chan_model}--\ref{sec_prbm_formulate} are also applicable to HBaCSI and HBiCSI, by setting $L=K$, $\mathcal{S}=\{\mathbb{I}_K\}$ (no selection stage) and letting $\mathbf{T}$ depend on aCSI and iCSI, respectively. For limiting the channel estimation overhead, we also restrict the beamforming for HBaCSI and HBiCSI to lie in the dominant $D$-dimensional channel subspace. 
Their hSNR sum capacity, with hSNR sum capacity maximizing beamformers, can be obtained by replacing $\mathcal{S}=\{\mathbb{I}_K\}$ with $\hat{\mathbf{T}}_{\text{HBaCSI}} = {[\mathbb{I}_D]}_{{\rm c}\{1:K\}}$ and $\hat{\mathbf{T}}_{\text{HBiCSI}} = \mathbf{E}^{K}_{\text{iCSI}}$ in \eqref{eqn_def_fobjD}, respectively, where $\mathbf{E}^{K}_{\text{iCSI}}$ is the $D\times K$ eigen-vector matrix of ${[\mathbf{E}_{\rm tx}^{D}]}^{\dag} \widetilde{\mathbf{H}}^{\dag} \widetilde{\mathbf{H}} \mathbf{E}_{\rm tx}^{D}$, corresponding to the $K$ largest eigenvalues. 
Similarly, their throughput can be obtained by using pre-log factors of $O_{\rm HBaCSI} = 1 - \zeta$ and $O_{\rm HBiCSI} = 1 - \lceil D/K \rceil\zeta$ in \eqref{eqn_def_fobjD}, respectively. A proof of the above is skipped for brevity. The choice of $D$ in the following results is arbitrary, and therefore further gains with HBwS are expected if a line search over $D$ is performed.

\subsection{Influence of number of input ports ($L$)} \label{subsec_influence_of_L}
The system sum throughput with HBwS as a function of number of input ports ($L$) is studied in Fig.~\ref{fig_influence_of_L}. Here we plot the performance of both the line packed RD-beamformer - $\hat{\mathbf{T}}_{\rm LP}$ and the RD-beamformer output of Algorithm \ref{algo_grad_descent}  - $\hat{\mathbf{T}}_{\rm Alg3}$ (see Appendix \ref{appdix5}). 
\begin{figure}[!h]
\centering
\includegraphics[width= 0.45\textwidth]{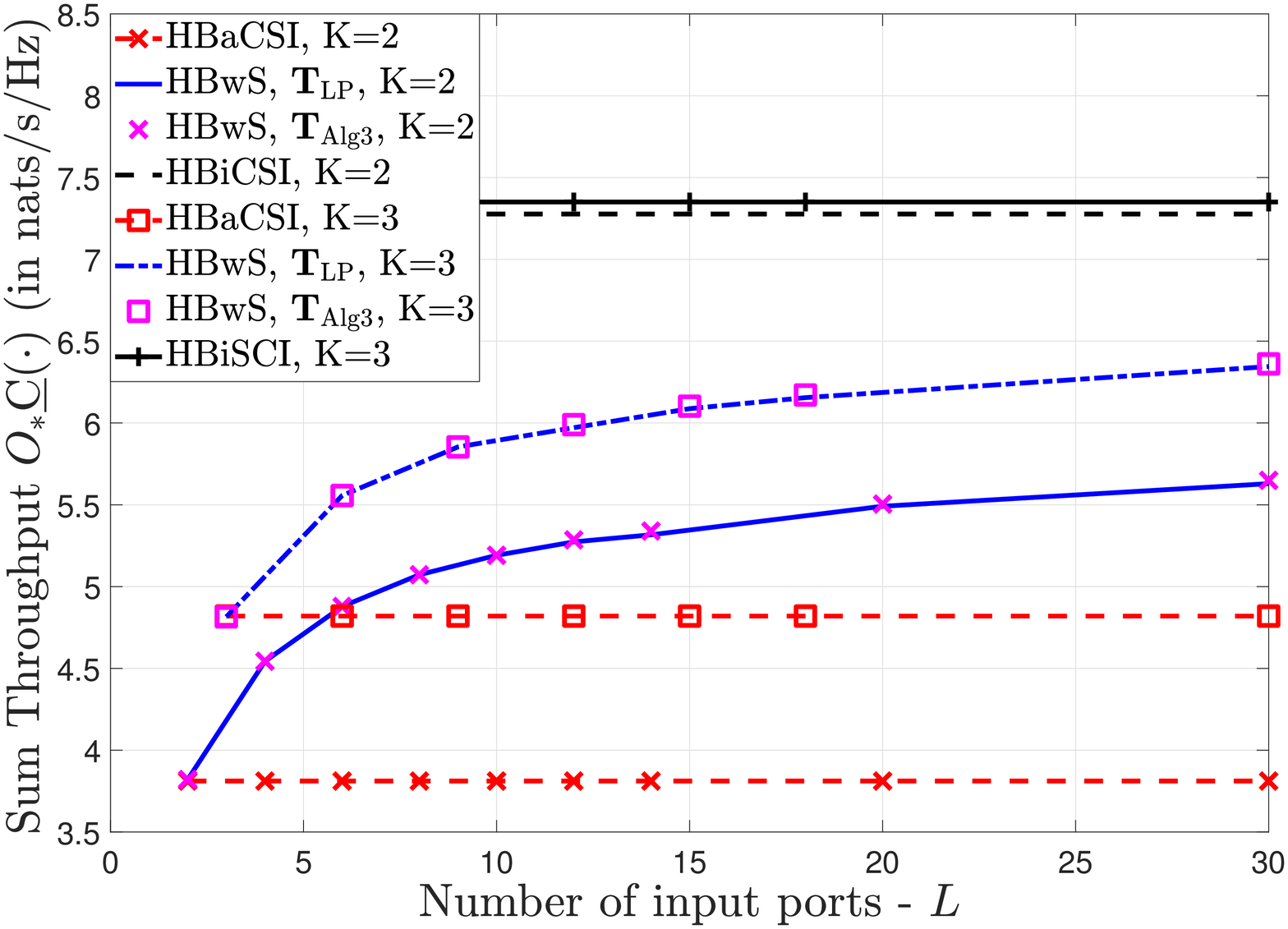}
\caption{Comparison of sum throughputs for different beamformers, as a function of $L$ \big($N=100,D=10,M=2,\rho=10$, $\mathcal{S} = \mathcal{S}_{\rm all}$, $\hat{\mathbf{T}}_{\rm LP}$ is generated using the line packing algorithm in \cite{MedraRepository}, $\boldsymbol{\Lambda}^{D}_{\rm tx} = \mathbb{I}_D$, $\mathbf{R}_{\rm rx} = \mathbb{I}_M$, $\zeta = 10^{-2}$\big).}
\label{fig_influence_of_L}
\end{figure}
The results suggest that with $L = 2D$, HBwS outperforms HBaCSI by approximately half the throughput gap between HBiCSI and HBaCSI. We also observe that there is a diminishing increase in throughput as we increase the number of ports $L$. Therefore for a good trade-off between hardware cost and performance, $L$ should be of the order of $D$. Finally, we observe that $\hat{\mathbf{T}}_{\rm LP}$ and $\hat{\mathbf{T}}_{\rm Alg3}$ have almost identical throughput for $\mathcal{S} = \mathcal{S}_{\rm all}$. 

\subsection{Influence of restricted switch positions ($\mathcal{S}$)}
In this sub-section, we analyze the performance of HBwS under the restricted switch position sets $\mathcal{S}_{\rm Alg1(\kappa)}$ from Algorithm \ref{algo_frankl}. Note that $\mathcal{S}_{\rm Alg1(\kappa)} \subseteq \mathcal{S}_{\rm all}$ and can therefore be implemented using the switch bank considered here. The sum-throughput $\underbar{C}^D(\hat{\mathbf{T}}_{\rm LP})$ with $\mathcal{S}_{\rm Alg1(\kappa)}$ is compared to the average sum-throughput with $\hat{\mathbf{T}}_{\rm LP}$ and a randomly chosen switch position set $\mathcal{S}_{\rm rand(\kappa)}$ of same size as $\mathcal{S}_{\rm Alg1(\kappa)}$, averaged over several realizations, in Fig.~\ref{fig_compare_overlapM}. The results support our claim that selection matrices with low overlap contribute more to hSNR sum capacity than others. Results also show that while $|\mathcal{S}_{\rm Alg1(\kappa)}|$ increases exponentially with $\kappa$, the increase in throughput is sub-linear, and therefore a small value of $\kappa$ is sufficient to achieve good performance. 
\begin{figure}[!h]
\centering
\includegraphics[width= 0.45\textwidth]{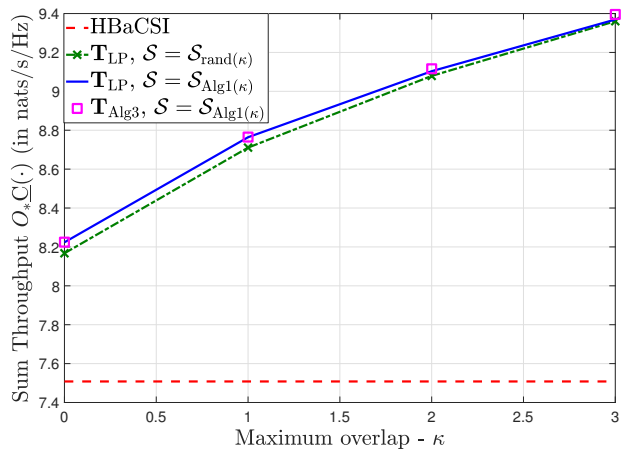}
\caption{Comparison of $O_{\rm HBwS}\underbar{C}^D(\hat{\mathbf{T}}_{\rm LP})$ with $\mathcal{S} = \mathcal{S}_{\rm Alg1(\kappa)}$ to the average sum-throughput $\mathbb{E}\{O_{\rm HBwS}\underbar{C}^D(\hat{\mathbf{T}}_{\rm LP})\}$ with $\mathcal{S} = \mathcal{S}_{\rm rand(\kappa)}$, averaged over several realizations of $\mathcal{S}_{\rm rand(\kappa)}$, as a function of maximum overlap $\kappa$. Here, $\mathcal{S}_{{\rm Alg1}(\kappa)}$ is designed using Algorithm \ref{algo_frankl} and $\mathcal{S}_{\rm rand(\kappa)}$ is a random subset of $\mathcal{S}_{\rm all}$ such that $|\mathcal{S}_{\rm rand(\kappa)}| = |\mathcal{S}_{\rm Alg1(\kappa)}|$. For $\mathcal{S} = \mathcal{S}_{\rm Alg1(\kappa)}$, we also plot $O_{\rm HBwS}\underbar{C}^D(\hat{\mathbf{T}}_{\rm Alg3})$ \big($N=100,D=10,L=20,K=M=4,\rho=10$, $\hat{\mathbf{T}}_{\rm LP}$ is from \cite{MedraRepository}, $\boldsymbol{\Lambda}^{D}_{\rm tx} = \mathbb{I}_D$, $\mathbf{R}_{\rm rx} = \mathbb{I}_M$, $\zeta = 10^{-2}$\big).}
\label{fig_compare_overlapM}
\end{figure}
Fig.~\ref{fig_compare_overlapM} also compares the sum throughput for $\hat{\mathbf{T}}_{\rm LP}$ and $\hat{\mathbf{T}}_{\rm Alg3}$, with $\mathcal{S}=\mathcal{S}_{\rm Alg1(\kappa)}$. The results suggest that even for these reduced complexity switch position sets, the line packing solution $\hat{\mathbf{T}}_{\rm LP}$ is almost as good as $\hat{\mathbf{T}}_{\rm Alg3}$. A similar trend has also been observed for several other switch position sets not discussed in this paper for brevity. This suggests that the design $\hat{\mathbf{T}}_{\rm LP}$ cannot be improved upon via conventional approaches such as Algorithms \ref{algo_permute_mtx} \& \ref{algo_grad_descent}, atleast for the $\mathcal{S}$ considered here.
The good performance of $\hat{\mathbf{T}}_{\rm LP}$ is probably because $f_{\rm FS}(\hat{\mathbf{T}}_{\rm LP}) \approx \pi/2$ when $L$ is of the order of $D$, and hence is near optimal for \eqref{eqn_opt_prbm_Grassmann} (see Theorem \ref{Th_opt_subspace_pack}). Henceforth, we shall avoid the use of the computationally intensive Algorithms \ref{algo_permute_mtx} \& \ref{algo_grad_descent}, and only restrict to use of $\hat{\mathbf{T}}_{\rm LP}$ to study performance of HBwS.

\subsection{Influence of number of users ($M_1$)}
We next consider the case where the representative user group has $M_1$ single antenna users i.e., $M_2=1$. The sum throughput of HBwS (normalized by sum throughput of HBiCSI) for varying $M_1$ is studied in Fig.~\ref{Fig_influence_L_MU}. Such a normalization allows comparison across different values of $M_1$. From the results, we observe that the slope of the normalized throughput curves increase with $M_1$, suggesting that the additional beam choices with HBwS aid in user separability. Apart from $O_{\rm HBwS}\underbar{C}^D(\hat{\mathbf{T}}_{\rm LP})$, which can be achieved via Dirty Paper Coding (DPC) \cite{Caire_DPC}, we also plot in Fig.~\ref{Fig_influence_L_MU} the normalized sum-throughput for Zero-forcing (ZF) precoding. Note that, unlike DPC, ZF precoding may yield poor performance when simultaneously supporting all $M_1$ users \cite{Dimic2005}. Therefore we allow scheduling of a sub-set of users ($\mathcal{M}_{\text{sc}}$) from the user group. The corresponding sum-rate with max-min fairness to the scheduled users can be expressed as:
\begin{flalign} 
& C^{D}_{\rm ZF}(\hat{\mathbf{T}}) = \mathbb{E}_{\widetilde{\mathbf{H}}} \bigg\{ \max_{1 \leq i \leq |\mathcal{S}|} \bigg[ |\mathcal{M}_{\text{sc}}| & \nonumber \\
& \qquad \qquad \log \bigg(1 \!+\! \frac{\rho}{\text{Tr} \big\{ {\big( \widetilde{\mathbf{H}}_{\text{sc}} \mathbf{E}_{\rm tx}^{D} \hat{\mathbf{T}} \mathbf{S}_i \hat{\mathbf{G}}_i \hat{\mathbf{G}}^{\dag}_i \mathbf{S}_i^{\dag} \hat{\mathbf{T}}^{\dag} {[\mathbf{E}_{\rm tx}^{D}]}^{\dag} \widetilde{\mathbf{H}}^{\dag}_{\text{sc}} \big) }^{-1} \big\}} \bigg) \bigg] \bigg\}, \!\!\!\!\!\!\!\!\! \nonumber &
\end{flalign}
where $\widetilde{\mathbf{H}}_{\text{sc}}$ is a sub-matrix of $\widetilde{\mathbf{H}}$ (see \eqref{eqn_cap_MU_nearopt}) corresponding to the scheduled users $\mathcal{M}_{\text{sc}}$. 
%
\begin{figure}[!h]
\centering
\includegraphics[width= 0.45\textwidth]{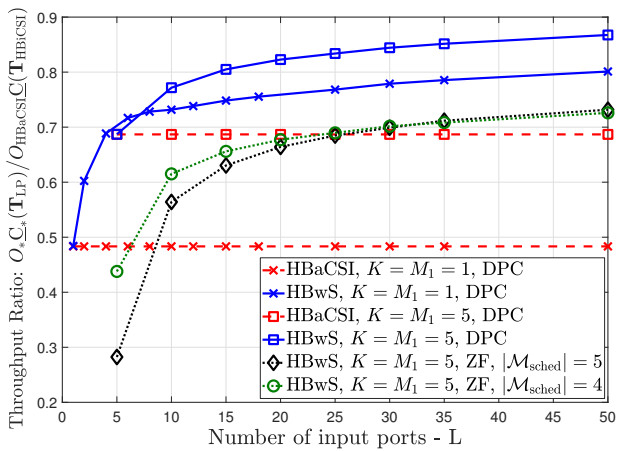}
\caption{Sum throughput of HBwS (normalized by the sum throughput of HBiCSI) versus $L$, for varying number of users. For ZF precoding, the first $|\mathcal{M}_{\text{sc}}|$ users in the user group are $\text{scheduled}^{9}$ \big($N=100, D=10, M_2 = 1, \rho=10$, $\mathcal{S} = \mathcal{S}_{\rm all}$, $\hat{\mathbf{T}}_{\rm LP}$ is from \cite{MedraRepository}, $\boldsymbol{\Lambda}^{D}_{\rm tx} = \mathbb{I}_D$, $\mathbf{R}_{\rm rx} = \mathbb{I}_{M_{1}}$, $\zeta = 10^{-2}$\big).}
\label{Fig_influence_L_MU}
\end{figure}
\footnotetext{Sophisticated user scheduling algorithms such as \cite{Dimic2005} cannot be directly extended to HBwS due to the presence of switching.}
Results show that HBwS helps reduce the throughput gap between the linear precoding scheme (ZF) and DPC, without requiring sophisticated user scheduling. 

\subsection{Influence of the bound on $\big| \mathcal{P}\{\mathbf{T}\} \big|$ - $D$}
As mentioned in Section \ref{sec_chan_est_overhead}, there exists a trade-off between hSNR sum capacity and estimation overhead as the value of $D$ changes. To illustrate this trade-off, we study the sum throughput of HBwS, HBaCSI and HBiCSI as a function of $D$ in Fig.~\ref{Fig_influence_D}, for a channel with isotropic scattering, i.e., $\mathbf{R}_{\rm tx} = \mathbb{I}_N$. 
\begin{figure}[!h]
\centering
\includegraphics[width= 0.45\textwidth]{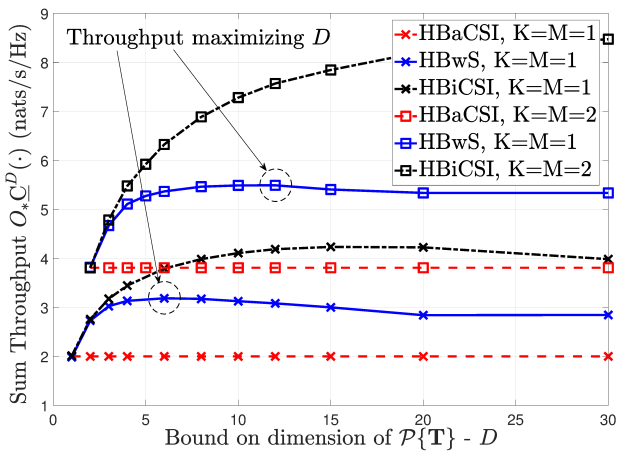}
\caption{Sum throuhgput of different schemes, as a function of $D$ \big($N=100, L=20, \rho=10$, $\mathcal{S} = \mathcal{S}_{\rm all}$, $\hat{\mathbf{T}}_{\rm LP}$ is from \cite{MedraRepository}, $\mathbf{R}_{\rm tx} = \mathbb{I}_N$, $\mathbf{R}_{\rm rx} = \mathbb{I}_M$, $\zeta= 10^{-2}$ \big).}
\label{Fig_influence_D}
\end{figure}
As evident from the results, there exists a $D^{*}$ that maximizes the sum throughput. Furthermore, this $D^{*}$ increases with $K$. Proposing a computationally efficient algorithm to find $D^{*}$ is beyond the scope of this paper. However a simple rule of thumb is to ensure that $D \ll K/\zeta$ to reduce overhead. 

\section{Anisotropic Channels} \label{sec_anisotropy}
In this section, we extend the beamformer design in Sections \ref{sec_restrict_search_space}-\ref{sec_pp_design} to anisotropic scattering in the dominant $D$ dimensional sub-space, i.e., $\boldsymbol{\Lambda}_{\rm tx}^{D} \neq \mathbb{I}_D$. Without loss of generality, we assume that the eigenvalues in $\boldsymbol{\Lambda}_{\rm tx}^{D}$ are arranged in the descending order of magnitude. Similar to the approach used in \cite{Love_beamforming_correlated, Xia2006, Raghavan_RVQ}, we employ the companding trick to adapt the beamforming matrix to $\boldsymbol{\Lambda}_{\rm tx}^{D}$, as:
\begin{eqnarray}
\mathbf{T}_{\rm ani} = \mathbf{E}^D_{\rm tx} {\boldsymbol{\Lambda}^{D}_{\rm tx}} \hat{\mathbf{T}}_{\rm LP}, \label{eqn_skewed_preprocessor}
\end{eqnarray}
where $\hat{\mathbf{T}}_{\rm LP}$ is the $D \times L$ line packed RD-beamformer. Intuitively, $\boldsymbol{\Lambda}^{D}_{\rm tx}$ in \eqref{eqn_skewed_preprocessor} skews the columns of $\mathbf{T}_{\rm ani}$, and therefore also the precoding beams $\mathbf{T}_{\rm ani} \mathbf{S}_i \mathbf{G}_i$, to be more densely packed near the eigen-vectors corresponding to the larger eigenvalues of $\mathbf{R}_{\rm tx}$.\footnote{A more general approach involves using ${[\boldsymbol{\Lambda}^{D}_{\rm tx}]}^{\xi}$ in \eqref{eqn_skewed_preprocessor}, where $\xi > 0$ is a skewing parameter.} 
For improving throughput, an additional optimization of $\mathbf{T}_{\rm ani}$ over the choice of $D$ is required (see Section \ref{sec_chan_est_overhead}). 
Note that this skewed beamformer can still be implemented using the two stage design in Sec \ref{sec_hardware_pp} by using $\mathbf{T}_{\rm var} = \mathbf{E}_{\rm tx}^{D} \boldsymbol{\Lambda}_{\rm tx}^{D} {[\hat{\mathbf{T}}_{\rm LP}]}_{{\rm c}\{1:D\}}$. 
For $L \leq D$, not all line packed matrices $\hat{\mathbf{T}}_{\rm LP}$ yield good performance after skewing by \eqref{eqn_skewed_preprocessor}. Therefore for $L \leq D$, we suggest the use of $\hat{\mathbf{T}}_{\rm LP} = { \left[ \begin{array}{cc} \mathbf{W} & \mathbb{O}_{(D-L)\times L} \end{array} \right]}^{\dag}$ in \eqref{eqn_skewed_preprocessor}, where $\mathbf{W}$ is the $L\times L$ DFT matrix, i.e., ${[\mathbf{W}]}_{a,b} = e^{\frac{j2\pi a b}{L}}/\sqrt{L}$. 

For simulations, we assume the BS has a half wavelength ($\lambda/2$) spaced uniform planar array of dimension $40 \times 10$. The TX transmits to a user-group of $M_1=3$ single antenna receivers that share the same transmit power angle spectrum (PAS), given by: 
\begingroup\makeatletter\def\f@size{9.5}\check@mathfonts
\begin{flalign}
\text{PAS}(\theta,\phi) = \sum_{i=1}^3 \exp{\left[-\eta |\theta \!-\! \tilde{\theta}_i | \!-\! \eta |\phi \!-\! \tilde{\phi}_i | \right]} \Pi(\theta \!-\! \tilde{\theta}_i,\phi \!-\! \tilde{\phi}_i) \label{eqn_PAS} \\
\text{where: } \Pi(\theta,\phi) = \left\{\begin{array}{cc} 1 & \text{for } |\theta | \leq \pi/20, |\phi| \leq \pi/20  \\ 
0 & \text{otherwise} \end{array} \right. , \nonumber
\end{flalign}
\endgroup
$\boldsymbol{\tilde{\theta}} = \left[ \frac{-3\pi}{10},0,\frac{\pi}{5} \right]$, $\boldsymbol{\tilde{\phi}} = \left[ \frac{6\pi}{10},\frac{8\pi}{10},\frac{7\pi}{10} \right]$, $\theta \in [-\pi/2, \pi/2)$ is the azimuth angle of departure, $\phi \in [0, \pi)$ is the elevation angle of departure and $\eta$ is a factor that controls the anisotropy of the channel. For this PAS, the transmit correlation matrix can be computed as in \eqref{eqn_compute_Rcorr}, where $d_H(a,b), d_V(a,b)$ are the horizontal and vertical spacing (in wavelengths) between elements $a$ and $b$, respectively. 
\begin{figure*}[!h]
\begin{eqnarray}
{[\mathbf{R}_{\rm tx}]}_{ab} = \frac{\int_{-\pi/2}^{\pi/2} \int_{0}^{\pi} \text{PAS}(\theta,\phi) e^{j 2\pi d_H(a,b) \sin(\phi) \sin(\theta) + j 2\pi d_V(a,b) \cos(\phi)} \sin(\phi) d\phi d\theta}{\int_{-\pi/2}^{\pi/2} \int_{0}^{\pi} \text{PAS}(\theta,\phi) \sin(\phi) d\phi d\theta}. \label{eqn_compute_Rcorr}
\end{eqnarray}
\end{figure*}
\begin{figure}[!h]
\centering
\subfloat[$\{D,L,K,M\} = \{24,9,3,3\}$]{\label{Fig_compare_anisotropic_1}\includegraphics[width= 0.43\textwidth]{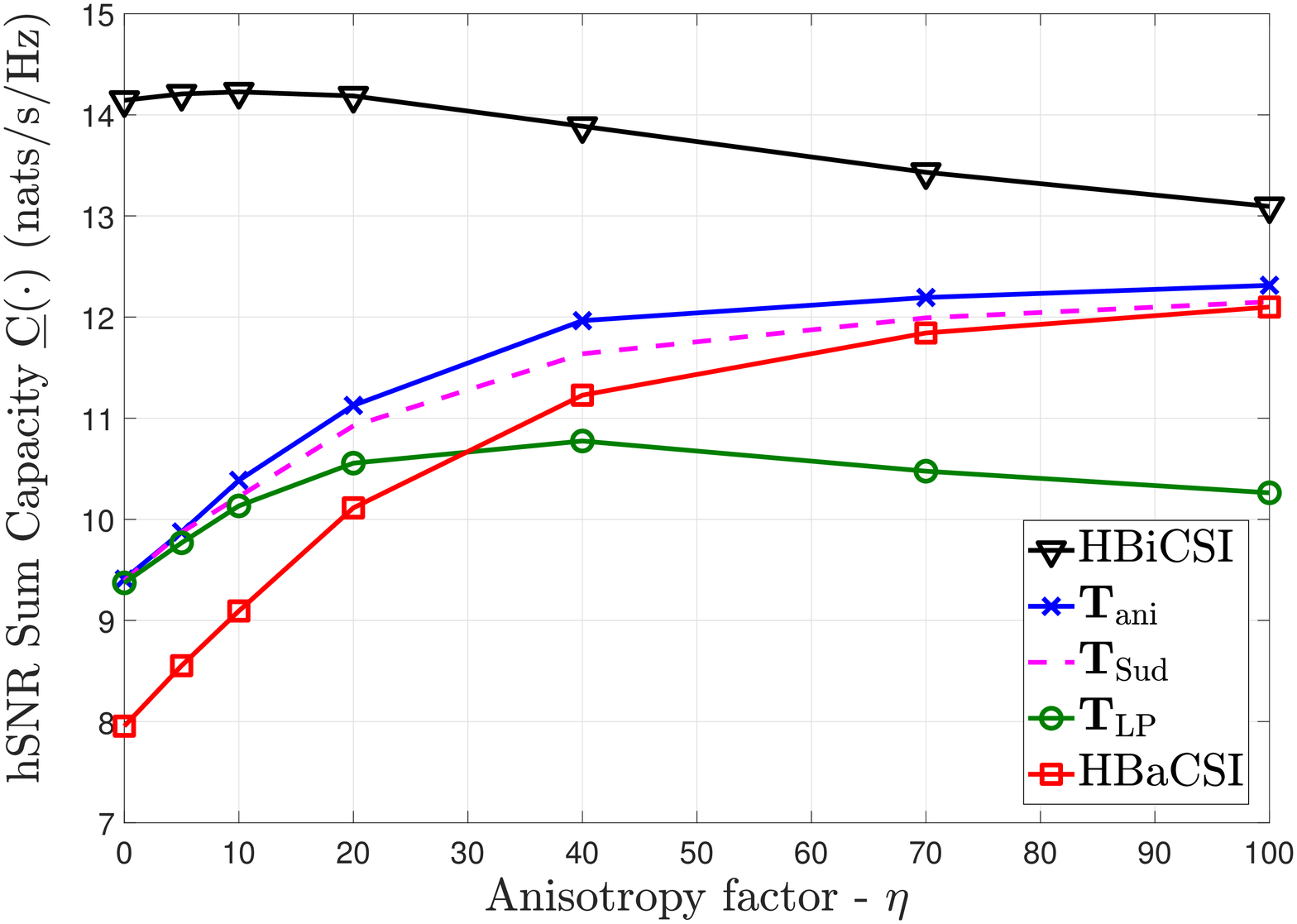}}
\hspace{1 mm}
\subfloat[$\{D,L,K,M\} = \{24,51,3,3\}$]{\label{Fig_compare_anisotropic_2}\includegraphics[width= 0.43\textwidth]{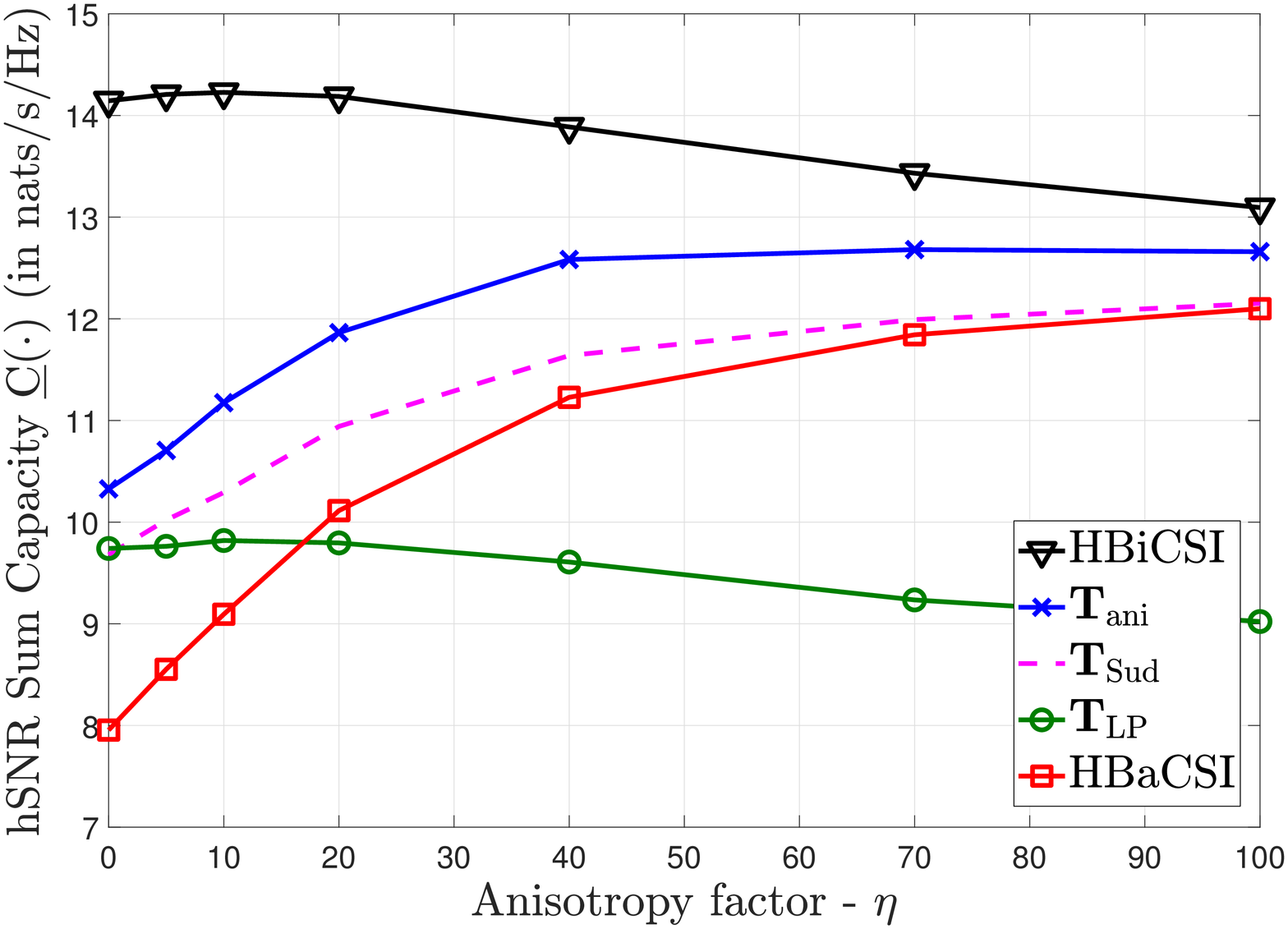}}
\caption{Ergodic hSNR sum capacity for the skewed beamformer $\mathbf{T}_{\rm ani}$ as a function of channel anisotropy. (a) $\hat{\mathbf{T}}_{\rm LP} = { \left[ \mathbf{W} \ \mathbb{O}_{(D-L)\times L} \right]}^{\dag}$, where $\mathbf{W}$ is the $L \times L$ DFT matrix (b) $\hat{\mathbf{T}}_{\rm LP}$ is from \cite{MedraRepository} \big(simulation parameters: $N=400$, $\mathcal{S} = \mathcal{S}_{\rm all}$, $\mathbf{R}_{\rm rx} = \mathbb{I}_M$ and $\rho = 1$\big).}
\label{fig_compare_anisotropic}
\end{figure}
For this channel, the hSNR sum-capacity of the skewed beamforming matrix $\mathbf{T}_{\rm ani}$ is compared to $\mathbf{T}_{\rm LP} = \mathbf{E}^D_{\rm tx} \hat{\mathbf{T}}_{\rm LP}$ in Fig.~\ref{fig_compare_anisotropic} as a function of the channel anisotropy, for the both cases of $L \leq D$ and $L > D$. For a comparison to existing designs, we also depict the hSNR capacity of $\mathbf{T}_{\rm Sud} = {[\mathbf{E}_{\rm tx}]}_{{\rm c}\{ \mu(1:L)\}}$, a generalization of the beamformer in \cite{Sudarshan}, where:\footnote{Here $\mathbf{T}_{\rm Sud}$ is a column permutation of ${[\mathbf{E}_{\rm tx}]}_{{\rm c}\{ 1:L\}}$ which ensures good performance with $\mathcal{S}_{\rm all}$.}
$$\mu(\ell) = \text{mod} \bigg( (\ell-1)K + \Big\lfloor \frac{(\ell-1)K}{L} \Big\rfloor, L\bigg) + 1. $$
We observe from the results that the hSNR capacity gap between HBiCSI and HBaCSI reduces as the channel anisotropy increases. Also, unlike $\mathbf{T}_{\rm LP}$, the hSNR sum-capacity of $\mathbf{T}_{\rm ani}$ does not reduce with increasing anisotropy. Further, we observe that $\mathbf{T}_{\rm ani}$ outperforms both $\mathbf{T}_{\rm Sud}$ and HBaCSI, over the whole range of $\eta$ for both $D \leq L$ and $D > L$. Therefore the designed beamformer is not only more generic (allows non-orthogonal columns) but also leads to a higher hSNR sum-capacity in comparison to existing designs.

\section{Conclusions} \label{sec_conclusions}
In this work we propose a generic architecture for HBwS, as an attractive solution to reduce complexity and cost of massive MIMO systems. We show that a beamforming matrix that maximizes a lower bound to the system sum capacity is obtained by solving a coupled Grassmannian sub-space packing problem. We propose a sub-optimal solution to this problem, and explore algorithms to improve the design for a given $\mathcal{S}$. 
We show that designing the beamformer apriori for a fixed $D$ enables a two stage implementation, having a fixed and an adaptive stage. This implementation lowers the hardware implementation cost significantly when $L > D$. We also show that switch positions with low overlap contribute more to hSNR sum capacity than others, and provide an algorithm for finding a family of such important switch positions. 
Simulation results suggest that with $L = 2D$, HBwS can achieve gains comparable to half the hSNR capacity gap between HBaCSI and HBiCSI. We also conclude that for the explored switch position sets, the RD-beamformer design $ \hat{\mathbf{T}}_{\rm LP}$ cannot be improved further via conventional approaches such as Algorithms \ref{algo_permute_mtx} \& \ref{algo_grad_descent}. 
Furthermore, for a good trade-off between performance and hardware cost, the number of input ports ($L$) should be of the order of $D$. However, larger values may be practical in a multi-user scenario, since a larger $L$ can aid separation of multiple data streams. In particular, it helps make performance of linear ZF precoding comparable to the non-linear and capacity optimal DPC, without the need for sophisticated user scheduling algorithms. Picking $D \ll K/\zeta$ ensures a low channel estimation overhead. Results for anisotropic channels suggest that skewing the line packed beamformer yields better performance in comparison to existing designs. 

As is the case with other such multi-antenna techniques \cite{Molisch_mag}, the performance gain of HBwS over HBaCSI decreases with frequency selective fading. This is because in the limit of a large transmission bandwidth, frequency diversity makes all the $L$ selection ports equivalent. Hence, HBwS is more suited for small-to-medium bandwidth systems, where channels are at-most moderately frequency selective.

\begin{appendices}
\section{} \label{appdix1}
\begin{proof}[(Proof of Theorem \ref{Th_restrict_dom_eigspace})]
Let $\mathbf{E}_{\rm tx}^D = {[\mathbf{E}_{\rm tx}]}_{{\rm c}\{1:D\}}$, and let us additionally define $\mathbf{E}_{\rm tx}^{N-D} \triangleq {[\mathbf{E}_{\rm tx}]}_{{\rm c}\{D+1:N\}}$. Since $\mathbf{E}_{\rm tx} \mathbf{E}_{\rm tx}^{\dag} = \mathbb{I}_N$, for any beamforming matrix $\mathbf{T}$ we can write:
\begin{eqnarray} \label{eqn_th1_T_split}
\mathbf{T} = \mathbf{E}_{\rm tx} \mathbf{E}_{\rm tx}^{\dag} \mathbf{T} = \mathbf{E}_{\rm tx}^{D} {[\mathbf{E}_{\rm tx}^{D}]}^{\dag} \mathbf{T} + \mathbf{E}_{\rm tx}^{N-D} {[\mathbf{E}_{\rm tx}^{N-D}]}^{\dag} \mathbf{T}.
\end{eqnarray}
Additionally let the principal $D \times D$ submatrix of $\boldsymbol{\Lambda}_{\rm tx}$, that contains all non-zero eigenvalues, be given by $\boldsymbol{\Lambda}^{D}_{\rm tx}$. Defining $\hat{\mathbf{T}} \triangleq {[\mathbf{E}_{\rm tx}^D]}^{\dag} \mathbf{T}$, for any user $m$, from \eqref{eqn_downlink_y1} and \eqref{eqn_channel_KL_exapnsion} we have:
\begin{eqnarray}
\mathbf{y}_m(\mathbf{S}_i) &=& \sqrt{\rho} \mathbf{R}_{{\rm rx},m}^{1/2} \mathbf{H}_m {[\boldsymbol{\Lambda}_{\rm tx}]}^{1/2} {[\mathbf{E}_{\rm tx}]}^{\dag} \mathbf{T} \mathbf{S}_i \mathbf{x} + \mathbf{n}_m \nonumber \\
&=& \sqrt{\rho} \mathbf{R}_{{\rm rx},m}^{1/2} \mathbf{H}_m {[\boldsymbol{\Lambda}^D_{\rm tx}]}^{1/2} {[\mathbf{E}^D_{\rm tx}]}^{\dag} \mathbf{E}_{\rm tx}^{D} \hat{\mathbf{T}} \mathbf{S}_i \mathbf{x} + \mathbf{n}_m, \label{eqn_optThat_th1}
\end{eqnarray}
where \eqref{eqn_optThat_th1} follows from \eqref{eqn_th1_T_split} and the fact that $\text{rank}\{\mathbf{R}_{\rm tx}\} \leq D$. Now, from the transmit power constraint in \eqref{eqn_TX_pow_constraint1} we have:
\begin{subequations}
\begin{flalign}
& \quad \ \ \text{tr} \left\{ \mathbf{T}\mathbf{S}_i \mathbb{E}_{\mathbf{x}}\{\mathbf{x} \mathbf{x}^{\dag}\} \mathbf{S}_i^{\dag} \mathbf{T}^{\dag} \right\} \leq 1 & \nonumber \\
& \Rightarrow \text{tr} \left\{ \mathbf{E}_{\rm tx}^D \Big( {[\mathbf{E}_{\rm tx}^D]}^{\dag} \mathbf{E}_{\rm tx}^D \Big) \hat{\mathbf{T}} \mathbf{S}_i \mathbb{E}_{\mathbf{x}}\{\mathbf{x} \mathbf{x}^{\dag}\} \mathbf{S}_i^{\dag} \mathbf{T}^{\dag}  \right\} & \nonumber \\
& \qquad \qquad + \text{tr} \left\{ \mathbf{E}_{\rm tx}^{N-D} {[\mathbf{E}_{\rm tx}^{N-D}]}^{\dag} \mathbf{T} \mathbf{S}_i \mathbb{E}_{\mathbf{x}}\{\mathbf{x} \mathbf{x}^{\dag}\} \mathbf{S}_i^{\dag} \mathbf{T}^{\dag} \right\} \leq 1 \!\!\! & \label{eqn_pow_const_th1_1}\\
& \Rightarrow \text{tr} \left\{ \mathbf{E}_{\rm tx}^D \hat{\mathbf{T}} \mathbf{S}_i \mathbb{E}_{\mathbf{x}}\{\mathbf{x} \mathbf{x}^{\dag}\} \mathbf{S}_i^{\dag} \hat{\mathbf{T}}^{\dag} {[\mathbf{E}_{\rm tx}^D]}^{\dag} \right\} & \nonumber \\
& \qquad \qquad + \text{tr} \left\{ {[\mathbf{E}_{\rm tx}^{N-D}]}^{\dag}\mathbf{T} \mathbf{S}_i \mathbb{E}_{\mathbf{x}}\{\mathbf{x} \mathbf{x}^{\dag}\} \mathbf{S}_i^{\dag} \mathbf{T}^{\dag} \mathbf{E}_{\rm tx}^{N-D} \right\} \leq 1 \!\!\! & \label{eqn_pow_const_th1_2} \\
& \Rightarrow \text{tr} \left\{ \mathbf{E}_{\rm tx}^{D} \hat{\mathbf{T}} \mathbf{S}_i \mathbb{E}_{\mathbf{x}}\{\mathbf{x} \mathbf{x}^{\dag}\} \mathbf{S}_i^{\dag} \hat{\mathbf{T}}^{\dag} {[\mathbf{E}_{\rm tx}^{D}]}^{\dag} \right\} \leq 1, & \label{eqn_pow_const_th1_3}
\end{flalign}
\end{subequations}
where \eqref{eqn_pow_const_th1_1} follows by replacing $\mathbf{T}$ using \eqref{eqn_th1_T_split} and by using ${[\mathbf{E}_{\rm tx}^D]}^{\dag} \mathbf{E}_{\rm tx}^D = \mathbb{I}_D$, \eqref{eqn_pow_const_th1_2} follows from the identity that $\text{tr}\{\mathbf{A^{\dag}B}\} = \text{tr}\{\mathbf{BA^{\dag}}\}$ for matrices $\mathbf{A,B}$ of same dimension, and \eqref{eqn_pow_const_th1_3} follows by observing that both matrices on the left hand side are positive semi-definite.
From \eqref{eqn_pow_const_th1_3} and \eqref{eqn_optThat_th1}, for any $\mathbf{T}$ and $\mathbb{E}\{\mathbf{x} \mathbf{x}^{\dag}\}$, there exists a $D\times L$ matrix $\hat{\mathbf{T}}$ such that if $\mathbf{T}$ satisfies the power constraint then so does $\mathbf{E}_{\rm tx}^D \hat{\mathbf{T}}$ and both $\mathbf{T}, \mathbf{E}_{\rm tx}^D \hat{\mathbf{T}}$ yield the same instantaneous received signal $\mathbf{y}_m(\mathbf{S}_i)$. Therefore the theorem follows. 
\end{proof}

\section{} \label{appdix2}
\begin{proof}[(Proof of Theorem \ref{Th_bound_eigspace})]
Let $\hat{\mathbf{T}} \in \mathbb{C}^{D \times L}$ be any $D \times L$ matrix. For each $\hat{\mathbf{T}} \mathbf{S}_i$ we have a corresponding $K \times K$ ortho-normalization matrix $\hat{\mathbf{G}}_i$. 
Now consider $\hat{\mathbf{T}}_{\theta} = \hat{\mathbf{T}} \boldsymbol{\Lambda}_{\theta}$ where $\boldsymbol{\Lambda}_{\theta}$ is any $L \times L$ non-singular complex diagonal matrix. For each selection matrix $\mathbf{S}_i$, by defining a corresponding matrix $\hat{\mathbf{G}}_{i\theta} \triangleq \mathbf{S}_i^{\dag} \boldsymbol{\Lambda}_{\theta}^{-1} \mathbf{S}_i \hat{\mathbf{G}}_i$, we have:
\begin{subequations}
\begin{flalign}
& \hat{\mathbf{G}}_{i\theta}^{\dag} \mathbf{S}_i^{\dag} \hat{\mathbf{T}}_{\theta}^{\dag} \hat{\mathbf{T}}_{\theta} \mathbf{S}_i \hat{\mathbf{G}}_{i\theta} = \hat{\mathbf{G}}_i^{\dag} \mathbf{S}_i^{\dag} {[\boldsymbol{\Lambda}_{\theta}^{-1}]}^{\dag} \mathbf{S}_i \mathbf{S}_i^{\dag} \boldsymbol{\Lambda}_{\theta}^{\dag} \hat{\mathbf{T}}^{\dag} \hat{\mathbf{T}} \boldsymbol{\Lambda}_{\theta} \mathbf{S}_i \mathbf{S}_i^{\dag} \boldsymbol{\Lambda}_{\theta}^{-1} \mathbf{S}_i \hat{\mathbf{G}}_i & \nonumber \\
&\qquad \qquad \qquad \quad = \hat{\mathbf{G}}_i^{\dag} \mathbf{S}_i^{\dag} \mathbf{S}_i \mathbf{S}_i^{\dag} \hat{\mathbf{T}}^{\dag} \hat{\mathbf{T}} \mathbf{S}_i \mathbf{S}_i^{\dag} \mathbf{S}_i \hat{\mathbf{G}}_i & \label{eqn_th2_2ndlst_step} \\
&\qquad \qquad \qquad \quad = \hat{\mathbf{G}}_i^{\dag} \mathbf{S}_i^{\dag} \hat{\mathbf{T}}^{\dag} \hat{\mathbf{T}} \mathbf{S}_i \hat{\mathbf{G}}_i & \label{eqn_th2_lst_step} \\
&\qquad \qquad \qquad \quad = \mathbb{I}_{K}, & \nonumber 
\end{flalign}
\end{subequations}
where \eqref{eqn_th2_2ndlst_step} follows from the fact that $\mathbf{S}_i\mathbf{S}_i^{\dag}$ is diagonal and hence commutes with $\boldsymbol{\Lambda}_{\theta}$ and \eqref{eqn_th2_lst_step} uses the fact that $\mathbf{S}_i^{\dag} \mathbf{S}_i = \mathbb{I}_K$. This proves that $\hat{\mathbf{G}}_{i\theta}$ ortho-normalizes columns of $\hat{\mathbf{T}}_{\theta} \mathbf{S}_i$. 

Using a similar sequence of steps it can be shown that $\hat{\mathbf{T}}_{\theta} \mathbf{S}_i \hat{\mathbf{G}}_{i\theta} \hat{\mathbf{G}}_{i\theta}^{\dag} \mathbf{S}_i^{\dag} \hat{\mathbf{T}}_{\theta}^{\dag} = \hat{\mathbf{T}} \mathbf{S}_i \hat{\mathbf{G}}_i \hat{\mathbf{G}}_i^{\dag} \mathbf{S}_i^{\dag} \hat{\mathbf{T}}^{\dag}$ and hence from \eqref{eqn_def_fobjD}, $\underbar{C}^{D}(\hat{\mathbf{T}}_{\theta}) = \underbar{C}^{D}(\hat{\mathbf{T}})$. This proves the theorem.
\end{proof}
\section{} \label{appdix3}
\begin{proof}[(Proof of Lemma \ref{Th_subopt_selection})]
Consider the decomposition $\mathbf{H}^D = \mathbf{U}_{\mathbf{H}^D} \boldsymbol{\Lambda}_{\mathbf{H}^D} \mathbf{V}^{\dag}_{\mathbf{H}^D}$, where $\mathbf{U}_{\mathbf{H}^D}$ is the $M\times M$ left singular-vector matrix, $ \boldsymbol{\Lambda}_{\mathbf{H}^D}$ is the $M\times M$ diagonal matrix containing the non-zero singular values, and $\mathbf{V}_{\mathbf{H}^D}$ is the $D \times M$ semi-unitary matrix containing the $M$ right singular vectors (corresponding to the non-zero singular values) for $\mathbf{H}^D$ in \eqref{eqn_def_fobjD}.\footnote{Note that this is the \emph{compact singular value decomposition} \cite{Edelman1999} which is slightly different from the usual approach of singular value decomposition.} Now, using the fact that $|\mathbb{I} + \mathbf{A}| \geq 1+ |\mathbf{A}|$ for any positive semi-definite matrix $\mathbf{A}$, we can bound $\underbar{C}^{D}(\hat{\mathbf{T}})$ in \eqref{eqn_def_fobjD}, for $\boldsymbol{\Lambda}^D_{\rm tx} = \mathbb{I}_D$, as:
\begin{eqnarray}
\underbar{C}^{D}(\hat{\mathbf{T}}) \geq \mathbb{E}_{\boldsymbol{\Lambda}_{\mathbf{H}^D}} \mathbb{E}_{\mathbf{V}_{\mathbf{H}^D}} \left\{ \max_{1 \leq i \leq |\mathcal{S}|} \log \left[ 1 + \alpha \left| \mathbf{V}^{\dag}_{\mathbf{H}^D} \mathbf{Q}_i \mathbf{Q}^{\dag}_i \mathbf{V}_{\mathbf{H}^D} \right| \right] \right\}, \label{eqn_cap_LB1} 
\end{eqnarray}
where $\alpha = {\left(\frac{\rho}{M} \right)}^M \left|\mathbf{R}_{\rm rx} \right| {\left| \boldsymbol{\Lambda}_{\mathbf{H}^D} \right|}^2$. Since $\mathbf{H}^D$ has i.i.d. $\mathcal{CN}(0,1)$ components, it is well known that $\boldsymbol{\Lambda}_{\mathbf{H}^D}$ and $\mathbf{V}_{\mathbf{H}^D}$ are independently distributed and further, $\mathbf{V}_{\mathbf{H}^D}$ is uniformly distributed over $\mathcal{U}(D,M)$ \cite[Lemma 4]{Love_SMux}. 

Consider a random $D\times K$ matrix $\mathbf{V}$, independent of $\mathbf{H}^D$ and uniformly distributed over $\mathcal{U}(D,K)$. Since the uniform measure is invariant under unitary transformation, for any $D \times D$ unitary matrix $\mathbf{U}$, we have:
$$\mathbf{V} \,{\buildrel d \over =}\, \mathbf{U V} \Rightarrow {[\mathbf{V}]}_{{\rm c}\{1:M\}} \,{\buildrel d \over =}\, \mathbf{U} {[\mathbf{V}]}_{{\rm c}\{1:M\}}.$$
In other words, ${[\mathbf{V}]}_{{\rm c}\{1:M\}}$ is uniformly distributed over $\mathcal{U}(D,M)$ and therefore $\mathbf{V}_{\mathbf{H}^D} \,{\buildrel d \over =}\, {[\mathbf{V}]}_{{\rm c}\{1:M\}}$. Now replacing $\mathbf{V}_{\mathbf{H}^D}$ in \eqref{eqn_cap_LB1} by ${[\mathbf{V}]}_{{\rm c}\{1:M\}}$, we have:
\begin{flalign}
& \underbar{C}^{D}(\hat{\mathbf{T}}) \geq & \nonumber \\
& \quad \mathbb{E}_{\boldsymbol{\Lambda}_{\mathbf{H}^D}} \mathbb{E}_{\mathbf{V}} \! \left\{ \max_{1 \leq i \leq |\mathcal{S}|} \!\! \log \! \left[1 \!+\! \alpha \! \left| {[\mathbf{V}]}_{{\rm c}\{1:M\}}^{\dag} \mathbf{Q}_i \mathbf{Q}^{\dag}_i {[\mathbf{V}]}_{{\rm c}\{1:M\}} \right| \right] \right\} \!\!\!\!\! & \label{eqn_th3_proof1}
\end{flalign}
Note that ${[\mathbf{V}]}_{{\rm c}\{1:M\}}^{\dag} \mathbf{Q}_i \mathbf{Q}^{\dag}_i {[\mathbf{V}]}_{{\rm c}\{1:M\}}$ is the $M\times M$ principal sub-matrix of $\mathbf{V}^{\dag} \mathbf{Q}_i \mathbf{Q}^{\dag}_i \mathbf{V}$. Therefore we have $\forall 1 \leq i\leq M$:
\begin{subequations} \label{eqn_th3_proof2}
\begin{flalign}
& \lambda^{\downarrow}_i\{ {[\mathbf{V}]}_{{\rm c}\{1:M\}}^{\dag} \mathbf{Q}_i \mathbf{Q}^{\dag}_i {[\mathbf{V}]}_{{\rm c}\{1:M\}} \} \geq \lambda^{\downarrow}_{i+(K-M)}\{ \mathbf{V}^{\dag} \mathbf{Q}_i \mathbf{Q}^{\dag}_i \mathbf{V} \}, \!\!\!\!\!\!\!\!  & \label{eqn_th3_proof2a} \\
& 1 \geq \lambda^{\downarrow}_{k}\{ \mathbf{V}^{\dag} \mathbf{Q}_i \mathbf{Q}^{\dag}_i \mathbf{V} \} \geq 0 \ \ \forall 1 \leq k \leq K, &  \label{eqn_th3_proof2b}
\end{flalign}
\end{subequations}
where $\lambda^{\downarrow}_k\{\mathbf{A}\}$ represents the $k$-th largest eigenvalue of a square matrix $\mathbf{A}$, \eqref{eqn_th3_proof2a} follows from the Cauchy's Interlacing Theorem \cite[Corollary 3.1.5]{Bhatia} and \eqref{eqn_th3_proof2b} is obtained by using the results on the spectral norm \cite{Roger1990} and by observing that both $\mathbf{V}$ and $\mathbf{Q}_i$ have a largest singular value of $1$. Since the determinant is the product of eigenvalues, from \eqref{eqn_th3_proof1} and \eqref{eqn_th3_proof2}, we arrive at \eqref{eqn_th3_LB2}. Note that ${\left| \boldsymbol{\Lambda}_{\mathbf{H}^D} \right|}^2 = \big|\mathbf{H}^D {[\mathbf{H}^D]}^{\dag} \big|$.
\end{proof}

\section{} \label{appdix5}
Here we explore some numerical algorithms to find good solutions to \eqref{eqn_opt_prbm_Grassmann}, starting with an initial solution of $\hat{\mathbf{T}}_{\rm LP}$. First we propose the use of a greedy algorithm that permutes columns of $\hat{\mathbf{T}}_{\rm LP}$ to increase $f_{\rm FS}(\cdot)$, as depicted in Algorithm \ref{algo_permute_mtx}. 
The performance of this permuted matrix $\hat{\mathbf{T}}_{\rm Alg2}$ may further be improved via a numerical gradient ascent of $f_{\rm FS}(\hat{\mathbf{T}})$, as depicted in Algorithm \ref{algo_grad_descent}. Since $f_{\rm FS}(\hat{\mathbf{T}})$ involves the $\min\{\}$ function which may not be differentiable, we use a differentiable relaxation of $f_{\rm FS}(\hat{\mathbf{T}})$ in Algorithm \ref{algo_grad_descent}:
$$\tilde{f}_{\rm FS}(\hat{\mathbf{T}}) = - \log \bigg[ \sum_{i}^{|\mathcal{S}|} \sum_{j = i+1}^{|\mathcal{S}|} \exp{ \big\{ -10 d_{\rm FS}(\mathbf{Q}_i,\mathbf{Q}_j) \big\}} \bigg]/10. $$
We therefore have $\hat{\mathbf{T}}_{\rm LP} \,{\buildrel \text{Algo\ref{algo_permute_mtx}} \over \longrightarrow }\, \hat{\mathbf{T}}_{\rm Alg2} \,{\buildrel \text{Algo\ref{algo_grad_descent}} \over \longrightarrow }\, \hat{\mathbf{T}}_{\rm Alg3}$, where: 
$$ f_{\rm FS}\big(\hat{\mathbf{T}}_{\rm LP} \big) \leq f_{\rm FS} \big(\hat{\mathbf{T}}_{\rm Alg2} \big) \leq f_{\rm FS} \big(\hat{\mathbf{T}}_{\rm Alg3} \big) \leq f_{\rm FS} \big(\hat{\mathbf{T}}_{\rm FS} \big).$$
\begin{algorithm}
\caption{Greedy column permutation algorithm}\label{algo_permute_mtx}
\begin{algorithmic}[1]
\STATE Inputs: $D, L, \mathcal{S}$
\STATE Initialize $\hat{\mathbf{T}}$
\COMMENT{As obtained by line packing \cite{Vishnu_ICC2017}}
\REPEAT
\FOR{$\ell = 1$ to $L$}
\FOR{$j = 1$ to $L$}
\STATE{$\hat{\mathbf{T}}_{\rm Alg2}(j) = \hat{\mathbf{T}}$}
\STATE{${[\hat{\mathbf{T}}_{\rm Alg2}(j)]}_{{\rm c}\{\ell\}} = {[\hat{\mathbf{T}}]}_{{\rm c}\{j\}}$ and ${[\hat{\mathbf{T}}_{\rm Alg2}(j)]}_{{\rm c}\{j\}} = {[\hat{\mathbf{T}}]}_{{\rm c}\{\ell\}}$}
\COMMENT{Swap columns $\ell$ and $j$}
\ENDFOR
\STATE{Find $j^{*}$ such that $f_{\rm FS}(\hat{\mathbf{T}}_{\rm Alg2}(j^{*}))$ is largest}
\IF{$f_{\rm FS}(\hat{\mathbf{T}}_{\rm Alg2}(j^{*})) > f_{\rm FS}(\hat{\mathbf{T}})$}
\STATE{$\hat{\mathbf{T}} = \hat{\mathbf{T}}_{\rm Alg2}(j^{*})$}
\ENDIF
\ENDFOR
\UNTIL{Convergence of $f_{\rm FS}(\hat{\mathbf{T}})$}
\RETURN $\hat{\mathbf{T}}$
\end{algorithmic}
\end{algorithm}
\begin{algorithm}
\caption{Gradient ascent algorithm for RD-beamformer design}\label{algo_grad_descent}
\begin{algorithmic}[1]
\STATE Inputs: $D, L, \mathcal{S}$
\STATE Initialize $\hat{\mathbf{T}}, \gamma, \Delta T$
\STATE $\hat{\mathbf{T}}_{\rm new} = \hat{\mathbf{T}}$
\REPEAT
\STATE $\hat{\mathbf{T}} = \hat{\mathbf{T}}_{\rm new}$
\FOR{$i = 1$ to $D$}
\FOR{$j = 1$ to $L$}
\STATE $\hat{\mathbf{T}}_{\Delta} = \hat{\mathbf{T}}$
\STATE ${[\hat{\mathbf{T}}_{\Delta}]}_{ij} = {[\hat{\mathbf{T}}]}_{ij} + \Delta T$
\STATE $\mathbf{F}_{ij} = \frac{\tilde{f}_{\rm FS}(\hat{\mathbf{T}}_{\Delta}) - \tilde{f}_{\rm FS}(\hat{\mathbf{T}})}{\Delta T}$
\COMMENT{Computing the gradient of objective function}
\ENDFOR
\ENDFOR
\FOR{$j = 1$ to $L$}
\STATE $\mathbf{t} = {[\mathbf{F}]}_{{\rm c}\{j\}} - {[\hat{\mathbf{T}}]}_{{\rm c}\{j\}} {[\hat{\mathbf{T}}]}_{{\rm c}\{j\}}^{\dag} {[\mathbf{F}]}_{{\rm c}\{j\}} $
\COMMENT{Component of the gradient tangential to unit norm constraint}
\STATE ${[\hat{\mathbf{T}}_{\rm new}]}_{{\rm c}\{j\}} = {[\hat{\mathbf{T}}]}_{{\rm c}\{j\}} \cos(\gamma|\mathbf{t}|) + \mathbf{t}\sin(\gamma|\mathbf{t}|)/|\mathbf{t}|$
\COMMENT{Update process ensures unit norm columns \cite{Edelman1999}}
\ENDFOR
\UNTIL{$\tilde{f}_{\rm FS}(\hat{\mathbf{T}}_{\rm new}) \leq \tilde{f}_{\rm FS}(\hat{\mathbf{T}})$ }
\RETURN $\hat{\mathbf{T}}$
\end{algorithmic}
\end{algorithm}
The complexities of Algorithms \ref{algo_permute_mtx} and \ref{algo_grad_descent} are both $O(D L {|\mathcal{S}|}^2)$ per iteration, which can be substantial when $|\mathcal{S}|$ and/or $L$ is large. 

\section{} \label{appdix4}
Let us define for each selection matrix $\mathbf{S}_i$ a corresponding set $\mathcal{B}_i \subset \{1,..,L\}$ such that $\ell \in \mathcal{B}_i$ iff ${[\mathbf{S}_i]}_{{\rm c}\{k\}} = {[\mathbb{I}_L]}_{{\rm c}\{\ell\}}$ for some $1 \leq k \leq K$. Then for any $\ell \in \mathcal{B}_i \cap \mathcal{B}_j$, we have ${[\mathbf{\hat{T}}]}_{{\rm c}\{\ell\}} \in \mathcal{P}\{\mathbf{Q}_i\} \cap \mathcal{P}\{\mathbf{Q}_j\}$ i.e., there exists a $K\times 1$ vectors $\mathbf{i}_{\ell}, \mathbf{j}_{\ell}$ such that $\mathbf{Q}_i \mathbf{i}_{\ell} = \mathbf{Q}_j \mathbf{j}_{\ell} = {[\mathbf{\hat{T}}]}_{{\rm c}\{\ell\}}$. Furthermore since $\mathbf{Q}_i^{\dag} \mathbf{Q}_i = \mathbf{Q}_j^{\dag} \mathbf{Q}_j = \mathbb{I}_K$, we have:
\begin{eqnarray}
(\mathbf{Q}_j \mathbf{Q}_j^{\dag}) {[\mathbf{\hat{T}}]}_{{\rm c}\{\ell\}} &=& (\mathbf{Q}_j \mathbf{Q}_j^{\dag} ) \mathbf{Q}_j \mathbf{j}_{\ell} \nonumber \\
\Rightarrow \mathbf{Q}_j \mathbf{Q}^{\dag}_j {[\mathbf{\hat{T}}]}_{{\rm c}\{\ell\}} &=& {[\mathbf{\hat{T}}]}_{{\rm c}\{\ell\}} \nonumber \\
\Rightarrow \mathbf{Q}_j \mathbf{Q}^{\dag}_j \big( \mathbf{Q}_i \mathbf{i}_{\ell} \big) &=& \mathbf{Q}_i \mathbf{i}_{\ell} \nonumber \\
\Rightarrow \mathbf{Q}_i^{\dag} \mathbf{Q}_j \mathbf{Q}^{\dag}_j \big( \mathbf{Q}_i \mathbf{i}_{\ell} \big) &=& \mathbf{i}_{\ell}, \nonumber
\end{eqnarray}
i.e., $\mathbf{i}_{\ell}$ is an eigen-vector of $\mathbf{Q}^{\dag}_i \mathbf{Q}_j \mathbf{Q}^{\dag}_j \mathbf{Q}_i$ with eigenvalue $1$. If the vectors $\{{[\mathbf{\hat{T}}]}_{{\rm c}\{\ell\}} | \ell \in \mathcal{B}_i \cap \mathcal{B}_j\} $ are linearly independent, then so are $\{\mathbf{i}_{\ell} | \ell \in \mathcal{B}_i \cap \mathcal{B}_j\}$. Therefore, $\mathbf{Q}^{\dag}_i \mathbf{Q}_j \mathbf{Q}^{\dag}_j \mathbf{Q}_i$ has $|\mathcal{B}_i \cap \mathcal{B}_j|$ unity eigenvalues.
\end{appendices}

\bibliographystyle{ieeetr}
\bibliography{references_jrnl_ppr2}

\begin{thebibliography}{10}

\bibitem{Marzetta}
T.~Marzetta, ``Noncooperative cellular wireless with unlimited numbers of base
  station antennas,'' {\em IEEE Transactions on Wireless Communications},
  vol.~9, pp.~3590--3600, November 2010.

\bibitem{Keytech_5G}
F.~Boccardi, R.~Heath, A.~Lozano, T.~Marzetta, and P.~Popovski, ``Five
  disruptive technology directions for {5G},'' {\em IEEE Communications
  Magazine}, vol.~52, pp.~74--80, February 2014.

\bibitem{Pedersen2000}
K.~I. Pedersen, P.~E. Mogensen, and B.~H. Fleury, ``A stochastic model of the
  temporal and azimuthal dispersion seen at the base station in outdoor
  propagation environments,'' {\em IEEE Transactions on Vehicular Technology},
  vol.~49, pp.~437--447, Mar 2000.

\bibitem{molisch2014propagation}
A.~F. Molisch and F.~Tufvesson, ``Propagation channel models for
  next-generation wireless communications systems,'' {\em IEICE Transactions on
  Communications}, vol.~97, no.~10, pp.~2022--2034, 2014.

\bibitem{Akdeniz}
M.~Akdeniz, Y.~Liu, M.~Samimi, S.~Sun, S.~Rangan, T.~Rappaport, and E.~Erkip,
  ``Millimeter wave channel modeling and cellular capacity evaluation,'' {\em
  IEEE Journal on Selected Areas in Communications}, vol.~32, pp.~1164--1179,
  June 2014.

\bibitem{Haneda2016}
K.~Haneda, J.~Zhang, L.~Tan, G.~Liu, {\em et~al.}, ``{5G 3GPP}-like channel
  models for outdoor urban microcellular and macrocellular environments,'' in
  {\em IEEE Vehicular Technology Conference (VTC Spring)}, pp.~1--7, 2016.

\bibitem{Molisch_VarPhaseShift}
X.~Zhang, A.~Molisch, and S.-Y. Kung, ``Variable-phase-shift-based
  {RF}-baseband codesign for {MIMO} antenna selection,'' {\em IEEE Transactions
  on Signal Processing}, vol.~53, pp.~4091--4103, Nov 2005.

\bibitem{Sudarshan}
P.~Sudarshan, N.~Mehta, A.~Molisch, and J.~Zhang, ``Channel statistics-based
  {RF} pre-processing with antenna selection,'' {\em IEEE Transactions on
  Wireless Communications}, vol.~5, pp.~3501--3511, December 2006.

\bibitem{Karamalis2006}
P.~D. Karamalis, N.~D. Skentos, and A.~G. Kanatas, ``Adaptive antenna subarray
  formation for {MIMO} systems,'' {\em IEEE Transactions on Wireless
  Communications}, vol.~5, pp.~2977--2982, November 2006.

\bibitem{venkateswaran2010analog}
V.~Venkateswaran and A.-J. van~der Veen, ``Analog beamforming in {MIMO}
  communications with phase shift networks and online channel estimation,''
  {\em IEEE Transactions on Signal Processing}, vol.~58, no.~8, pp.~4131--4143,
  2010.

\bibitem{Ayach2012}
O.~El~Ayach, R.~Heath, S.~Abu-Surra, S.~Rajagopal, and Z.~Pi, ``The capacity
  optimality of beam steering in large millimeter wave {MIMO} systems,'' in
  {\em IEEE International Workshop on Signal Processing Advances in Wireless
  Communications (SPAWC),}, pp.~100--104, June 2012.

\bibitem{Alkhateeb2013}
A.~Alkhateeb, O.~El~Ayach, G.~Leus, and R.~Heath, ``Hybrid precoding for
  millimeter wave cellular systems with partial channel knowledge,'' in {\em
  Information Theory and Applications Workshop (ITA)}, pp.~1--5, Feb 2013.

\bibitem{Ayach_iCSI}
O.~El~Ayach, S.~Rajagopal, S.~Abu-Surra, Z.~Pi, and R.~Heath, ``Spatially
  sparse precoding in millimeter wave {MIMO} systems,'' {\em IEEE Transactions
  on Wireless Communications}, vol.~13, pp.~1499--1513, March 2014.

\bibitem{Alkhateeb2014}
A.~Alkhateeb, O.~E. Ayach, G.~Leus, and R.~W. Heath, ``Channel estimation and
  hybrid precoding for millimeter wave cellular systems,'' {\em IEEE Journal of
  Selected Topics in Signal Processing}, vol.~8, pp.~831--846, Oct 2014.

\bibitem{Adhikary_JSDM}
A.~Adhikary, J.~Nam, J.-Y. Ahn, and G.~Caire, ``Joint spatial division and
  multiplexing -- the large-scale array regime,'' {\em IEEE Transactions on
  Information Theory}, vol.~59, pp.~6441--6463, Oct 2013.

\bibitem{Alkhateeb2015}
A.~Alkhateeb, G.~Leus, and R.~W. Heath, ``Limited feedback hybrid precoding for
  multi-user millimeter wave systems,'' {\em IEEE Transactions on Wireless
  Communications}, vol.~14, pp.~6481--6494, Nov 2015.

\bibitem{Docomo_3gpp_meeting}
{NTT Docomo}, ``Initial performance investigation of hybrid beamforming,''
  Tech. Rep. {R1-167384}, {3GPP}, {Gothenburg, Sweden}, August 2016.

\bibitem{Alkhateeb2014_mag}
A.~Alkhateeb, J.~Mo, N.~Gonz{\'{a}}lez-Prelcic, and R.~W. Heath, ``{MIMO}
  precoding and combining solutions for millimeter-wave systems,'' {\em IEEE
  Communications Magazine}, vol.~52, pp.~122--131, December 2014.

\bibitem{Heath2016}
R.~W. Heath, N.~Gonz{\'{a}}lez-Prelcic, S.~Rangan, W.~Roh, and A.~M. Sayeed,
  ``An overview of signal processing techniques for millimeter wave {MIMO}
  systems,'' {\em IEEE Journal of Selected Topics in Signal Processing},
  vol.~10, pp.~436--453, April 2016.

\bibitem{Molisch_HP_mag}
A.~F. Molisch, V.~V. Ratnam, S.~Han, Z.~Li, S.~L.~H. Nguyen, L.~Li, and
  K.~Haneda, ``Hybrid beamforming for massive {MIMO}: A survey,'' {\em IEEE
  Communications Magazine}, vol.~55, no.~9, pp.~134--141, 2017.

\bibitem{Xu_iCSI}
Z.~Xu, S.~Han, Z.~Pan, and C.-L. I, ``Alternating beamforming methods for
  hybrid analog and digital {MIMO} transmission,'' in {\em IEEE International
  Conference on Communications (ICC)}, pp.~1595--1600, June 2015.

\bibitem{Yu2016}
X.~Yu, J.~C. Shen, J.~Zhang, and K.~B. Letaief, ``Alternating minimization
  algorithms for hybrid precoding in millimeter wave {MIMO} systems,'' {\em
  IEEE Journal of Selected Topics in Signal Processing}, vol.~10, pp.~485--500,
  April 2016.

\bibitem{Sohrabi2016}
F.~Sohrabi and W.~Yu, ``Hybrid digital and analog beamforming design for
  large-scale antenna arrays,'' {\em IEEE Journal of Selected Topics in Signal
  Processing}, vol.~10, pp.~501--513, April 2016.

\bibitem{Rial2016}
R.~M{\'{e}}ndez-Rial, C.~Rusu, N.~Gonz{\'{a}}lez-Prelcic, A.~Alkhateeb, and
  R.~W. Heath, ``Hybrid {MIMO} architectures for millimeter wave
  communications: Phase shifters or switches?,'' {\em IEEE Access}, vol.~4,
  pp.~247--267, 2016.

\bibitem{Ratnam_ICC2018}
V.~V. Ratnam and A.~F. Molisch, ``Reference tone aided transmission for massive
  {MIMO:} analog beamforming without {CSI},'' in {\em IEEE International
  Conference on Communications (ICC)}, May 2018.

\bibitem{Liu2014}
A.~Liu and V.~Lau, ``Phase only {RF} precoding for massive {MIMO} systems with
  limited {RF} chains,'' {\em IEEE Transactions on Signal Processing}, vol.~62,
  pp.~4505--4515, Sept 2014.

\bibitem{Park2017}
S.~Park, J.~Park, A.~Yazdan, and R.~W. Heath, ``Exploiting spatial channel
  covariance for hybrid precoding in massive {MIMO} systems,'' {\em IEEE
  Transactions on Signal Processing}, vol.~65, pp.~3818--3832, July 2017.

\bibitem{Li2017}
Z.~Li, S.~Han, and A.~F. Molisch, ``Optimizing channel-statistics-based analog
  beamforming for millimeter-wave multi-user massive {MIMO} downlink,'' {\em
  IEEE Transactions on Wireless Communications}, vol.~16, pp.~4288--4303, July
  2017.

\bibitem{TR36_873}
3GPP, ``Technical specification group radio access network; study on {3D}
  channel model for {LTE} (release 12),'' Tech. Rep. {v12.2.0}, {3rd Generation
  Partnership Project (3GPP)}, {Valbonne, France}, June 2015.

\bibitem{Romanofsky2004}
R.~R. Romanofsky, ``Array phase shifters: Theory and technology,'' in {\em
  Antenna Engineering Handbook} (J.~Volakis, ed.), ch.~21, McGraw-Hill
  Education, 2007.

\bibitem{Yeh2017}
Y.~S. Yeh, B.~Walker, E.~Balboni, and B.~Floyd, ``A 28-{GHz} phased-array
  receiver front end with dual-vector distributed beamforming,'' {\em IEEE
  Journal of Solid-State Circuits}, vol.~52, pp.~1230--1244, May 2017.

\bibitem{Firouzjael2010}
E.~Adabi~Firouzjaei, {\em mm-Wave Phase Shifters and Switches}.
\newblock PhD thesis, EECS Department, University of California, Berkeley, Dec
  2010.

\bibitem{Schmid2014}
R.~L. Schmid, P.~Song, C.~T. Coen, A.~. Ulusoy, and J.~D. Cressler, ``On the
  analysis and design of low-loss single-pole double-throw {W}-band switches
  utilizing saturated {SiGe HBTs},'' {\em IEEE Transactions on Microwave Theory
  and Techniques}, vol.~62, pp.~2755--2767, Nov 2014.

\bibitem{Paulraj}
D.~Gore and A.~Paulraj, ``{MIMO} antenna subset selection with space-time
  coding,'' {\em IEEE Transactions on Signal Processing}, vol.~50,
  pp.~2580--2588, Oct 2002.

\bibitem{Molisch_mag}
A.~F. Molisch and M.~Z. Win, ``{MIMO} systems with antenna selection,'' {\em
  IEEE Microwave Magazine}, vol.~5, pp.~46--56, Mar 2004.

\bibitem{Magazine_nosratinia}
S.~Sanayei and A.~Nosratinia, ``Antenna selection in {MIMO} systems,'' {\em
  IEEE Communications Magazine}, vol.~42, pp.~68--73, Oct 2004.

\bibitem{Molisch_FFTshift}
A.~F. Molisch and X.~Zhang, ``{FFT}-based hybrid antenna selection schemes for
  spatially correlated {MIMO} channels,'' {\em IEEE Communications Letters},
  vol.~8, pp.~36--38, Jan 2004.

\bibitem{Zeng2014}
Y.~Zeng, R.~Zhang, and Z.~N. Chen, ``Electromagnetic lens-focusing antenna
  enabled massive {MIMO}: Performance improvement and cost reduction,'' {\em
  IEEE Journal on Selected Areas in Communications}, vol.~32, pp.~1194--1206,
  June 2014.

\bibitem{Gao2017}
Y.~Gao, M.~Khaliel, F.~Zheng, and T.~Kaiser, ``Rotman lens based hybrid
  analog--digital beamforming in massive {MIMO} systems: Array architectures,
  beam selection algorithms and experiments,'' {\em IEEE Transactions on
  Vehicular Technology}, vol.~66, pp.~9134--9148, Oct 2017.

\bibitem{Vishnu_ICC2017}
V.~V. Ratnam, O.~Y. Bursalioglu, H.~C. Papadopoulos, and A.~F. Molisch,
  ``Preprocessor design for hybrid preprocessing with selection in massive
  {MISO} systems,'' in {\em IEEE International Conference on Communications
  (ICC)}, pp.~1--7, May 2017.

\bibitem{Alkhateeb2016}
A.~Alkhateeb, Y.~H. Nam, J.~Zhang, and R.~W. Heath, ``Massive {MIMO} combining
  with switches,'' {\em IEEE Wireless Communications Letters}, vol.~5,
  pp.~232--235, June 2016.

\bibitem{Adhikary2014}
A.~Adhikary, E.~A. Safadi, M.~K. Samimi, R.~Wang, G.~Caire, T.~S. Rappaport,
  and A.~F. Molisch, ``Joint spatial division and multiplexing for mm-wave
  channels,'' {\em IEEE Journal on Selected Areas in Communications}, vol.~32,
  pp.~1239--1255, June 2014.

\bibitem{Wang2015}
X.~Wang, Z.~Zhang, K.~Long, and X.~D. Zhang, ``Joint group power allocation and
  prebeamforming for joint spatial-division multiplexing in multiuser massive
  {MIMO} systems,'' in {\em IEEE International Conference on Acoustics, Speech
  and Signal Processing (ICASSP)}, pp.~2934--2938, April 2015.

\bibitem{Kronecker_validity}
J.~Kermoal, L.~Schumacher, K.~Pedersen, P.~Mogensen, and F.~Frederiksen, ``A
  stochastic {MIMO} radio channel model with experimental validation,'' {\em
  IEEE Journal on Selected Areas in Communications}, vol.~20, pp.~1211--1226,
  Aug 2002.

\bibitem{Vishwanath_duality}
S.~Vishwanath, N.~Jindal, and A.~Goldsmith, ``Duality, achievable rates, and
  sum-rate capacity of {Gaussian MIMO} broadcast channels,'' {\em IEEE
  Transactions on Information Theory}, vol.~49, pp.~2658--2668, Oct 2003.

\bibitem{Costa_DPC}
M.~Costa, ``Writing on dirty paper (corresp.),'' {\em IEEE Transactions on
  Information Theory}, vol.~29, pp.~439--441, May 1983.

\bibitem{DPC_2006}
H.~Weingarten, Y.~Steinberg, and S.~S. Shamai, ``The capacity region of the
  {Gaussian} multiple-input multiple-output broadcast channel,'' {\em IEEE
  Transactions on Information Theory}, vol.~52, pp.~3936--3964, Sept 2006.

\bibitem{Caire_DPC}
G.~Caire and S.~Shamai, ``On the achievable throughput of a multiantenna
  {Gaussian} broadcast channel,'' {\em IEEE Transactions on Information
  Theory}, vol.~49, pp.~1691--1706, July 2003.

\bibitem{Lee2007}
J.~Lee and N.~Jindal, ``High {SNR} analysis for {MIMO} broadcast channels:
  Dirty paper coding versus linear precoding,'' {\em IEEE Transactions on
  Information Theory}, vol.~53, pp.~4787--4792, Dec 2007.

\bibitem{Sylvesters_identity}
A.~G. Akritas, E.~K. Akritas, and G.~I. Malaschonok, ``Various proofs of
  {Sylvester's} (determinant) identity,'' {\em Mathematics and Computers in
  Simulation}, vol.~42, no.~4, pp.~585 -- 593, 1996.
\newblock Sybolic Computation, New Trends and Developments.

\bibitem{Vishnu_jrnl1}
V.~V. Ratnam, A.~F. Molisch, and H.~C. Papadopoulos, ``{MIMO} systems with
  restricted pre/post-coding -- capacity analysis based on coupled doubly
  correlated {Wishart} matrices,'' {\em IEEE Transactions on Wireless
  Communications}, vol.~15, pp.~8537--8550, Dec 2016.

\bibitem{Love_SMux}
D.~Love and R.~Heath, ``Limited feedback unitary precoding for spatial
  multiplexing systems,'' {\em IEEE Transactions on Information Theory},
  vol.~51, pp.~2967--2976, Aug 2005.

\bibitem{Review_love}
D.~Love, R.~Heath, V.~Lau, D.~Gesbert, B.~Rao, and M.~Andrews, ``An overview of
  limited feedback in wireless communication systems,'' {\em IEEE Journal on
  Selected Areas in Communications}, vol.~26, pp.~1341--1365, October 2008.

\bibitem{Love_beamforming_correlated}
D.~Love and R.~Heath, ``Grassmannian beamforming on correlated {MIMO}
  channels,'' in {\em IEEE Global Telecommunications Conference (GLOBECOM)},
  vol.~1, pp.~106--110 Vol.1, Nov 2004.

\bibitem{Raghavan_rot_scale}
V.~Raghavan, R.~Heath, and A.~Sayeed, ``Systematic codebook designs for
  quantized beamforming in correlated {MIMO} channels,'' {\em IEEE Journal on
  Selected Areas in Communications}, vol.~25, pp.~1298--1310, September 2007.

\bibitem{Raghavan_RVQ}
V.~Raghavan and V.~Veeravalli, ``Ensemble properties of {RVQ}-based
  limited-feedback beamforming codebooks,'' {\em IEEE Transactions on
  Information Theory}, vol.~59, pp.~8224--8249, Dec 2013.

\bibitem{Vishnu_Globecom}
V.~V. Ratnam, A.~F. Molisch, N.~Rabeah, F.~Alawwad, and H.~Behairy, ``Diversity
  versus training overhead trade-off for low complexity switched
  transceivers,'' in {\em IEEE Global Communications Conference (GLOBECOM)},
  pp.~1--6, Dec 2016.

\bibitem{Bjornson2014}
E.~Bj\"{o}rnson, M.~Bengtsson, and B.~Ottersten, ``Optimal multiuser transmit
  beamforming: A difficult problem with a simple solution structure [lecture
  notes],'' {\em IEEE Signal Processing Magazine}, vol.~31, pp.~142--148, July
  2014.

\bibitem{Roger1990}
R.~A. Horn and R.~Mathias, ``An analog of the {Cauchy-Schwarz} inequality for
  {Hadamard} products and unitarily invariant norms,'' {\em SIAM Journal on
  Matrix Analysis and Applications}, vol.~11, no.~4, pp.~481--498, 1990.

\bibitem{Goodman1963}
N.~R. Goodman, ``The distribution of the determinant of a complex {Wishart}
  distributed matrix,'' {\em Ann. Math. Statist.}, vol.~34, pp.~178--180, 03
  1963.

\bibitem{Barg_grassmann}
A.~Barg and D.~Y. Nogin, ``Bounds on packings of spheres in the {Grassmann}
  manifold,'' {\em IEEE Transactions on Information Theory}, vol.~48,
  pp.~2450--2454, Sep 2002.

\bibitem{Vitale}
R.~A. Vitale, ``Some comparisons for {Gaussian} processes,'' {\em Proceedings
  of the American Mathematical Society}, vol.~128, no.~10, pp.~3043--3046,
  2000.

\bibitem{Medra2014}
A.~Medra and T.~N. Davidson, ``Flexible codebook design for limited feedback
  systems via sequential smooth optimization on the grassmannian manifold,''
  {\em IEEE Transactions on Signal Processing}, vol.~62, pp.~1305--1318, March
  2014.

\bibitem{Dhillon2008}
I.~S. Dhillon, R.~W. Heath, T.~Strohmer, and J.~A. Tropp, ``Constructing
  packings in {Grassmannian} manifolds via alternating projection,'' {\em
  Experiment. Math.}, vol.~17, no.~1, pp.~9--35, 2008.

\bibitem{Deza1980}
M.~Deza, P.~Erdos, and P.~Frankl, ``Intersection properties of systems of
  finite sets,'' {\em Proceedings of London Math. Society}, vol.~36, no.~3,
  pp.~369 -- 384, 1978.

\bibitem{Babai1992}
L.~Babai and P.~Frankl, {\em Linear Algebra Methods in Combinatorics with
  Applications to Geometry and Computer Science}.
\newblock Department of Computer Science, The University of Chicago, 1992.

\bibitem{Ramanujan1919}
S.~Ramanujan, ``A proof of {Bertrand's} postulate,'' {\em Journal of Indian
  Math. Society}, vol.~11, pp.~181 -- 182, 1919.
\newblock available at
  http://www.imsc.res.in/$\sim$rao/ramanujan/CamUnivCpapers/Cpaper24/
  page1.html.

\bibitem{BertrandWolfram}
J.~Sondow and E.~W. Weisstein, ``{Bertrand's} postulate.,'' {\em From
  MathWorld--A Wolfram Web Resource.}
\newblock available at http://mathworld.wolfram.com/BertrandsPostulate.html.

\bibitem{Gao2018}
Y.~Gao, A.~J.~H. Vinck, and T.~Kaiser, ``Massive {MIMO} antenna selection:
  Switching architectures, capacity bounds and optimal antenna selection
  algorithms,'' {\em IEEE Transactions on Signal Processing}, vol.~PP, no.~99,
  pp.~1--1, 2017.

\bibitem{Caire2017}
S.~Haghighatshoar and G.~Caire, ``Massive {MIMO} channel subspace estimation
  from low-dimensional projections,'' {\em IEEE Transactions on Signal
  Processing}, vol.~65, pp.~303--318, Jan 2017.

\bibitem{Park2016}
S.~Park and R.~W. Heath, ``Spatial channel covariance estimation for mmwave
  hybrid {MIMO} architecture,'' in {\em Asilomar Conference on Signals, Systems
  and Computers}, pp.~1424--1428, Nov 2016.

\bibitem{ispas2015analysis}
A.~Ispas, C.~Schneider, G.~Ascheid, and R.~Thom{\"a}, ``Analysis of the local
  quasi-stationarity of measured dual-polarized {MIMO} channels,'' {\em IEEE
  Transactions on Vehicular Technology}, vol.~64, no.~8, pp.~3481--3493, 2015.

\bibitem{wang2017stationarity}
R.~Wang, C.~Bas, S.~Sangodoyin, S.~Hur, J.~Park, J.~Zhang, and A.~Molisch,
  ``Stationarity region of mm-wave channel based on outdoor microcellular
  measurements at 28 {GHz},'' in {\em IEEE Military Communications Conference
  (MILCOM)}, pp.~782--787, 2017.

\bibitem{MedraRepository}
A.~Medra and T.~N. Davidson, ``Flexible codebook design for limited feedback
  systems [online],'' 2015.
\newblock
  http://www.ece.mcmaster.ca/~davidson/pubs/Flexible\_codebook\_design.html.

\bibitem{Dimic2005}
G.~Dimic and N.~D. Sidiropoulos, ``On downlink beamforming with greedy user
  selection: performance analysis and a simple new algorithm,'' {\em IEEE
  Transactions on Signal Processing}, vol.~53, pp.~3857--3868, Oct 2005.

\bibitem{Xia2006}
P.~Xia and G.~B. Giannakis, ``Design and analysis of transmit-beamforming based
  on limited-rate feedback,'' {\em IEEE Transactions on Signal Processing},
  vol.~54, pp.~1853--1863, May 2006.

\bibitem{Edelman1999}
A.~Edelman, T.~A. Arias, and S.~T. Smith, ``The geometry of algorithms with
  orthogonality constraints,'' {\em SIAM Journal on Matrix Analysis and
  Applications}, vol.~20, pp.~303--353, Apr. 1999.

\bibitem{Bhatia}
R.~Bhatia, {\em Matrix Analysis (Graduate Texts in Mathematics)}.
\newblock Springer New York, 1997.

\end{thebibliography}

\begin{IEEEbiography}[{\includegraphics[width=1in,height=1.25in,clip,keepaspectratio]{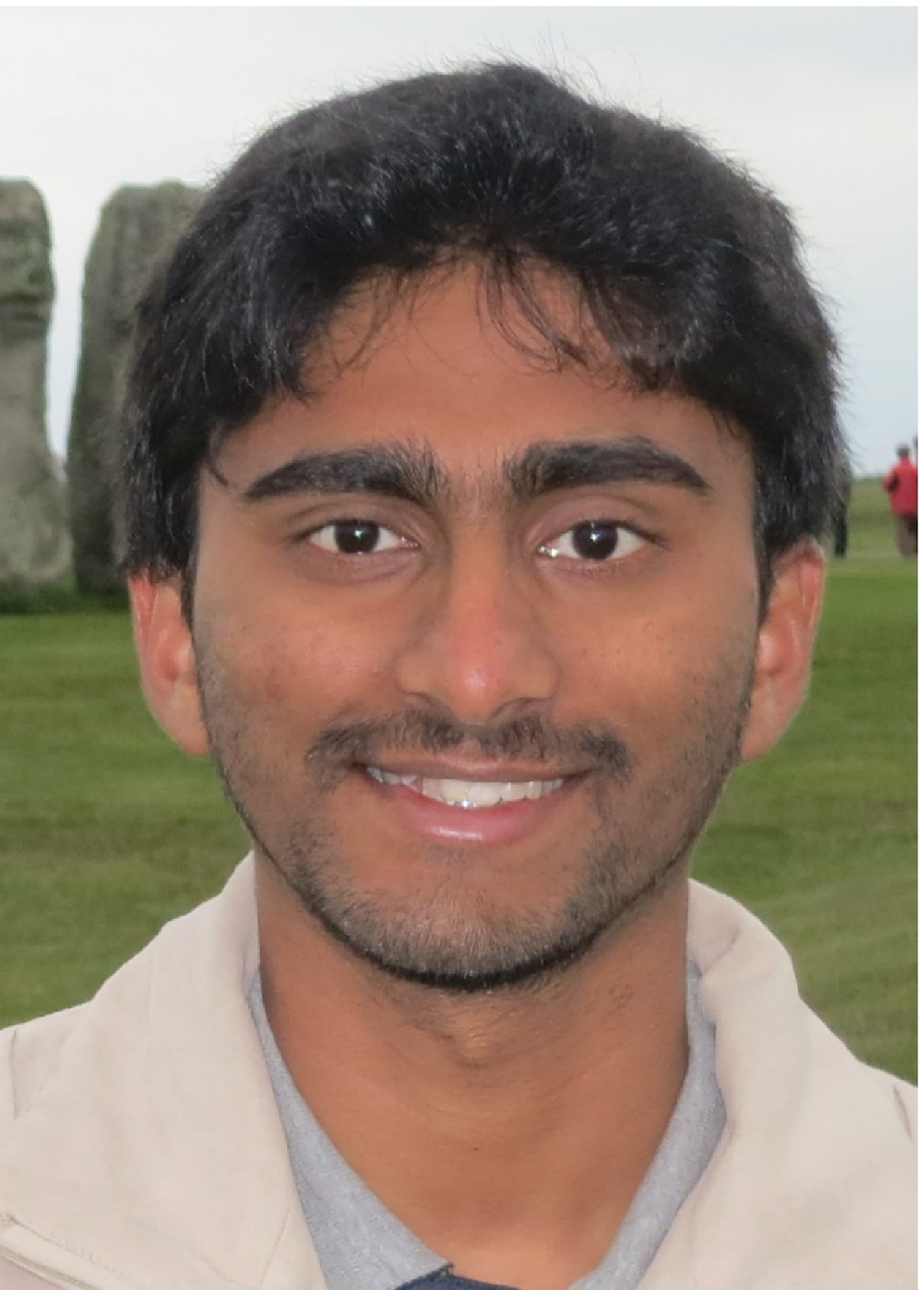}}]{Vishnu V. Ratnam}
(S'10) received the B.Tech. degree (Hons.) in electronics and electrical communication engineering from IIT Kharagpur, Kharagpur,
in 2012. He graduated as the Salutatorian for the class of 2012. He is currently pursuing the Ph.D. degree in electrical engineering with the University of Southern California. His research interests are in reduced complexity transceivers for large antenna systems (massive MIMO/mm-wave) and ultrawideband systems, channel estimation techniques, manifold signal processing and in resource allocation problems in multi-antenna networks. 

Mr. Ratnam is a recipient of the Best Student Paper Award at the IEEE International Conference on Ubiquitous Wireless Broadband (ICUWB) in 2016, and a member of the Phi-Kappa-Phi honor society.
\end{IEEEbiography}

\begin{IEEEbiography}[{\includegraphics[width=1in,height=1.25in,clip,keepaspectratio]{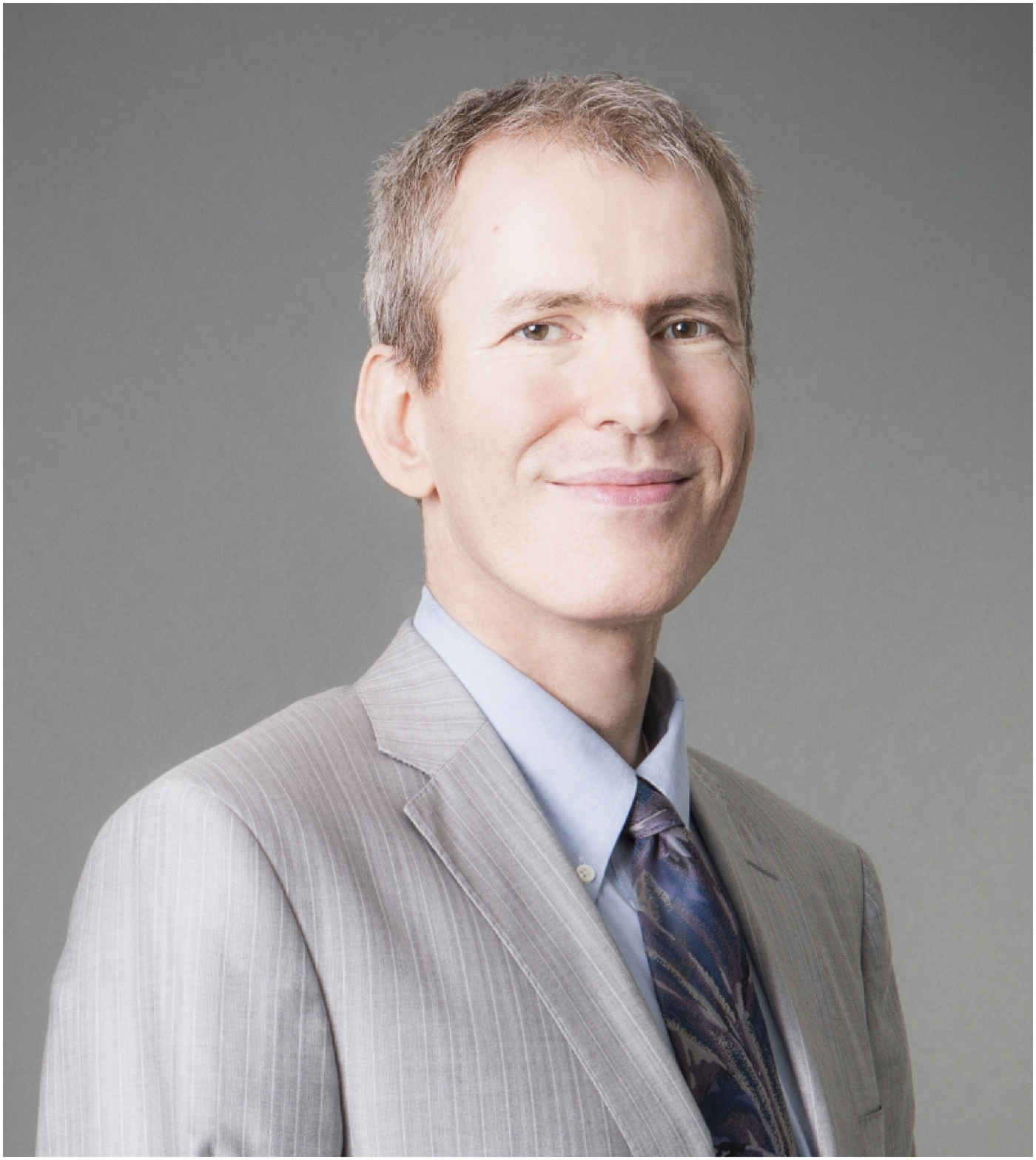}}]{Andreas F. Molisch}
(S'89--M'95--SM'00--F'05) received the Dipl. Ing., Ph.D., and habilitation degrees from the Technical University of Vienna, Vienna, Austria, in 1990, 1994, and 1999, respectively. He subsequently was with AT\&T (Bell) Laboratories Research (USA); Lund University, Lund, Sweden, and Mitsubishi Electric Research Labs (USA). He is now a Professor and the Solomon-Golomb -- Andrew-and-Erna Viterbi Chair at the University of Southern California, Los Angeles. 

His current research interests are the measurement and modeling of mobile radio channels, multi-antenna systems, wireless video distribution, ultra-wideband communications and localization, and novel modulation formats. He has authored, coauthored, or edited four books (among them the textbook Wireless Communications, Wiley-IEEE Press), 19 book chapters, more than 220  journal papers, more than 300 conference papers, as well as more than 80 patents and 70 standards contributions.

Dr. Molisch has been an Editor of a number of journals and special issues, General Chair, Technical Program Committee Chair, or Symposium Chair of multiple international conferences, as well as Chairman of various international standardization groups. He is a Fellow of the National Academy of Inventors, Fellow of the AAAS, Fellow of the IET, an IEEE Distinguished Lecturer, and a member of the Austrian Academy of Sciences. He has received numerous awards, among them the Donald Fink Prize of the IEEE, the IET Achievement Medal, and the Eric Sumner Award of the IEEE.
\end{IEEEbiography}

\begin{IEEEbiography}[{\includegraphics[width=1in,height=1.25in,clip,keepaspectratio]{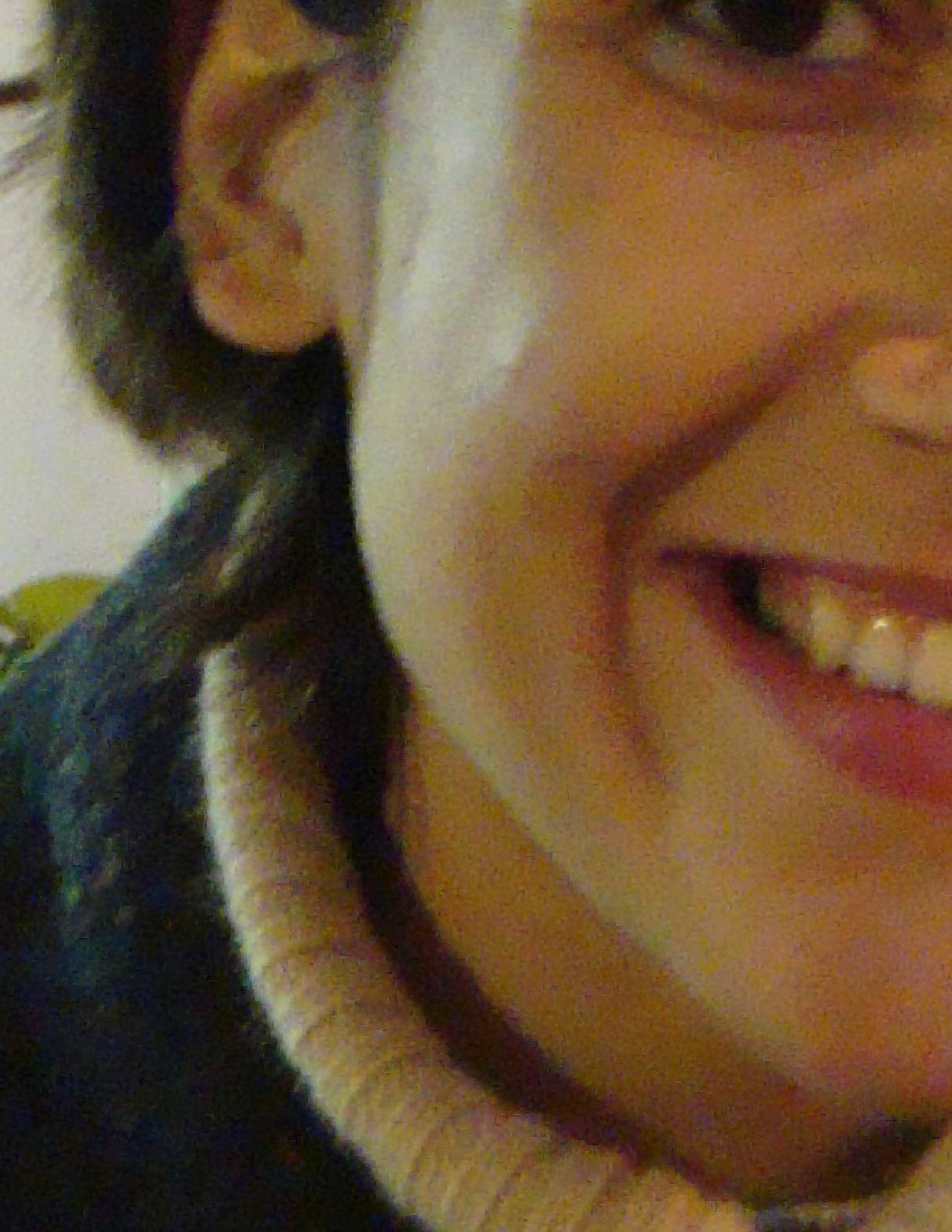}}]{Ozgun Bursalioglu Yilmaz}
(M'12) is a Research Engineer at Docomo Innovations Inc., working in the area of wireless communications on MIMO techniques and LTE enhancements since 2012. She graduated from the Ming Hsieh Department of Electrical Engineering, University of Southern California, in 2011.  Her Ph.D. thesis is on joint source channel coding for multicast and multiple description coding scenarios using rateless codes. Previously she received M.S. and B.S. degrees from University of California, Riverside (2006) and Middle East Technical University (METU), Ankara, Turkey (2004), respectively. She received the best student paper award at the International Conference on Acoustics, Speech and Signal Processing (IEEE ICASSP), in 2006.
\end{IEEEbiography}

\begin{IEEEbiography}[{\includegraphics[width=1in,height=1.25in,clip,keepaspectratio]{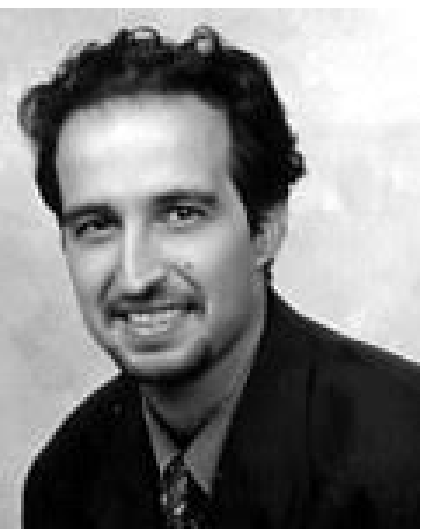}}]{Haralabos C. Papadopoulos} (S'92--M'98) received the S.B., S.M., and Ph.D. degrees from the Massachusetts Institute of Technology, Cambridge, MA, all in electrical engineering and computer science, in 1990, 1993, and 1998, respectively.

Since December 2005, he has been with DOCOMO Innovations, Palo Alto, CA,  working on physical-layer algorithms for wireless communication systems and architectures. From 1998 to 2005, he was on the faculty of the Department of Electrical and Computer Engineering, University of Maryland, College Park, MD, and held a joint appointment with the Institute of Systems Research. During his 1993--1995 summer visits to AT\&T Bell Labs, Murray Hill, NJ, he worked on shared time-division duplexing systems and digital audio broadcasting.  His research interests are in the areas of communications and signal processing, with emphasis on resource-efficient algorithms and architectures for wireless communication systems.

Dr. Papadopoulos is the recipient of an NSF CAREER Award (2000), the G. Corcoran Award (2000) given by the University of Maryland, College Park, and the 1994 F. C. Hennie Award (1994) given by the MIT EECS department. He is also a coauthor of the VTC Fall 2009 Best Student Paper Award.  He is a member of Eta Kappa Nu and Tau Beta Pi. He is also active in the industry and an inventor on several issued and pending patents. 
\end{IEEEbiography}

\end{document}